\documentclass[onecolumn,11pt,letter]{article}
\usepackage{epsfig}
\usepackage{epstopdf}
\usepackage{graphicx}
\usepackage{amsfonts}
\usepackage{amsmath,xspace}
\usepackage{amsthm}
\usepackage{proof}
\usepackage{latexsym}
\usepackage{amssymb}
\usepackage{color}
\usepackage{comment}
\usepackage[noend]{algorithmic}
\usepackage{algorithm}
\usepackage[margin=1in]{geometry}

\newtheorem{theorem}{Theorem}[section]
\newtheorem{lemma}[theorem]{Lemma}

\newtheorem{corollary}[theorem]{Corollary}

\newtheorem{definition}[theorem]{Definition}

\newtheorem{remark}[theorem]{Remark}
\DeclareMathAlphabet{\mathcal}{OMS}{cmsy}{m}{n}
\usepackage{tweaklist}

\pagestyle{plain}

\newcommand{\executeiffilenewer}[3]{%
\ifnum\pdfstrcmp{\pdffilemoddate{#1}}%
{\pdffilemoddate{#2}}>0%
{\immediate\write18{#3}}\fi%
}
\newcommand{%
\executeiffilenewer{.svg}{.pdf}%
{inkscape -z -D --file=.svg %
--export-pdf=.pdf --export-latex}%
\input{.pdf_tex}%
}[1]{%
\executeiffilenewer{#1.svg}{#1.pdf}%
{inkscape -z -D --file=#1.svg %
--export-pdf=#1.pdf --export-latex}%
\input{#1.pdf_tex}%
}

\newcommand{\torso}{\texttt{torso}}
\newcommand{\randset}{\textsc{RandomSet}}
\newcommand{\X}{{\mathcal{X}}}
\newcommand{\IS}{{\mathcal{I}_p}}

\newcommand{\dmway}{\textsc{Directed Multiway Cut}\xspace}
\newcommand{\dmcut}{\textsc{Directed Multicut}\xspace}

\DeclareMathOperator{\operatorClassNP}{NP}
\newcommand{\classNP}{\ensuremath{\operatorClassNP}}

\DeclareMathOperator{\operatorClassFPT}{FPT}
\newcommand{\classFPT}{\ensuremath{\operatorClassFPT}}
\DeclareMathOperator{\operatorClassW}{W}
\newcommand{\classW}[1]{\ensuremath{\operatorClassW[#1]}}


\newcommand*\samethanks[1][\value{footnote}]{\footnotemark[#1]}

\begin{document}

\title{Fixed-Parameter Tractability of Directed Multiway Cut Parameterized by the Size of the Cutset
\thanks{A preliminary version of this paper appeared in the proceedings of SODA 2012 \cite{DBLP:conf/soda/ChitnisHM12}.}}


\author{Rajesh Chitnis\thanks{Department of Computer Science , University of Maryland at
College Park, USA. Supported in part by a Google Faculty Research Award, an ONR Young Investigator Award, a NSF CAREER Award
and a DARPA BAA grant.  Email: \texttt{\{rchitnis, hajiagha\}@cs.umd.edu}} \and MohammadTaghi Hajiaghayi\samethanks[1] \and
D{\'a}niel Marx\thanks{Computer and Automation Research Institute, Hungarian Academy of Sciences (MTA SZTAKI), Budapest,
Hungary. Research  supported by the European Research Council (ERC)  grant 280152. Email: \texttt{dmarx@cs.bme.hu}} }


\maketitle

\begin{abstract}
  Given a directed graph $G$, a set of $k$ terminals and an integer
  $p$, the \textsc{Directed Vertex Multiway Cut} problem asks if there
  is a set $S$ of at most $p$ (nonterminal) vertices whose removal
  disconnects each terminal from all other terminals.
  \textsc{Directed Edge Multiway Cut} is the analogous problem where
  $S$ is a set of at most $p$ edges. These two problems indeed are
  known to be equivalent. A natural generalization of the multiway cut
  is the \textsc{Multicut} problem, in which we want to disconnect
  only a set of $k$ given pairs instead of all pairs. Marx
  (Theor. Comp. Sci. 2006) showed that in undirected graphs
  \textsc{Vertex/Edge Multiway} cut is fixed-parameter tractable (FPT)
  parameterized by $p$. Marx and Razgon (STOC 2011) showed that
  undirected \textsc{Multicut} is \classFPT\ and \dmcut is
  \classW1-hard parameterized by $p$. We complete the picture here by
  our main result which is that both \textsc{Directed Vertex Multiway
    Cut} and \textsc{Directed Edge Multiway Cut} can be solved in time
  $2^{2^{O(p)}}n^{O(1)}$, i.e., FPT parameterized by size $p$ of the
  cutset of the solution. This answers an open question raised by Marx
  (Theor. Comp. Sci. 2006) and Marx and Razgon (STOC 2011). It follows
  from our result that \textsc{Directed Edge/Vertex Multicut} is FPT
  for the case of $k=2$ terminal pairs, which answers another open
  problem raised in Marx and Razgon (STOC 2011).
\end{abstract}

%
%

\section{Introduction}
\label{section-introduction} Ford and Fulkerson~\cite{ford-fulkerson} gave the classical result on finding a minimum cut that
separates two terminals $s$ and $t$ in 1956.
A natural and well-studied generalization of the minimum $s-t$ cut problem is \textsc{Multiway Cut}, in which given a graph
$G$ and a set of terminals $\{s_1,s_2,\ldots,s_k\}$, the task is to find a minimum subset of vertices or edges whose
deletion disconnects all the terminals from one another. Dahlhaus et al.~\cite{johnson} showed the edge version in undirected
graphs is APX-complete for $k\geq 3$. For the edge version Karger et al.~\cite{karger} gave the current best known
approximation ratio of 1.3438 for general $k$. The vertex version of the problem is known to be at least as hard as the edge
version, and the current best approximation ratio is $2-\frac{2}{k}$ \cite{garg}.

The problem behaves very differently on directed graphs. Interestingly, for directed graphs, the edge and vertex versions turn
out to be equivalent.
Garg et al.~\cite{garg} showed that computing a minimum multiway cut in directed graphs is NP-hard and MAX SNP-hard already
for $k=2$. They also give an approximation algorithm with ratio $2$ log $k$, which was improved to ratio 2 later by Naor and
Zosin~\cite{naor}.

Rather than finding approximate solutions in polynomial time, one can look for exact solutions in time that is
superpolynomial, but still better than the running time obtained by brute force solutions. For example, Dahlhaus et
al.~\cite{johnson} showed that undirected \textsc{Multiway Cut} can be solved in time $n^{O(k)}$ on planar graphs, which can
be an efficient solution if the number of terminals is small. On the other hand, on general graphs the problem becomes NP-hard
already for $k=3$. In both the directed and the undirected version, brute force can be used to check in time $n^{O(p)}$ if a
solution of size at most $p$ exists: one can go through all sets of size at most $p$. Thus the problem can be solved in
polynomial time if the optimum is assumed to be small. In the undirected case, significantly better running time can be
obtained: the current fastest algorithms run in $O^*(2^p)$ time for both the vertex version~\cite{cygan-ipec-11} and the edge
version~\cite{DBLP:journals/mst/Xiao10} (the $O^{*}$ notation hides all factors which are polynomial in size of input). That
is, undirected \textsc{Multiway Cut} is fixed-parameter tractable parameterized by the size of the cutset we remove. Recall
that a problem is \emph{fixed-parameter tractable} (FPT) with a particular parameter $p$ if it can be solved in time
$f(p)n^{O(1)}$, where $f$ is an arbitrary function depending only on $p$; see \cite{downey-fellows,flum-grohe,niedermeier} for
more background. We give a brief summary of the race for faster FPT algorithms for \textsc{Undirected Multiway Cut} in
Figure~\ref{table:summary}.

\begin{figure}
\begin{center}
    \begin{tabular}{ | l | l | l |}
    \hline
    \text{Problem} & \text{Running Time} & \text{Paper} \\ \hline
    \text{Vertex Version} & Non-constructive FPT & Roberston and Seymour~\cite{DBLP:journals/jct/RobertsonS95b,DBLP:journals/jct/RobertsonS04}\\ \hline
    & $O^{*}(4^{p^3})$ & Marx~\cite{marx-2006} \\ \hline
    & $O^{*}(4^{p})$ & Chen et al.~\cite{chen-improved-multiway-cut}\\\hline
    & $O^{*}(4^{p})$ & Guillemot~\cite{DBLP:journals/disopt/Guillemot11a}\\\hline
    & $O^{*}(2^{p})$ & Cygan et al.~\cite{cygan-ipec-11}\\ \hline
    Edge Version & $O^{*}(2^{p})$ & Xiao~\cite{DBLP:journals/mst/Xiao10}\\ \hline
    \end{tabular}
\end{center}
\caption{Summary of FPT Results for \textsc{Undirected Multiway Cut}. Note that the $O^*$ notation hides all factors which are polynomial in the size of the input.\label{table:summary}}
\end{figure}

Our main result is that the directed version of \textsc{Multiway Cut} is also fixed-parameter tractable:

\begin{theorem}{\bf (main result)}
  \label{thm-main} \textsc{Directed Vertex Multiway Cut} and
  \textsc{Directed Edge Multiway Cut} can be solved in
  $O^{*}(2^{2^{O(p)}})$ time.
\end{theorem}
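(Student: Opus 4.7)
The plan is to reduce the vertex version to the edge version via a standard replacement gadget (splitting each nonterminal into an in-copy and out-copy joined by an edge of capacity~$1$) and then focus on \textsc{Directed Edge Multiway Cut}. The approach combines three ingredients adapted to directed graphs: random \emph{shadow removal} in the style of Marx and Razgon, \emph{important separators} generalized to directed cuts, and a branching procedure guided by iterative compression.

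Step 1 (shadow removal). For a candidate solution $S$ and a terminal $s_i$, define the \emph{forward shadow} $f_S(s_i)$ as the set of vertices from which $s_i$ is unreachable in $G\setminus S$, and the \emph{backward shadow} $b_S(s_i)$ symmetrically; call $S$ \emph{shadowless} if no nonterminal lies in any of these shadows. I would design a randomized preprocessing that marks each nonterminal vertex independently with a carefully chosen probability and applies the \torso\ operation to the marked set, producing an equivalent instance in which, with probability at least $2^{-2^{O(p)}}$, a fixed optimal solution $S^{*}$ remains optimal and becomes shadowless. Repeating the random trial $2^{2^{O(p)}}$ times (or derandomizing with an $(n,2^{O(p)})$-splitter) reduces the problem to the shadowless case with only a polynomial blow-up in the final exponent.

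Step 2 (structure on shadowless instances). The next step is to prove that on a shadowless instance, every minimum multiway cut decomposes into edges each lying on an \emph{important} directed $(s_i,V\setminus\{s_i\})$-separator for some terminal $s_i$. Since the number of important separators of size at most $p$ between a fixed source and a fixed sink is $4^{O(p)}$, one can branch by guessing the important separator cutting off each terminal in turn. Iterative compression gives the outer loop: add terminals one at a time, maintain a solution of size $p+1$, and reduce it to a solution of size $p$; this controls the branching depth and guarantees that the reduction interacts cleanly with the \torso\ operation used in Step~1. Summing, each random trial runs in $2^{O(p)}n^{O(1)}$ time, and the total becomes $2^{2^{O(p)}}n^{O(1)}$.

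The main obstacle is Step~1: showing that a single randomized round of marking plus \torso{} preserves both the value of the optimal multiway cut and yields a shadowless optimum, with a quantifiable success probability. The doubling of the exponent over the undirected multicut case comes from the fact that in directed graphs one must simultaneously dominate all forward shadows of terminals and all backward shadows — a structure of size up to $2^{O(p)}$ — and the shadowless structural claim of Step~2 is delicate because directed important separators can nest and overlap in ways their undirected counterparts do not. Once this randomized preprocessing is established, the rest of the algorithm proceeds by relatively standard branching on important separators inside the shadowless torso.
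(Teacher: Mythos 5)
There is a genuine gap, and it sits exactly at the step you yourself flag as the main obstacle. Your Step~1 marks each nonterminal \emph{independently} with some probability $q$ and hopes that, with probability $2^{-2^{O(p)}}$, the marked set contains the entire shadow of a fixed optimum $S^*$ while missing $S^*$. This cannot work: the shadow of $S^*$ may contain $\Omega(n)$ vertices, so the probability of covering it is $q^{\Omega(n)}$, while missing $S^*$ costs $(1-q)^{|S^*|}$; no choice of $q$ makes the product bounded below by a function of $p$ alone. The paper's random sampling is structurally different: it enumerates the family $\IS$ of all important $v-T$ separators of size at most $p$ (at most $4^p\cdot n$ of them), selects each \emph{exact reverse shadow} of a member of $\IS$ independently with probability $1/2$, and takes $Z$ to be the union of the selected shadows. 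For a fixed solution the relevant events concern only the at most $2^p$ separators contained in $S^*$ and the at most $p\cdot 4^p$ exact shadows meeting $S^*$, giving the $2^{-2^{O(p)}}$ bound. Making this applicable requires two further ingredients you do not have: a ``pushing'' argument showing that for a suitably chosen (shadow-maximal) solution $S^*$, every vertex of its reverse shadow is separated from $T$ by an important separator \emph{contained in} $S^*$, and the fact that forward and reverse shadows cannot be handled in one round — the paper runs the sampling twice, first for the reverse shadow, then on the edge-reversed graph with the first sample $Z_1$ made undeletable, precisely because the second distribution must depend on the first outcome.

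Your Step~2 also diverges from what is actually needed, and in a way that would fail. Once a shadowless solution exists, no branching on important separators per terminal and no iterative compression is required: the paper's Lemma on shadowless solutions shows that a shadowless directed solution is already a multiway cut of the \emph{underlying undirected graph}, so after the \torso\ operation one simply calls a known $O^*(4^p)$ (or $O^*(2^p)$) algorithm for undirected \textsc{Multiway Cut}. Conversely, your claimed structure — that every minimum cut in a shadowless instance decomposes into important $(s_i,V\setminus\{s_i\})$-separators, controlled by adding terminals one at a time under iterative compression — is unsubstantiated; direct important-separator branching is exactly what does not obviously generalize to directed multiway cut (which is why the problem was open), and iterative compression plays no role here (it is needed in Marx--Razgon's \textsc{Multicut} algorithm to create the set to be separated, a step that is unnecessary for multiway cut since the terminals are given). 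So the overall skeleton (shadows, \torso, derandomization by splitters, doubly exponential dependence on $p$) matches the paper, but the two load-bearing pieces — sampling over important-separator shadows with the pushing/shadow-maximality argument, and the reduction of the shadowless case to undirected \textsc{Multiway Cut} — are missing or replaced by steps that do not go through.
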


Note that the hardness result of Garg et al.~\cite{garg} shows that in the directed case the problem is nontrivial (in fact,
NP-hard) even for $k=2$ terminals; our result holds without any bound on the number of terminals.  The question was first
asked explicitly in \cite{marx-2006} and was also stated as an open problem in \cite{marx-stoc-2011}. Our result shows in
particular that directed multiway cut is solvable in polynomial time if the size of the optimum solution is $O(\log \log n)$,
where $n$ is the number of vertices in the digraph.

A more general problem is \textsc{Multicut}: the input contains a set $\{(s_1,t_1),\dots,(s_k,t_k)\}$ of $k$ pairs, and the
task is to break every path from $s_i$ to its corresponding $t_i$ by the removal of at most $p$ vertices. Very recently, it
was shown that undirected \textsc{Multicut} is FPT parameterized by $p$
\cite{DBLP:journals/corr/abs-1010-5197,marx-stoc-2011}, but the directed version is unlikely to be FPT as it is W[1]-hard
\cite{marx-stoc-2011} with this parameterization.  However, in the special case of $k=2$ terminal pairs, there is a simple
reduction from \dmcut to \dmway, thus our result shows that the latter problem is FPT parameterized by $p$ for $k=2$.  Let us
briefly sketch the reduction. (Note that the reduction we sketch works only for the variant of \dmcut which allows the
deletion of terminals. Marx and Razgon~\cite{marx-stoc-2011} asked about the FPT status of this variant which is in fact
equivalent to the one which does not allow deletion of the terminals.) Let $(G,T,p)$ be a given instance of \dmcut and let
$T=\{(s_1,t_1),(s_2,t_2)\}$. We construct an equivalent instance  of \dmway as follows: Graph $G'$ is obtained by adding two
new vertices $s,t$ to the graph and adding the four edges $s\rightarrow s_1$, $t_1\rightarrow t$, $t\rightarrow s_2$, and
$t_2\rightarrow s$. It is easy to see that the \dmway instance $(G',\{s,t\},p)$ is equivalent to the original \dmcut
instance.\footnote{$G$ has a $s_i\rightarrow t_i$ path for some $i$ if and only if $G'$ has a $s\rightarrow t$ or
$t\rightarrow s$ path. This is because $G$ has a $s_1\rightarrow t_1$ path if and only if $G'$ has a $s\rightarrow t$ path and
$G$ has a $s_2\rightarrow t_2$ path if and only if $G'$ has a $t\rightarrow s$ path. This property of paths also holds after
removing some vertices/edges and thus the two instances are equivalent.
}


\begin{corollary}\label{cor:dirmulticut}
\dmcut with $k=2$ can be solved in time $O^{*}(2^{2^{O(p)}})$.
\end{corollary}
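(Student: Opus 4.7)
The plan is to reduce \dmcut with $k=2$ pairs to \dmway and then invoke Theorem~\ref{thm-main}. Given an instance $(G, \{(s_1,t_1),(s_2,t_2)\}, p)$ of \dmcut, I would form $G'$ by adding two fresh vertices $s, t$ together with the four directed edges $s\to s_1$, $t_1\to t$, $t\to s_2$, $t_2\to s$, and then consider the \dmway instance $(G', \{s,t\}, p)$.

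The main step is to verify the equivalence sketched in the footnote: the original instance is a yes-instance if and only if the constructed one is. For the forward direction, note that any $s\to t$ path in $G'$ must leave $s$ through $s\to s_1$ and enter $t$ through $t_1 \to t$, so it embeds an $s_1\to t_1$ path of $G$; symmetrically any $t\to s$ path in $G'$ embeds an $s_2\to t_2$ path of $G$. Hence any multicut for the two pairs in $G$ disconnects $s$ from $t$ and $t$ from $s$ in $G'$. The reverse direction is analogous: a vertex multiway cut in $G'$ cannot delete the terminals $s, t$, so it lies in $V(G)$; were there a surviving $s_1\to t_1$ (resp.\ $s_2\to t_2$) path in $G$, prepending and appending the appropriate new edge would yield a surviving $s\to t$ (resp.\ $t\to s$) path in $G'$.

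Invoking Theorem~\ref{thm-main} on $(G', \{s,t\}, p)$ then gives the claimed $O^{*}(2^{2^{O(p)}})$ bound, since the construction of $G'$ is polynomial and only adds two vertices and four edges. The one subtlety worth pointing out, already flagged in the footnote preceding the corollary, is the distinction between the two standard variants of \dmcut: the reduction produces and relies on the variant in which the terminal vertices $s_i, t_i$ of $G$ may themselves be deleted, but since the two variants are known to be equivalent the corollary follows for both. I do not expect any genuine obstacle here beyond this bookkeeping point, as all the conceptual work is done by Theorem~\ref{thm-main}.
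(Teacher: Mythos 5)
Your proposal is correct and follows essentially the same route as the paper: the identical gadget (new terminals $s,t$ with edges $s\to s_1$, $t_1\to t$, $t\to s_2$, $t_2\to s$), the same equivalence argument sketched in the paper's footnote, and an application of Theorem~\ref{thm-main}, including the same caveat about the terminal-deletion variant of \dmcut. No issues.
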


The complexity of the case $k=3$ remains an interesting open problem.
\medskip

{\bf Our techniques.} Our algorithm for \dmway is inspired by the algorithm of Marx and Razgon~\cite{marx-stoc-2011} for
undirected \textsc{Multicut}. In particular we use the technique of ``random sampling of important separators'' introduced in
\cite{marx-stoc-2011} and try to ensure that there is a solution whose ``isolated part'' is empty. However, \dmway behaves in
a significantly different way than \textsc{Multicut}: at the same time, we are dealing with a much easier and a much harder
situation. The first step in \cite{marx-stoc-2011} is to reformulate the problem in a way that the solution has to be a
multiway cut of a certain set $W$ of vertices; the technique of {\em iterative compression} allows us to reduce the original
problem to this new version. As \textsc{Multiway Cut} is already defined in terms of finding a multiway cut, this step is not
necessary in our case. Furthermore, in \cite{marx-stoc-2011}, after ensuring that there is a solution whose ``isolated part''
is empty, the problem is reduced to \textsc{Almost-2SAT} (Given a 2SAT formula and an integer $k$, is there an assignment
satisfying all but $k$ of the clauses ?) This reduction works only if every component has at most two ``legs''; a delicate
branching algorithm is given to ensure this property. In the case of \dmway, the situation is much simpler: if there is a
solution whose ``isolated part'' is empty, then the problem can be reduced to the undirected version and then we can use the
current fastest undirected algorithm ~\cite{cygan-ipec-11}, which runs in $O^{*}(2^p)$ time.

On the other hand, the fact that we are dealing with a directed graph makes the problem significantly harder (recall that
\dmcut is W[1]-hard parameterized by $p$, thus it is expected that not every undirected argument generalizes to the directed
case). After defining a proper notion of directed important separators, the non-trivial interaction amongst two kinds of
``shadows'' forces us to do the random sampling of important separators in two independent steps and the analysis becomes more delicate.

\medskip

{\bf Independent and followup work.} The fixed-parameter tractability of \textsc{Multicut} in undirected graphs parameterized
only by the size of the cutset was shown independently by Marx and Razgon~\cite{marx-stoc-2011} and Bousquet et
al.~\cite{DBLP:journals/corr/abs-1010-5197}. Marx and Razgon~\cite{marx-stoc-2011} also showed that \dmcut is W[1]-hard
parameterized by the size of the cutset. The technique of random sampling of important separators introduced
in~\cite{marx-stoc-2011} is a crucial element of our algorithm. A very different application of this technique was given by
Lokshtanov and Marx~\cite{lokshtanov-marx-clustering} in the context of clustering problems.

The preliminary version of this paper adapted the framework of random
sampling of important separators to directed graphs and showed the
fixed-parameter tractability of \dmway parameterized by the size of
the cutset. This framework was later used by Kratsch et
al.~\cite{multicut-dags} to show the fixed-parameter tractability of
\dmcut on directed acyclic graphs and by Chitnis et
al.~\cite{DBLP:conf/icalp/ChitnisCHM12} to show the fixed-parameter
tractability of \textsc{Subset Directed Feedback Vertex Set}. The
latter paper improved the randomized sampling process to make the
algorithms more efficient; in particular, this improvement results in
a $O^*(2^{O(p^2)})$ algorithm for \dmway.  The question of
existence of a polynomial kernel for \dmway was answered negatively by
Cygan et al.~\cite{cygan-no-poly-kernel} who showed that \dmway (even
for two terminals) does not have a polynomial kernel unless \classNP$\
\subseteq\ $coNP/poly and the polynomial hierarchy collapses to the
third level. An interesting open question is the complexity of \dmcut
for $k=3$ or with combined parameters $k$ and $p$.

\section{Preliminaries}
\label{section-preliminaries}

A multiway cut is a set of edges/vertices that separate the terminal vertices from each other:
\begin{definition}{\bf (multiway cut)}
\label{defn-directed-multiway-cut} Let $G$ be a directed graph and let $T=\{t_1,t_2,\ldots,t_k\}\subseteq V(G)$ be a set of
terminals.
\begin{enumerate}
\item $S\subseteq V(G)$ is a \emph{vertex multiway cut} of $(G,T)$ if $G\setminus S$ does not have a path from $t_i$ to
    $t_j$ for any $i\neq j$.
\item $S\subseteq E(G)$ is a \emph{edge multiway cut} of $(G,T)$ if $G\setminus S$ does not have a path from $t_i$ to
    $t_j$ for any $i\neq j$.
\end{enumerate}
\end{definition}

\hspace{-6mm}In the edge case, it is straightforward to define the problem we are trying to solve:
\begin{center}
\noindent\framebox{\begin{minipage}{6.00in}
\textbf{\textsc{Directed Edge Multiway Cut}}\\
\emph{Input }: A directed graph $G$, an integer $p$ and a set of terminals $T$.\\
\emph{Output} : A multiway cut $S\subseteq E(G)$ of $(G,T)$ of size at most $p$ or ``NO'' if such a multiway cut does not
exist.
\end{minipage}}
\end{center}

\hspace{-6mm}In the vertex case, there is a slight technical issue in the definition of the problem: are the terminal vertices
allowed to be deleted?  We focus here on the version of the problem where the vertex multiway cut we are looking for has to be
disjoint from the set of terminals. More generally, we define the problem in such a way that the graph has some
\emph{distinguished} vertices which cannot be included as part of any separator (and we assume that every terminal is a
distinguished vertex). This can be modeled by considering weights on the vertices of the graph: weight of $\infty$ on each
distinguished vertex and 1 on every non-distinguished vertex. We only look for solutions of finite weight. From here on, for a
graph $G$ we will denote by $V^{\infty}(G)$ the set of distinguished vertices of $G$ with the meaning that these
distinguished vertices cannot be part of any separator, i.e., all separators we consider are of finite weight. In fact, for
any separator we can talk interchangeably about size or weight as these notions are the same since each vertex of separator
has weight 1.

The main focus of the paper is the following vertex version, where we require $T\subseteq V^{\infty}(G)$, i.e., terminals
cannot be deleted:

\begin{center}\noindent\framebox{\begin{minipage}{6.00in}
\textbf{\textsc{Directed Vertex Multiway Cut}}\\
\emph{Input }: A directed graph $G$, an integer $p$, a set of terminals $T$ and a set $V^{\infty}\supseteq T$ of distinguished vertices.\\
\emph{Output} : A multiway cut $S\subseteq V(G)\setminus V^{\infty}(G)$ of $(G,T)$ of size at most $p$ or ``NO'' if such a
multiway cut does not exist.
\end{minipage}}
\end{center}
We note that if we want to allow the deletion of the terminal vertices, then it is not difficult to reduce the problem to the
version defined above. For each terminal $t$ we introduce a new vertex $t'$ and we add the directed edges $(t,t')$ and
$(t',t)$. Let the new graph be $G'$ and let $T'=\{t'\ |\ t\in T\}$. Then there is a clear bijection between vertex multiway
cuts which can include terminals in the instance $(G,T,p)$ and vertex multiway cuts which cannot include terminals in the
instance $(G',T',p)$.

The two versions \textsc{Directed Vertex Multiway Cut} and
\textsc{Directed Edge Multiway Cut} defined above are known to be
equivalent. For sake of completeness, we prove the equivalence in
Section~\ref{sec:equiv-textscd-vert}. In the remaining part of the
paper, we concentrate on finding an FPT algorithm for \textsc{Directed
  Vertex Multiway Cut}, which we henceforth call \dmway for brevity.

\subsection{Equivalence of Vertex and Edge versions of \dmway}
\label{sec:equiv-textscd-vert}

We first show how to solve the vertex version using the edge
version. Let $(G,T,p)$ be a given instance of \textsc{Directed Vertex
  Multiway Cut} and let $V^{\infty}(G)$ be the set of distinguished
vertices. We construct an equivalent instance $(G',T',p)$ of
\textsc{Directed Edge Multiway Cut} as follows. Let the set $V'$
contain two vertices $v^{\text{in}}$, $v^{\text{out}}$ for every $v\in
V(G)\setminus V^{\infty}(G)$ and a single vertex
$u^{\text{in}}=u^{\text{out}}$ for every $u\in V^{\infty}(G)$. The
idea is that all incoming/outgoing edges of $v$ in $G$ will now be
incoming/outgoing edges of $v^{\text{in}}$ and $v^{\text{out}}$,
respectively. For every vertex $v\in V(G)\setminus V^{\infty}(G)$, add
an edge $(v^{\text{in}},v^{\text{out}})$ to $G'$. Let us call these as
Type I edges. For every edge $(x,y)\in E(G)$, add $(p+1)$ parallel
$(x^{\text{out}},y^{\text{in}})$ edges. Let us call these as Type II
edges. Define $T'=\{v^{\text{in}}\ |\ v\in T\}$. Note that the number
of terminals is preserved. We have the following lemma:
\begin{lemma}
  $(G,T,p)$ is a yes-instance
  of \label{lemma-solving-vertex-using-edge} \textsc{Directed Vertex
    Multiway Cut} if and only if $(G',T',p)$ is a yes-instance of
  \textsc{Directed Edge Multiway Cut}.
\end{lemma}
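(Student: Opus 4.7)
The plan is to prove the two directions of the equivalence by translating solutions back and forth between $G$ and $G'$, relying on the fact that the $(p{+}1)$-fold parallel Type II edges cannot appear in any edge cut of size $\le p$.

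First I would handle the forward direction. Assume $S \subseteq V(G) \setminus V^\infty(G)$ is a vertex multiway cut of $(G,T)$ with $|S| \le p$. Define $S' = \{(v^{\text{in}}, v^{\text{out}}) : v \in S\} \subseteq E(G')$, which consists entirely of Type I edges and has $|S'| = |S| \le p$. To see that $S'$ is an edge multiway cut for $(G',T')$, suppose for contradiction that there is a path $P'$ in $G' \setminus S'$ from $t_i^{\text{in}}$ to $t_j^{\text{in}}$ for some $i \ne j$. Project $P'$ back to $G$ by identifying $v^{\text{in}}$ and $v^{\text{out}}$ with $v$ and replacing each Type II edge $(x^{\text{out}}, y^{\text{in}})$ by the original edge $(x,y)$; this yields a walk from $t_i$ to $t_j$ in $G$. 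No vertex $v \in S$ appears on this walk because its two copies in $G'$ are internally linked only by the now-deleted Type I edge $(v^{\text{in}}, v^{\text{out}})$, so the walk in fact lives in $G \setminus S$, contradicting that $S$ is a vertex multiway cut.

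For the reverse direction, assume $S' \subseteq E(G')$ is an edge multiway cut of $(G',T')$ with $|S'| \le p$. The key observation is that $S'$ contains no Type II edge: for each original edge of $G$ we placed $p+1$ parallel copies, and removing at most $p$ of them still leaves a copy intact, so removing any Type II edge is useless and we may assume $S'$ consists only of Type I edges. Define $S = \{v \in V(G) \setminus V^\infty(G) : (v^{\text{in}}, v^{\text{out}}) \in S'\}$, so $|S| = |S'| \le p$ and $S \cap V^\infty(G) = \emptyset$ since Type I edges exist only for non-distinguished vertices. Suppose $G \setminus S$ contained a $t_i \leadsto t_j$ path $P$; lifting $P$ to $G'$ by replacing each internal non-distinguished vertex $v$ by the subpath $v^{\text{in}} \to v^{\text{out}}$ (using the intact Type I edge, which by construction was not removed since $v \notin S$) and each edge of $P$ by any one of its $p+1$ Type II copies yields a $t_i^{\text{in}} \leadsto t_j^{\text{in}}$ path in $G' \setminus S'$, a contradiction.

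There is no real obstacle; the only subtlety to be careful about is the handling of distinguished vertices, where the single vertex $u^{\text{in}} = u^{\text{out}}$ in $G'$ has no Type I edge incident to it and therefore cannot contribute to $S'$, matching exactly the constraint that $S$ must be disjoint from $V^\infty(G)$ on the vertex side. The $(p+1)$-parallelism on Type II edges is the standard trick that forces the edge solution to use exactly the ``vertex-simulating'' Type I edges, which is what makes the bijection between solutions clean.
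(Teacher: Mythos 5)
Your proof is correct and follows essentially the same route as the paper: map vertex cuts to the corresponding Type I edges in one direction, and in the other direction use the $(p+1)$-fold parallel Type II edges to argue the edge cut may be taken to consist of Type I edges only, then read off the vertex set. You simply spell out the path projection/lifting details that the paper leaves implicit.
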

\begin{proof}
Suppose $G$ has a vertex multiway cut, say $S$, of size at most $p$. Then the set $S'=\{(v^{\text{in}},v^{\text{out}})\ |\ v\in S\}$ is clearly a edge
multiway cut for $G'$ and $|S'|=|S|\leq p$.

Suppose $G'$ has an edge multiway cut say $S'$ of size at most $p$. Note that it does not help to pick in $S$ any edges of
Type II as each edge has $(p+1)$ parallel copies and our budget is $p$. So let $S=\{v\ |\ (v^{\text{in}},v^{\text{out}})\in S'\}$. Then $S$ is a
vertex multiway cut for $G$ and $|S|\leq |S'|\leq p$.
\end{proof}

We now show how to solve the edge version using the vertex version. Let $(G,T,p)$ be a given instance of \textsc{Directed Edge
Multiway Cut}. We construct an equivalent instance $(G',T',p)$ of \textsc{Directed Vertex Multiway Cut} as follows. For each
vertex $u\in V(G)\setminus T$, create a set $C_u$ which contains $u$ along with $p$ other copies of $u$. For $t\in T$ we let
$C_t=\{t\}$. For each edge $(u,v)\in E(G)$ create a vertex $\beta_{uv}$. Add edges $(x,\beta_{uv})$ for all $x\in C_u$ and
$(\beta_{uv},y)$ for all $y\in C_v$. Define $T'=\bigcup_{t\in T} C_t = T$. Let $V^{\infty}(G') = T'$

\begin{lemma}
  $(G,T,p)$ is a yes-instance
  of \label{lemma-solving-edge-using-vertex} \textsc{Directed Edge
    Multiway Cut} if and only if $(G',T',p)$ is a yes-instance of
  \textsc{Directed Vertex Multiway Cut}.
\end{lemma}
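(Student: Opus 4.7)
The plan is to give two direct constructions: one turning an edge multiway cut of $G$ into a vertex multiway cut of $G'$, and the reverse. The construction in one direction is obvious; the non-trivial point is the reverse, where we must exploit the fact that $|C_u|=p+1$ for every non-terminal $u$.

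For the forward direction, suppose $S\subseteq E(G)$ is an edge multiway cut of $(G,T)$ with $|S|\le p$. I would set
\[
S'=\{\beta_{uv}\ :\ (u,v)\in S\}\subseteq V(G')\setminus V^{\infty}(G'),
\]
so $|S'|=|S|\le p$ and $S'$ is disjoint from $T'=V^{\infty}(G')$. Any $t_i$--$t_j$ walk in $G'\setminus S'$ must alternate between members of the sets $C_{u_0},C_{u_1},\ldots,C_{u_\ell}$ and the vertices $\beta_{u_{k-1}u_k}$; reading off the underlying sequence $u_0=t_i,u_1,\ldots,u_\ell=t_j$ gives a $t_i$--$t_j$ walk in $G$ all of whose edges satisfy $\beta_{u_{k-1}u_k}\notin S'$, hence avoid $S$, contradicting that $S$ is an edge multiway cut.

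For the reverse direction, suppose $S'\subseteq V(G')\setminus V^{\infty}(G')$ is a vertex multiway cut of $(G',T')$ with $|S'|\le p$. I would define
\[
S=\{(u,v)\in E(G)\ :\ \beta_{uv}\in S'\},
\]
so $|S|\le |S'|\le p$. To verify that $S$ is an edge multiway cut, I argue by contrapositive: suppose there is a walk $t_i=u_0,u_1,\ldots,u_\ell=t_j$ in $G\setminus S$. I lift it step by step to a walk in $G'\setminus S'$ as follows. Start at $t_i\in C_{u_0}$. Inductively, having reached some $x_{k-1}\in C_{u_{k-1}}\setminus S'$, use the edge $(x_{k-1},\beta_{u_{k-1}u_k})$; this vertex $\beta_{u_{k-1}u_k}$ is not in $S'$ because $(u_{k-1},u_k)\notin S$. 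From $\beta_{u_{k-1}u_k}$ I must now enter some $x_k\in C_{u_k}\setminus S'$: if $u_k\in T$ then $C_{u_k}=\{u_k\}\subseteq V^{\infty}(G')$ is disjoint from $S'$, and if $u_k\notin T$ then $|C_{u_k}|=p+1>p\ge |S'|$, so $C_{u_k}\setminus S'\ne\emptyset$ in either case. Continuing until $k=\ell$ yields a $t_i$--$t_j$ walk in $G'\setminus S'$, contradicting that $S'$ is a vertex multiway cut.

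The only step that requires more than bookkeeping is the ``copy-selection'' in the reverse direction, and it is handled by the design choice $|C_u|=p+1$: the budget $p$ is simply too small to delete an entire copy-set, so every intermediate non-terminal vertex of the walk in $G$ has an available lift in $G'$. With that observation both directions close and the equivalence of the two instances follows.
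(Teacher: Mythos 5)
Your proof is correct and follows essentially the same route as the paper: the same translations $S'=\{\beta_{uv}:(u,v)\in S\}$ and $S=\{(u,v):\beta_{uv}\in S'\}$, with the same key observation that a budget of $p$ cannot delete a copy-set $C_u$ of size $p+1$ (and terminals are undeletable). The paper states the reverse direction more tersely (``it does not help to pick vertices from $C_z$''), whereas you spell out the walk projection and lifting explicitly, but the underlying argument is the same.
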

\begin{proof}
Suppose $G$ has an edge multiway cut, say $S$, of size at most $p$. Then the set $S'=\{\beta_{uv}\ |\ (u,v)\in S\}$ is clearly a
vertex multiway cut for $G'$ and $|S'|=|S|\leq p$.

Suppose $G'$ has a vertex multiway cut say $S'$ of size at most $p$. Note that it does not help to pick in $S$ any vertices
from the $C_z$ of any vertex $z\in V(G)\setminus T$ as each vertex has $(p+1)$ equivalent copies and our budget is $p$. So let
$S=\{(u,v)\ |\ \beta_{uv}\in S'\}$. Then $S$ is a edge multiway cut for $G$ and $|S|\leq |S'|\leq p$.
\end{proof}


\subsection{Separators and Shadows}
\label{subsection:separators-and-shadows}

The crucial idea in the algorithm of \cite{marx-stoc-2011} for (the vertex version of) undirected \textsc{Multicut} is to get rid of the ``isolated
part'' of the solution $S$. We use a similar concept here, but we use the term {\em shadow,} as it is more expressive for
directed graphs.

\begin{definition}{\bf (separator)}\label{defn-sep}
Let $G$ be a directed graph and $V^{\infty}(G)\supseteq T$ be the set of distinguished (``undeletable'') vertices. Given two disjoint
non-empty sets $X,Y\subseteq V$  we call a set $S\subseteq V\setminus (X\cup Y\cup V^{\infty})$ an \emph{$X-Y$ separator} if
there is no path from $X$ to $Y$ in $G\setminus S$. A set $S$ is a {\em minimal $X-Y$ separator} if no proper subset of $S$ is
an $X-Y$ separator.
\end{definition}

Note that here we explicitly define the $X-Y$ separator $S$ to be disjoint from $X$ and $Y$.
\begin{definition}{\bf (shadows)}
\label{defn-f-and-r-and-shadow} Let $G$ be graph and $T$ be a set of terminals. Let $S\subseteq V(G)\setminus V^{\infty}(G)$
be a subset of vertices.
\begin{enumerate}
\item The \emph{forward shadow} $f_{G,T}(S)$ of $S$ (with respect to $T$) is the set of vertices $v$ such that $S$ is a
    $T-\{v\}$ separator in $G$.
\item The \emph{reverse shadow} $r_{G,T}(S)$ of $S$ (with respect to $T$) is the set of vertices $v$ such that $S$ is a
    $\{v\}-T$ separator in $G$.
\end{enumerate}
The \emph{shadow} of $S$ (with respect to $T$) is the union of $f_{G,T}(S)$ and $r_{G,T}(S)$.
\end{definition}
That is, we can imagine $T$ as a light source with light spreading on the directed edges. The forward shadow is the set of
vertices that remain dark if the set $S$ blocks the light, hiding $v$ from $T's$ sight. In the reverse shadow, we imagine that
light is spreading on the edges backwards. We abuse the notation slightly and write $v-T$ separator instead of $\{v\}-T$
separator. We also drop $G$ and $T$ from the subscript if they are clear from the context. Note that $S$ itself is not in the
shadow of $S$ (as, by definition, a $T-v$ or $v-T$ separator needs to be disjoint from $T$ and $v$), that is, $S$ and $f_{G,T}(S)\cup
r_{G,T}(S)$ are disjoint. See Figure~\ref{fig:shadow} for an illustration.

\begin{figure}[t]
\centering
\def\svgwidth{0.4\linewidth}%
\executeiffilenewer{mway1.svg}{mway1.pdf}%
{inkscape -z -D --file=mway1.svg %
--export-pdf=mway1.pdf --export-latex}%
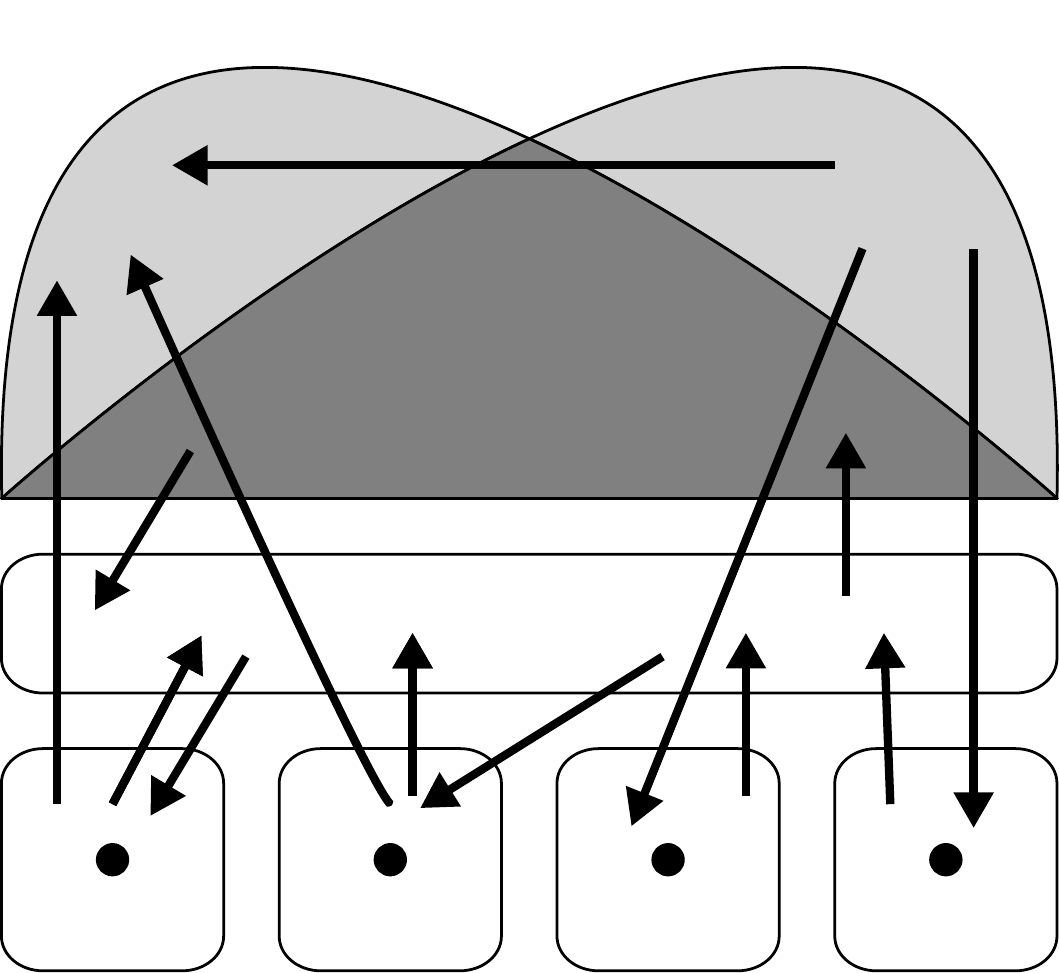%

\caption{For every vertex $v\in f(S)$, the set $S$ is a $T-v$
  separator. For every vertex $w\in r(S)$, the set $S$ is a $w-T$
  separator. For every vertex $y\in f(S)\cap r(S)$, the set $S$ is
  both a $T-y$ and $y-T$ separator. Finally for every $z\in
  V(G)\setminus [S\cup r(S)\cup f(S)\cup T]$, there are both $z-T$ and
  $T-z$ paths in the graph $G\setminus S$\label{fig:shadow}. Note that
  every such vertex $z$ belongs to a strongly connected component of
  $G\setminus S$ containing $T$ and there are no edges between these
  components.}
\end{figure}

\section{Overview of our Algorithm}
\label{overview}

We say that a solution $S$ of \dmway is {\em shadowless} (with respect
to $T$) if $f(S)=r(S)=\emptyset$. The following lemma shows the
importance of \emph{shadowless solutions} for \dmway. Clearly, any
solution of the underlying undirected instance (where we disregard the
orientation of the edges) is a solution for \dmway cut. The converse
is not true in general: a solution of the directed problem is a not
always solution of the undirected problem. However, the converse statement is true for shadowless
solutions of the directed instance:

\begin{lemma}
\label{lemma-redn-to-undirected-case} Let $G^{*}$ be the underlying undirected graph of $G$. If $S$ is a \emph{shadowless}
solution for an instance $(G,T,p)$ of \dmway, then $S$ is also a solution for the instance $(G^{*},T,p)$ of \textsc{Undirected
Multiway Cut}.
\end{lemma}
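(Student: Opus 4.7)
The plan is to show the contrapositive would fail: if $S$ is shadowless and solves the directed instance, no undirected path in $G^*\setminus S$ can connect two distinct terminals. The main idea is to use the shadowless assumption to assign, to every vertex surviving in $G\setminus S$, a unique ``home'' terminal, and then to observe that adjacent surviving vertices in $G^*$ must share the same home terminal.

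More concretely, I would proceed as follows. For each $v\in V(G)\setminus S$, define a terminal $t(v)\in T$ by the rule: if $v\in T$, set $t(v)=v$; otherwise, since $v\notin f(S)$ there is a directed path $P_1$ from some $t_i\in T$ to $v$ in $G\setminus S$, and since $v\notin r(S)$ there is a directed path $P_2$ from $v$ to some $t_j\in T$ in $G\setminus S$. Concatenating $P_1$ with $P_2$ gives a directed $t_i$-to-$t_j$ walk in $G\setminus S$, and because $S$ is a (directed) multiway cut for $T$, we must have $i=j$. So $t(v):=t_i$ is well defined.

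The key claim is then: for every edge $(u,w)\in E(G)$ with $u,w\notin S$, we have $t(u)=t(w)$. Indeed, using the same shadowless argument, there is a directed path from $t(u)$ to $u$ in $G\setminus S$ (trivially, or by choice of $t(u)$) and a directed path from $w$ to $t(w)$ in $G\setminus S$. Appending the edge $(u,w)$, which lies in $G\setminus S$ since neither endpoint is in $S$, yields a directed walk from $t(u)$ to $t(w)$ in $G\setminus S$; again $S$ being a directed multiway cut forces $t(u)=t(w)$.

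Consequently, $t(\cdot)$ is constant on every connected component of the underlying undirected graph $G^{*}\setminus S$, since it is constant along each undirected edge of $G^{*}\setminus S$ (the edge corresponds to some orientation in $G$, and the claim applies in either direction). In particular, two distinct terminals $t_i,t_j\in T$, which satisfy $t(t_i)=t_i\neq t_j=t(t_j)$, cannot lie in the same component of $G^{*}\setminus S$, so $S$ is an undirected multiway cut in $G^{*}$. No step is really an obstacle here; the only point requiring a little care is the degenerate case $u\in T$ or $w\in T$, which the convention $t(v)=v$ for $v\in T$ handles uniformly.
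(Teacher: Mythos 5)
Your proof is correct and is essentially the paper's argument in different notation: the ``home terminal'' $t(v)$ you assign to each surviving vertex is exactly the terminal whose strongly connected component contains $v$ in the paper's proof, and your key claim that $t$ is constant along surviving edges is the paper's observation that no directed edge can join the strongly connected components of two distinct terminals. Nothing further is needed.
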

\begin{proof}
If $S$ is a shadowless solution, then for each vertex $v$ in $G\setminus S$, there is a $t_1\to v$ path and a $v \to t_2$ path
for some $t_1,t_2\in T$.  As $S$ is a solution, it is not possible that $t_1\neq t_2$: this would give a $t_1\rightarrow t_2$
path in $G\setminus S$. Therefore, if $S$ is a shadowless solution, then each vertex in the graph $G\setminus S$ belongs to
the strongly connected component of exactly one terminal.
A directed edge between the strongly connected components of $t_i$ and $t_j$ would imply the existence of either a $t_i\to
t_j$ or a $t_j\to t_i$ path, which contradicts the fact that $S$ is a solution of the \dmway instance. Hence the strongly
connected components of $G\setminus S$ are exactly the same as the weakly connected components of $G\setminus S$, i.e., $S$ is
also a solution for the underlying instance of \textsc{Undirected Multiway Cut}.
\end{proof}

\begin{figure}[t]
\centering
\def\svgwidth{0.4\linewidth}%
\executeiffilenewer{mway2.svg}{mway2.pdf}%
{inkscape -z -D --file=mway2.svg %
--export-pdf=mway2.pdf --export-latex}%
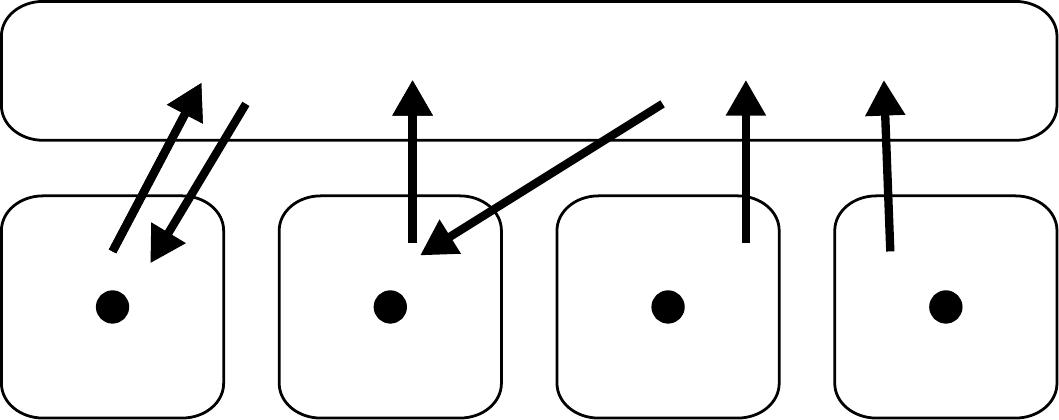%

\caption{A shadowless solution $S$ for a \dmway instance. Every vertex
  of $G\setminus S$ is in the strongly connected component of some
  terminal $t_i$.  There are no edges between the strongly connected
  components of the terminals $t_i$, thus $S$ is also a solution of
  the underlying \textsc{Undirected Multiway Cut}
  instance. \label{fig:constant}}
\end{figure}

An illustration of Lemma~\ref{lemma-redn-to-undirected-case} is given in Figure~\ref{fig:constant}.
Lemma~\ref{lemma-redn-to-undirected-case} shows that if we can transform the instance in a way that ensures the existence of a
shadowless solution, then we can reduce the problem to undirected \textsc{Multiway Cut} and use the $O^*(4^p)$ algorithm for
that problem due to Guillemot~\cite{DBLP:journals/disopt/Guillemot11a} which can handle the case when there are some
distinguished vertices similar to what we consider.
Our transformation is based on two ingredients: random sampling of important separators and
reduction of the instance using the \torso\ operation. These techniques were introduced by Marx and
Razgon~\cite{marx-stoc-2011} for the undirected \textsc{Multicut} problem. In Section~\ref{subsection-imp-separators}, we
review these tools and adapt them for directed graphs.  \medskip

\textbf{Random sampling of important separators.} As a first step of reducing the problem to a shadowless instance, we need a set $Z$
that has the following property:

\begin{quote}
There is a solution $S^*$ such that $Z$ contains the shadow of $S^*$, but $Z$ is disjoint from $S^*$. \hfill (*)
\end{quote}
If we have a set $Z$ that satisfies Property (*), we modify the
instance in a way that removes the set $Z$. The modification is done
such that $S^*$ remains a solution of the reduced instance; in fact,
it becomes a shadowless solution. This means
that the problem can be solved by
Lemma~\ref{lemma-redn-to-undirected-case}. This process of getting rid
of the set $Z$ in an appropriate way is accomplished by the \torso\
operation defined below.

Unfortunately, when we are
trying to construct the set $Z$, we do not know anything about the
solutions of the instance and in particular we have no way of checking
if a given set $Z$ satisfies Property (*). Nevertheless, we use a
randomized procedure that creates a set $Z$ and we give a lower bound
on the probability that $Z$ satisfies Property (*). For the
construction of this set $Z$, we use a very specific probability
distribution that was introduced in \cite{marx-stoc-2011}. This
probability distribution is based on randomly selecting ``important
separators'' and taking the union of their shadows. At this point, we
can consider the sampling as a black-box function
``\randset$(G,T,p)$'' that returns a random subset $Z\subseteq V(G)$
according to a probability distribution that satisfies certain
properties. The precise description of this function and the
properties of the distribution it creates is described in
Section~\ref{sec:random-sampling-1} (see
Theorem~\ref{th:random-sampling}). The randomized selection can be
derandomized: the randomized selection can be turned into a
deterministic algorithm that returns a bounded number of sets such
that at least one of them satisfies the required property
(Section~\ref{subsection-derandomization}). To make the description of
the algorithm simpler, we focus on the randomized algorithm in this
section.

\textbf{Torsos.} We use the function $\randset(G,T,p)$ to construct a set $Z$ of vertices that we want to get rid of. However
we must be careful: when getting rid of the set $Z$ we should ensure that the information relevant to $Z$ is captured in the
reduced instance. This is exactly accomplished by the \torso\ operation which removes a set of vertices without making the
problem any easier. We formally define this operation as follows:
\medskip

\begin{definition}{\bf (torso)}
\label{defn-torso} Let $G$ be a directed graph and let $C\subseteq V(G)$. The graph \emph{\torso}$(G,C)$ has vertex set $C$
and there is a (directed) edge $(a,b)$ in \emph{\torso}$(G,C)$ if there is an $a\rightarrow b$ path in $G$ whose internal
vertices are not in $C$.
\end{definition}

\begin{figure}[t]
\centering
\includegraphics[width=7in]{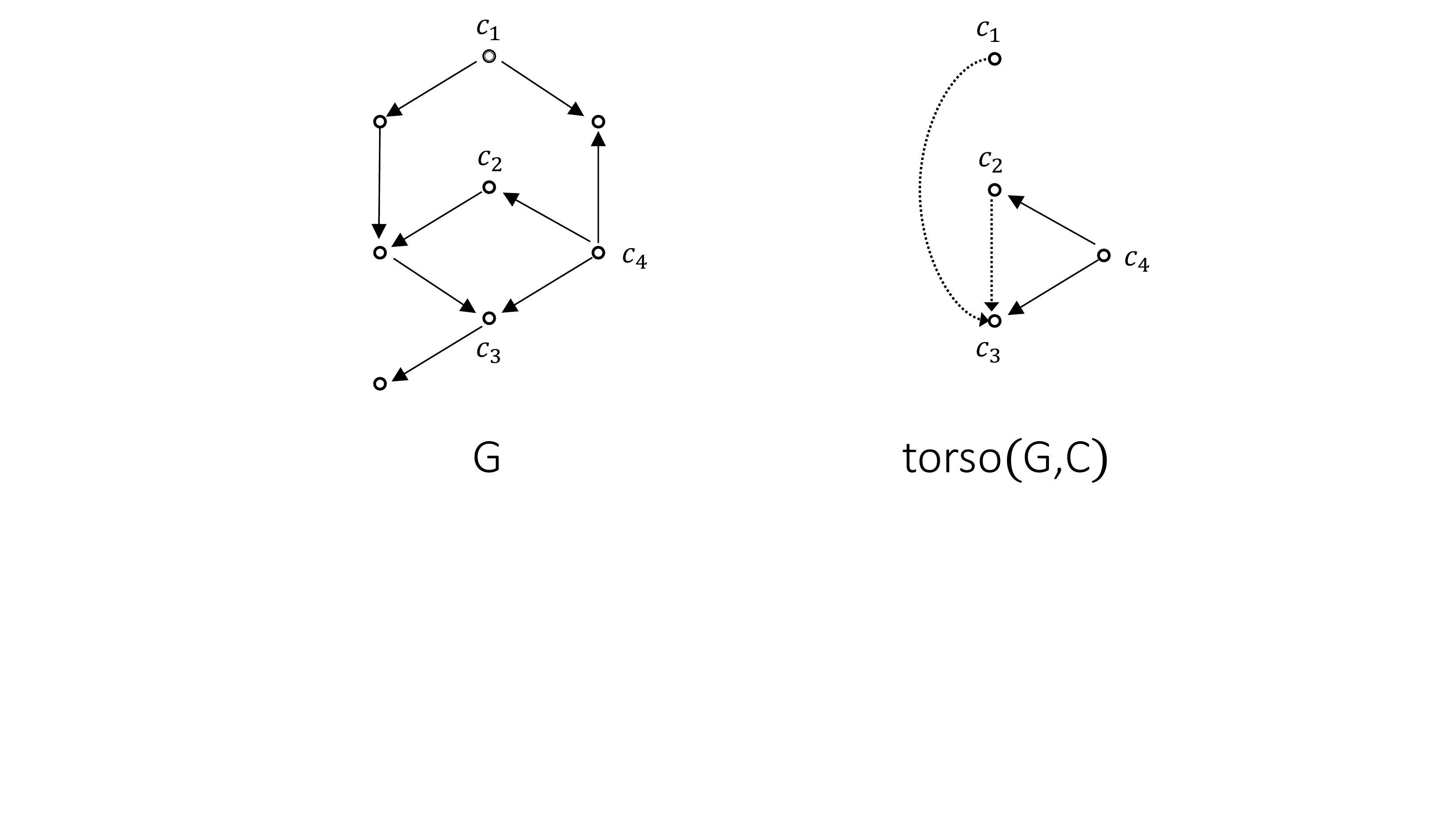}
\vspace{-45mm}
\caption{Let $C=\{c_1,c_2,c_3,c_4\}$. In the graph \torso$(G,C)$ the edges $(c_4,c_3)$ and $(c_4,c_2)$ carry over from $G$. The new edges (shown by dotted arrows) that get added because of the \torso\ operation are
$(c_1,c_3)$ and $(c_2,c_3)$.\label{fig:torso}}
\end{figure}

See Figure~\ref{fig:torso} for an example of the \torso\ operation. Note that if $a,b\in C$ and $(a,b)$ is a directed edge of
$G$, then $\torso(G,C)$ contains $(a,b)$ as well. Thus $G[C]$, which is the graph induced by $C$ in $G$, is a subgraph of
$\torso(G,C)$. The following lemma shows that the \torso\ operation
preserves separation inside $C$.

\begin{lemma}
{\bf (torso preserves separation)} \label{torso-preserves-separation} Let $G$ be a directed graph and $C\subseteq V(G)$. Let
$G'=\emph{\torso}(G,C)$ and $S\subseteq C$. For $a,b \in C\setminus S$, the graph $G\setminus S$ has an $a\rightarrow b$ path if
and only if $G'\setminus S$ has an $a\rightarrow b$ path.
\end{lemma}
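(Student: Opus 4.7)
The plan is to prove both directions of the equivalence by translating paths back and forth between $G\setminus S$ and $G'\setminus S = \torso(G,C)\setminus S$, using the fact that $S\subseteq C$ so every vertex outside $C$ is automatically outside $S$.

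For the forward direction, suppose $P$ is an $a\to b$ path in $G\setminus S$. I would look at the vertices of $P$ that happen to lie in $C$, listed in the order they appear along $P$: say $a=c_0,c_1,\dots,c_\ell=b$. By construction, the $c_i$ lie in $C\setminus S$ (since $P$ avoids $S$), and between consecutive $c_i$ and $c_{i+1}$ the subpath of $P$ has all internal vertices outside $C$. By Definition~\ref{defn-torso}, this exactly witnesses a directed edge $(c_i,c_{i+1})$ in $\torso(G,C)=G'$. Stringing these edges together yields an $a\to b$ walk in $G'$ that avoids $S$, which contains an $a\to b$ path in $G'\setminus S$.

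For the reverse direction, suppose $P'=a=c_0,c_1,\dots,c_\ell=b$ is an $a\to b$ path in $G'\setminus S$. Each edge $(c_i,c_{i+1})$ of $G'$ is, by Definition~\ref{defn-torso}, witnessed by some $c_i\to c_{i+1}$ path $Q_i$ in $G$ whose internal vertices avoid $C$. I would concatenate $Q_0,Q_1,\dots,Q_{\ell-1}$ to obtain an $a\to b$ walk $W$ in $G$. To check that $W$ avoids $S$: the endpoints $c_i$ of the $Q_i$'s do not lie in $S$ because $P'$ avoids $S$, and the internal vertices of the $Q_i$'s lie outside $C$, hence outside $S$ since $S\subseteq C$. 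Thus $W$ is an $a\to b$ walk in $G\setminus S$, and any such walk contains an $a\to b$ path.

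There is no real obstacle here — the argument is a bookkeeping exercise. The only subtlety worth flagging is the role of the hypothesis $S\subseteq C$: it is what makes the "internal vertices outside $C$" clause of Definition~\ref{defn-torso} strong enough to guarantee that the witnessing subpaths $Q_i$ in $G$ are automatically $S$-avoiding. Without $S\subseteq C$, the reverse direction would fail, since a torso edge could be realized in $G$ only through vertices that were forbidden in $G\setminus S$.
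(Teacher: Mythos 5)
Your proof is correct and follows essentially the same route as the paper's: in the forward direction you contract maximal subpaths of $P$ with internal vertices outside $C$ into torso edges, and in the reverse direction you expand torso edges into their witnessing paths, using $S\subseteq C$ to see that the expanded internal vertices avoid $S$. The walk-versus-path bookkeeping you add is a harmless refinement of what the paper does implicitly.
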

\begin{proof}
Let $P$ be a path from $a$ to $b$ in $G$. Suppose $P$ is disjoint from $S$. Then $P$ contains vertices from $C$ and
$V(G)\setminus C$. Let $u,v$ be two vertices of $C$ such that  every vertex of $P$ between $u$ and $v$ is from $V(G)\setminus
C$. Then by definition there is an edge $(u,v)$ in $\torso(G,C)$. Using such edges we can modify $P$ to obtain an $a\rightarrow
b$ path that lies completely in $\torso(G,C)$ but avoids $S$.

Conversely suppose $P'$ is an $a\rightarrow b$ path in $\torso(G,C)$ and it avoids $S\subseteq C$. If $P'$ uses an edge
$(u,v)\notin E(G)$, then this means that there is a $u\rightarrow v$ path $P''$ whose internal vertices are not in $C$. Using
such paths we modify $P$ to get an $a\rightarrow b$ path $P_0$ that only uses edges from $G$. Since $S\subseteq C$ we have
that the new vertices on the path are not in $S$ and so $P_0$ avoids $S$.
\end{proof}

If we want to remove a set $Z$ of vertices, then we create a new instance by taking the \torso\ on the {\em complement} of
$Z$:
\begin{definition}
\label{defn-reduced-instance} Let $I=(G,T,p)$ be an instance of \dmway and $Z\subseteq V(G)\setminus T$. The reduced instance
$I/Z=(G',T',p)$ is defined as
\begin{itemize}
    \item $G'=$\emph{\torso}$(G,V(G)\setminus Z)$
    \item $T'=T$
\end{itemize}
\end{definition}
The following lemma states that the operation of taking the \torso\ does not make the \dmway problem easier for any
$Z\subseteq V(G)\setminus T$ in the sense that any solution of the reduced instance $I/Z$ is a solution of the original
instance $I$. Moreover, if we perform the \torso\ operation for a $Z$ that is large enough to contain the shadow of some
solution $S^*$ but at the same time small enough to be disjoint from $S^*$, then $S^*$ remains a solution for the reduced
instance $I/Z$ and in fact it is a shadowless solution for $I/Z$. Therefore, our goal is to randomly select a set $Z$ in a way
that we can bound the probability that $Z$ satisfies Property (*) defined above for some hypothetical solution $S^*$.

\begin{lemma}
{\bf (creating a shadowless instance)} \label{reduced-instance-soln} Let $I=(G,T,p)$ be an instance of \dmway and $Z\subseteq
V(G)\setminus T$.
\begin{enumerate}
\item If $S$ is a solution for $I/Z$, then $S$ is also a solution for $I$.
\item If $S$ is a solution for $I$ such that $f_{G,T}(S)\cup r_{G,T}(S)\subseteq Z$ and $S\cap Z=\emptyset$, then $S$ is a
    shadowless solution for $I/Z$.
\end{enumerate}
\end{lemma}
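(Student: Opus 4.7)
Both parts of the lemma reduce cleanly to the ``torso preserves separation'' lemma applied with $C = V(G)\setminus Z$, since $G' = \torso(G,C)$. My plan is to set up the correspondence once and then use it twice: once to transfer non-existence of terminal-to-terminal paths (for solutionhood) and once to transfer existence of $T\to v$ and $v\to T$ paths (for shadowlessness).

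For part (1), I would first observe that any $S$ which is a solution for $I/Z$ lives in $V(G')\setminus T = (V(G)\setminus Z)\setminus T \subseteq V(G)\setminus T$ and avoids the distinguished vertices, so $S$ is a legitimate candidate for $I$. Since $T\subseteq V(G)\setminus Z = C$ and $T\cap S=\emptyset$, every pair $t_i,t_j\in T$ satisfies $t_i,t_j\in C\setminus S$. Lemma~\ref{torso-preserves-separation} then says $G\setminus S$ has a $t_i\to t_j$ path if and only if $G'\setminus S$ has one, and by hypothesis no such path exists in $G'\setminus S$. This immediately yields that $S$ is a multiway cut in $G$ of size at most $p$.

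For part (2), the hypothesis $S\cap Z=\emptyset$ together with $S\cap V^\infty(G)=\emptyset$ (since $S$ is a solution for $I$) places $S$ inside $V(G')\setminus V^\infty(G')$, so it is a valid candidate in the reduced instance. Applying Lemma~\ref{torso-preserves-separation} exactly as in part (1) in the other direction shows $S$ remains a multiway cut in $G'$. It then remains to verify shadowlessness, namely $f_{G',T}(S)=r_{G',T}(S)=\emptyset$. Take an arbitrary $v\in V(G')\setminus(S\cup T)$; since $v\in V(G)\setminus Z$ and $f_{G,T}(S)\cup r_{G,T}(S)\subseteq Z$, we have $v\notin f_{G,T}(S)$ and $v\notin r_{G,T}(S)$. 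By definition this means $G\setminus S$ contains both a $T\to v$ path and a $v\to T$ path. Since $v,T\subseteq C\setminus S$, Lemma~\ref{torso-preserves-separation} transports both paths into $G'\setminus S$, so $v$ lies in neither shadow with respect to $G'$.

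The only subtle point is bookkeeping: I must be careful that the vertex set used when talking about ``shadows'' is the correct one (vertices of $G'$, not of $G$), and that $S$ is disjoint from the set of distinguished vertices of the reduced instance. These are automatic from the disjointness conditions in the hypotheses, but they are the only places where one could slip up. I do not expect any genuine obstacle beyond this; the work is entirely carried by Lemma~\ref{torso-preserves-separation}, which is exactly what was designed for.
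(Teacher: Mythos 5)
Your proposal is correct and follows essentially the same route as the paper: both parts are reduced to Lemma~\ref{torso-preserves-separation} applied with $C=V(G)\setminus Z$, first to transfer (non-)existence of terminal-to-terminal paths and then to transfer $v\to T$ and $T\to v$ paths. The only cosmetic difference is that you argue shadowlessness directly (transporting the existence of paths for each $v\in V(G')\setminus(S\cup T)$) while the paper argues by contradiction (a vertex in the shadow of $S$ in $G'$ would lie in the shadow in $G$, hence in $Z$), which is the same argument read in the contrapositive.
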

\begin{proof}
  Let $G'$ be the graph $\torso(G,V(G)\setminus Z)$. To prove the
  first part, suppose that $S\subseteq V(G')$ is a solution for $I/Z$
  and $S$ is not a solution for $I$. Then there are terminals
  $t_1,t_2\in T$ such that there is an $t_1\rightarrow t_2$ path $P$
  in $G\setminus S$. As $t_1,t_2\in T$ and $Z\subseteq V(G)\setminus
  T$, we have that $t_1,t_2\in V(G)\setminus Z$. In fact, we have
  $t_1,t_2\in (V(G)\setminus Z)\setminus S$. Lemma
  \ref{torso-preserves-separation} implies that there is an
  $t_1\rightarrow t_2$ path in $G'\setminus S$, which is a
  contradiction as $S$ is a solution for $I/Z$.

For the second part of the lemma, let $S$ be a solution for $I$ such that $S\cap Z = \emptyset$ and $f_{G,T}(S)\cup
r_{G,T}(S)\subseteq Z$.
We want to show that $S$ is a shadowless solution for $I/Z$. First we show that $S$ is a solution for $I/Z$. Suppose to the
contrary that there are terminals $x',y'\in T'(=T)$ such that $G'\setminus S$ has an $x'\rightarrow y'$ path. As $x',y'\in V(G)\setminus
Z$, Lemma \ref{torso-preserves-separation} implies $G\setminus S$ also has an $x'\rightarrow y'$ path, which is a
contradiction as $S$ is a solution of $I$.

Finally, we show that $S$ is shadowless in $I/Z$, i.e., $r_{G',T}(S)=\emptyset=f_{G',T}(S)$. We only prove that
$r_{G',T}(S)=\emptyset$: the argument for $f_{G',T}(S)=\emptyset$ is analogous. Assume to the contrary that there exists $w\in
r_{G',T}(S)$ (note that we have $w\in V(G')$, i.e., $w\notin Z$). So $S$ is a $w-T$ separator in $G'$, i.e., there is no $w-T$
path in $G'\setminus S$. Lemma~\ref{torso-preserves-separation} gives that there is no $w-T$ path in $G\setminus S$, i.e.,
$w\in r_{G,T}(S)$. But $r_{G,T}(S)\subseteq Z$ and so we have $w\in Z$ which is a contradiction. Thus $r_{G,T}(S)\subseteq Z$
in $G$ implies that $r_{G',T}(S)=\emptyset$.
\end{proof}

\begin{algorithm}[t]
\caption{\textsc{FPT Algorithm for \dmway}} \label{alg:algo}
~\\
\textbf{Input}: An instance $I_1=(G_1,T,p)$ of \dmway.\\

\begin{algorithmic}[1]

    \STATE Let $Z_1=\randset(G_1,T,p)$.
        \STATE Let $G_2 = (G_1)_{\textup{rev}}$\hspace{63mm}\COMMENT {Reverse the orientation of every edge}
        \STATE Let $V^{\infty}(G_2)=V^{\infty}(G_1)\cup Z_1$.\hspace{42mm}\COMMENT {Set weight of every vertex of $Z_1$ to $\infty$}
        \STATE Let $Z_2=\randset(G_2,T,p)$.
        \STATE Let $Z=Z_1\cup Z_2$.
        \STATE Let $G_3=\torso(G_1,V(G)\setminus Z)$.\hspace{81mm}\COMMENT {Get rid of $Z$}
        \STATE Solve the underlying \emph{undirected} instance $(G^*_3,T,p)$ of \textsc{Multiway Cut}.
        \IF {$(G^*_3,T,p)$ has a solution $S$}
                 \STATE \textbf{return} $S$
        \ELSE
                 \STATE \textbf{return} ``NO"
        \ENDIF
\end{algorithmic}
\end{algorithm}

\textbf{The Algorithm.} The description of our algorithm is given in Algorithm~\ref{alg:algo}. Recall that we are trying to
solve a version of \dmway where we are given a set $V^{\infty}$ of distinguished vertices which are undeletable, i.e., have
infinite weight.

Due to the delicate way separators behave in directed graphs, we construct the set $Z$ in two phases, calling the function
$\randset$ twice. Our aim is to show that there is a solution $S$ such that we can give a lower bound on the probability that
$Z_1$ contains $r_{G_1,T}(S)$ and $Z_2$ contains $f_{G_1,T}(S)$. Note that the graph $G_2$ obtained in Step 2 depends on the
set $Z_1$ returned in Step 1 (as we made the weight of every vertex in $Z_1$ infinite), thus the distribution of the second
random sampling depends on the result $Z_1$ of the first random sampling.  This means that we cannot make the two calls in
parallel.

We use the \torso\ operation to remove the vertices in $Z=Z_1\cup Z_2$ (Step 5), and then solve the undirected
\textsc{Multiway Cut} instance obtained by disregarding the orientation of the edges. For this purpose, we can use the
algorithm of Guillemot~\cite{DBLP:journals/disopt/Guillemot11a} that solves the undirected problem in time $O^*(4^p)$. Note
that the algorithm for undirected \textsc{Multiway Cut} in~\cite{DBLP:journals/disopt/Guillemot11a} explicitly considers the
variant where we have a set of distinguished vertices which cannot be deleted.

The following two lemmas show that Algorithm~\ref{alg:algo} is a correct randomized algorithm. One direction is easy to see: the algorithm has no false positives.
\begin{lemma}
\label{lem:correctalg-part-one} Let $I_1=(G_1,T,p)$ be an instance of \dmway.
If Algorithm~\ref{alg:algo} returns a set $S$, then $S$ is a solution for $I_1$.
\end{lemma}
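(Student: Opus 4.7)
The plan is to show that this direction is the ``easy'' soundness direction: the algorithm only commits to returning $S$ after verifying it against a problem that is at least as hard as the original, so validity transfers back. The key leverage is Lemma~\ref{reduced-instance-soln}(1), which already tells us that solutions of the reduced directed instance $I_1/Z$ are solutions of $I_1$. So the only real content is to bridge from ``$S$ solves the underlying undirected Multiway Cut instance $(G_3^\ast,T,p)$'' to ``$S$ solves the directed instance $(G_3,T,p)=I_1/Z$.''

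First I would observe that when the algorithm returns $S$ in Step 9, the set $S$ is, by construction, a solution for the undirected instance $(G_3^\ast,T,p)$ returned by Guillemot's algorithm. In particular $|S|\le p$, $S$ is disjoint from the distinguished vertices inherited into $G_3$ from $V^\infty(G_1)$, and $G_3^\ast\setminus S$ contains no $t_i$--$t_j$ walk for distinct $t_i,t_j\in T$.

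Next I would argue that $S$ is also a solution for the directed reduced instance $I_1/Z=(G_3,T,p)$. Since $G_3^\ast$ is the underlying undirected graph of $G_3$, any directed $t_i\to t_j$ path in $G_3\setminus S$ would immediately yield an undirected $t_i$--$t_j$ path in $G_3^\ast\setminus S$, contradicting the previous paragraph. Thus $S$ is a directed multiway cut of $(G_3,T)$ of size at most $p$ avoiding the distinguished vertices, i.e., a solution for $I_1/Z$.

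Finally, since $Z=Z_1\cup Z_2\subseteq V(G_1)\setminus T$ (the set returned by \randset{} is always disjoint from $T$, which belongs to $V^\infty$), Lemma~\ref{reduced-instance-soln}(1) applies and guarantees that every solution of $I_1/Z$ is a solution of $I_1$. Hence $S$ is a solution for $I_1$. The only step that requires any care is the bookkeeping that $V^\infty$ is respected throughout the chain $G_3^\ast\leadsto G_3\leadsto G_1$, but this is immediate because $V^\infty(G_3)$ is inherited from $V^\infty(G_1)$ (restricted to the surviving vertices $V(G_1)\setminus Z$), so $S\cap V^\infty(G_3)=\emptyset$ implies $S\cap V^\infty(G_1)=\emptyset$. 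No probabilistic argument is needed for this lemma; the randomness of \randset{} plays no role in soundness, only in the companion lemma bounding the failure probability.
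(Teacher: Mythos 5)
Your proof is correct and follows essentially the same route as the paper: observe that an undirected solution of $(G^*_3,T,p)$ is also a solution of the directed reduced instance $(G_3,T,p)$, then invoke Lemma~\ref{reduced-instance-soln}(1) to transfer it back to $I_1$. The extra bookkeeping you include (that $Z\subseteq V(G_1)\setminus T$ and that $V^{\infty}$ is respected) is accurate and only makes explicit what the paper's two-line proof leaves implicit.
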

\begin{proof}
Any solution $S$ of the undirected instance $(G^*_3,T,p)$ returned by Algorithm~\ref{alg:algo} is clearly a solution of the
directed instance $(G_3,T,p)$ as well. By Lemma~\ref{reduced-instance-soln}(1) the \torso\ operation does not make the problem
easier by creating new solutions. Hence $S$ is also a solution for $I_1 = (G_1,T,p)$
\end{proof}

The following lemma shows that if the instance has a solution, then
the algorithm finds one with certain probability.
\begin{lemma}
\label{lem:correctalg} Let $I_1=(G_1,T,p)$ be an instance of \dmway.
If $I_1$ is a yes-instance of \dmway, then Algorithm~\ref{alg:algo} returns a set $S$ which is a solution for $I$ with
probability at least $2^{-2^{O(p)}}$.
\end{lemma}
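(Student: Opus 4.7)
Since $I_1$ is a yes-instance, fix any solution $S^*$ of size at most $p$. I will lower bound the probability of the event that simultaneously (i) $S^*\cap Z = \emptyset$ and (ii) $r_{G_1,T}(S^*)\cup f_{G_1,T}(S^*)\subseteq Z$. If this event occurs, then Lemma~\ref{reduced-instance-soln}(2) implies $S^*$ is a \emph{shadowless} solution of the reduced instance $I_1/Z = (G_3,T,p)$, and Lemma~\ref{lemma-redn-to-undirected-case} then promotes $S^*$ to a solution of the underlying undirected instance $(G_3^*,T,p)$. Guillemot's algorithm invoked in line~7 is therefore guaranteed to return \emph{some} solution $S$, which by Lemma~\ref{lem:correctalg-part-one} is a valid solution of $I_1$.

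\textbf{Two-phase sampling analysis.} I split the good event into two halves, one per shadow. In the first phase, apply Theorem~\ref{th:random-sampling} to $(G_1,T,p)$: with probability at least $2^{-2^{O(p)}}$ the sample $Z_1$ contains $r_{G_1,T}(S^*)$ and is disjoint from $S^*$; call this event $\mathcal{E}_1$, and condition on it. In the second phase, observe that passing from $G_1$ to its edge-reversal $G_2$ interchanges the two shadow types, so $r_{G_2,T}(S^*) = f_{G_1,T}(S^*)$. Moreover, on $\mathcal{E}_1$ the set $S^*$ remains a valid solution of $(G_2,T,p)$ with the enlarged distinguished set $V^{\infty}(G_2)=V^{\infty}(G_1)\cup Z_1$, because $S^*$ avoids $V^{\infty}(G_1)$ by hypothesis, avoids $Z_1$ by $\mathcal{E}_1$, and reversing edges does not affect whether the terminals of $T$ remain pairwise disconnected. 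Applying Theorem~\ref{th:random-sampling} to $(G_2,T,p)$ for this same $S^*$, with conditional probability at least $2^{-2^{O(p)}}$ the sample $Z_2$ contains $r_{G_2,T}(S^*) = f_{G_1,T}(S^*)$ and is disjoint from $S^*$; call this event $\mathcal{E}_2$. Multiplying the two bounds gives $\Pr[\mathcal{E}_1\cap \mathcal{E}_2] \geq 2^{-2^{O(p)}}\cdot 2^{-2^{O(p)}} = 2^{-2^{O(p)}}$, and on $\mathcal{E}_1\cap\mathcal{E}_2$ both (i) and (ii) hold for $Z = Z_1\cup Z_2$, finishing the argument.

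\textbf{Main obstacle.} The delicate point is the second invocation of the random-sampling theorem: it must provide its guarantee for the \emph{same} solution $S^*$ that was fixed in the first phase, rather than for some arbitrary solution of $(G_2,T,p)$ chosen by the theorem itself. The algorithm's choice, in line~3, to elevate $Z_1$ to distinguished status in $G_2$ is the crucial device: it keeps $S^*$ eligible as a finite-weight (and in fact minimum, since enlarging $V^{\infty}$ can only increase the optimum) solution in the second phase, and it also prevents the sampling from accidentally intersecting $S^*$ with $Z_2$ through any vertex of $Z_1$. Verifying that Theorem~\ref{th:random-sampling} indeed yields the $2^{-2^{O(p)}}$ bound for any pre-specified solution of minimum size (not merely for one selected \emph{ex post} by the theorem) is the substantive content imported from Section~\ref{sec:random-sampling-1}; modulo that input, the union-of-events calculation and the invocations of Lemmas~\ref{lemma-redn-to-undirected-case}, \ref{reduced-instance-soln}(2), and~\ref{lem:correctalg-part-one} conclude the proof.
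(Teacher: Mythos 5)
There is a genuine gap. Theorem~\ref{th:random-sampling} does not give its $2^{-2^{O(p)}}$ guarantee for an arbitrary pair $(S,Y)$: it requires that $S$ be \emph{thin} and, crucially, that for every $v\in Y$ there is an \emph{important} $v-T$ separator contained in $S$. Your proof fixes ``any solution $S^*$ of size at most $p$'' and immediately applies the theorem with $Y=r_{G_1,T}(S^*)$ (and later $Y=r_{G_2,T}(S^*)$), but neither hypothesis is verified, and the second one can genuinely fail for an arbitrary solution: a solution may separate $v$ from $T$ without containing any important $v-T$ separator. This is precisely why the paper does not take an arbitrary $S^*$ but a carefully chosen extremal one: an inclusion-wise minimal, shadow-maximal solution with $|r_{G_1,T}(S^*)|$ maximum. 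Minimality yields thinness (Lemma~\ref{lemma:minimal-shadow-cover-sol}), and the pushing argument (Lemma~\ref{thm-pushing}) shows that if $S^*$ failed to contain an important $v-T$ separator for some $v\in r_{G_1,T}(S^*)$, one could produce a solution with strictly larger reverse shadow, contradicting the extremal choice. None of this machinery appears in your argument, so the first-phase probability bound is unsupported.

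The same issue recurs, in a subtler form, in the second phase, and here you also misidentify the ``main obstacle.'' The difficulty is not that Theorem~\ref{th:random-sampling} must work for a pre-specified solution (it is stated for any fixed pair satisfying its hypotheses); it is that the hypotheses must be re-established for $S^*$ in $G_2$ with $Y=r_{G_2,T}(S^*)=f_{G_1,T}(S^*)$. The paper does this via Lemma~\ref{lemma-no-pushing-in-terminal-minimal-solns}: because the events of the first phase put $r_{G_1,T}(S^*)=f_{G_2,T}(S^*)$ inside $Z_1$, these vertices become infinite-weight in $G_2$, and then shadow-maximality of $S^*$ (preserved under edge reversal and weight increases) forces $S^*$ to contain an important $v-T$ separator in $G_2$ for every $v$ in its reverse shadow there. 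So the role of line~3 of the algorithm is not merely, as you suggest, to keep $S^*$ finite-weight and eligible in the second phase; making $Z_1$ undeletable is exactly what blocks the pushing step in $G_2$ and hence guarantees the important-separator condition needed to invoke the sampling theorem a second time. Without the extremal choice of $S^*$, the pushing lemma, and this use of the enlarged $V^{\infty}(G_2)$, your two-phase multiplication of probabilities has no basis.
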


By Lemma~\ref{reduced-instance-soln}(2), we can prove Lemma~\ref{lem:correctalg} by showing that if $I_1$ is a yes-instance,
then there exists a solution $S^*$ such that $Z$ satisfies the two requirements $Z\cap S=\emptyset$ and $f_{G_1,T}(S)\cup
r_{G_1,T}(S)\subseteq Z$ with suitable probability. This requires a deeper analysis of the structure of optimum solutions and
the probability distribution behind the function $\randset(G,T,p)$. Hence we defer the proof of Lemma~\ref{lem:correctalg} to
Section~\ref{sec:analysis-algorithm}.

\medskip
\textbf{Derandomization.}  In Section~\ref{subsection-derandomization}, we present a deterministic variant of
$\randset(G,T,p)$,  which, instead of returning a random set $Z$, returns a deterministic set $Z_1$, $\dots$, $Z_t$ of
$O^*(2^{2^{O(p)}})$ sets. Instead of bounding the probability that the random set $Z$ has the required property with some
probability, we prove that at least one $Z_i$ always satisfy the property.  Therefore, in Steps 1 and 3 of
Algorithm~\ref{alg:algo}, we can replace $\randset$ with this deterministic variant, and branch on the choice of one $Z_i$
from the returned sets. By the properties of the deterministic algorithm, if $I_1$ is a yes-instance, then $Z$ has Property
(*) in at least one of the branches and therefore the algorithm finds a correct solution for $I_1$. The branching increases
the running time only by a factor of $(O^*(2^{2^{O(p)}}))^2$ and therefore the total running time is $O^*(2^{2^{O(p)}})$.

\section{Important separators and random sampling}
\label{subsection-imp-separators} This section reviews the notion of important separators and the random sampling technique
introduced by Marx and Razgon~\cite{marx-stoc-2011}. As \cite{marx-stoc-2011} used these concepts for undirected graphs and we
need them for directed graphs, we give a self-contained presentation without relying on earlier work.

\subsection{Important separators}
Marx~\cite{marx-2006} introduced the concept of \emph{important separators} to deal with the \textsc{Undirected Multiway Cut}
problem. Since then it has been used implicitly or explicitly in, e.g.,
\cite{chen-improved-multiway-cut,directed-feedback-vertex-set,lokshtanov-marx-clustering,marx-stoc-2011,almost-2-sat} in the
design of fixed-parameter algorithms. In this section, we define and use this concept in the setting of directed graphs.
Roughly speaking, an important separator is a separator of small size that is \emph{maximal} with respect to the set of
vertices on one side.

\begin{definition}{\bf (important separator)}
\label{defn-imp-sep} Let $G$ be a directed graph and let $X,Y\subseteq V$ be two disjoint non-empty sets.  A minimal $X-Y$
separator $S$ is called an \emph{important $X-Y$ separator} if there is no $X-Y$ separator $S'$ with $|S'|\leq |S|$ and
$R^{+}_{G\setminus S}(X)\subset R^{+}_{G\setminus S'}(X)$, where $R^{+}_{A}(X)$ is the set of vertices reachable from $X$ in
$A$.
\end{definition}

Let $X, Y$ be disjoint sets of vertices of an \emph{undirected graph}. Then for every $p\geq 0$ it is
known~\cite{chen-improved-multiway-cut,marx-2006} that there are at most $4^{p}$ important $X-Y$ separators of size at most
$p$ for any sets $X,Y$. The next lemma shows that the same bound holds for important separators even in directed graphs.

\begin{lemma}
{\bf (number of important separators)}
\label{number-of-imp-sep} Let $X,Y\subseteq V(G)$ be disjoint sets in a \emph{directed} graph $G$. Then for every
$p\geq 0$ there are at most $4^p$ important $X-Y$ separators of size at most $p$. Furthermore, we can enumerate all these
separators in time $O(4^p\cdot p(|V(G)+|E(G)|))$.
\end{lemma}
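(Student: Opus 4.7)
The plan is to adapt the standard branching argument for important separators in undirected graphs (see \cite{chen-improved-multiway-cut,marx-2006}) to the directed vertex setting, using the measure $\mu = 2p - \lambda$ where $\lambda = \lambda(X,Y)$ is the size of a minimum $X-Y$ vertex separator computed on the standard vertex-split auxiliary digraph. If $\lambda > p$ then no $X-Y$ separator of size at most $p$ exists and the claim is vacuous; otherwise $\mu \leq 2p$.

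First I would establish the directed analogue of the ``closest min cut'' fact: among all minimum $X-Y$ separators there is a unique one $C_X$ whose source-side reachable set $R^+_{G\setminus C_X}(X)$ is inclusion-minimal, and $C_X$ can be recovered in one max-flow computation by taking, on the vertex-split graph, the cut consisting of the out-arcs of $R^+$ in the residual graph. I then pick an arbitrary $v \in C_X$ and branch into two cases: either $v \in S$, in which case I recurse in $G\setminus\{v\}$ with parameter $p-1$ and prepend $v$ to every returned separator; or $v \notin S$, in which case I recurse in $G$ with $X$ replaced by $X \cup \{v\}$ and the same parameter $p$. Each leaf of the recursion tree produces at most one candidate separator, and a final post-processing pass (one max-flow per candidate) discards any candidate that fails to be maximal with respect to $R^+_{G\setminus S}(X)$.

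The analysis reduces to showing that $\mu$ strictly decreases in each branch. In the first branch $p$ drops by $1$ and $\lambda$ drops by at most $1$, so $\mu$ drops by at least $1$. In the second branch $p$ is unchanged, and the key claim is $\lambda(X\cup\{v\},Y) \geq \lambda(X,Y) + 1$, which again forces $\mu$ to drop by at least $1$. The resulting recurrence $T(\mu) \leq 2T(\mu-1)$ yields $T(\mu) \leq 2^{\mu} \leq 2^{2p} = 4^{p}$ leaves. Each recursive step costs $O(p(|V(G)|+|E(G)|))$ via incremental max-flow augmentation on the vertex-split graph, giving the claimed total running time.

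The main obstacle is establishing $\lambda(X\cup\{v\},Y) > \lambda(X,Y)$ when $v$ lies in the source-side minimal min cut $C_X$. I would prove this via a submodularity/uncrossing step on the vertex-split digraph: if the min-cut value did not strictly increase, one could combine $C_X$ with an $(X\cup\{v\})$-$Y$ min cut of the same size (avoiding $v$) to obtain an $X$-$Y$ min cut whose source side is strictly contained in $R^+_{G\setminus C_X}(X)$, contradicting the choice of $C_X$. This is the directed vertex-cut analogue of the classical undirected edge submodularity used by Chen, Liu and Lu; the vertex-split construction reduces it to an edge-cut problem where Menger's theorem and max-flow duality can be applied directly.
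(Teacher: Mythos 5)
Your overall skeleton (branch on a vertex of an extremal minimum separator, measure $2p-\lambda$, submodularity/uncrossing, final filtering) is the same as the paper's, but you have inverted the one detail on which the whole count rests: you take the minimum $X$--$Y$ separator $C_X$ whose reachable set $R^{+}_{G\setminus C_X}(X)$ is inclusion-wise \emph{minimal} (the cut closest to $X$), whereas the correct object is the unique minimum separator $S^{*}$ whose reachable set is inclusion-wise \emph{maximal} (Lemma~\ref{lemma-unique-min-size-sep-with-maximal-reach}), i.e.\ the cut pushed towards $Y$. Since important separators are by definition those maximizing the reach of $X$, every important separator lies ``behind'' $S^{*}$ (Lemma~\ref{lemma-reach-of-S*-contained-in-all-reaches}); nothing analogous helps you with $C_X$. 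Concretely, your key claim $\lambda(X\cup\{v\},Y)\geq\lambda(X,Y)+1$ for $v\in C_X$ is false: take the path $x\rightarrow a\rightarrow b\rightarrow y$ with $X=\{x\}$, $Y=\{y\}$. Here $\lambda=1$, the closest min cut is $C_X=\{a\}$, so $v=a$, yet $\{b\}$ is still an $(\{x,a\})$--$\{y\}$ separator of size $1$, so $\lambda$ does not increase and the measure $2p-\lambda$ does not drop in the ``$v\notin S$'' branch; the recurrence $T(\mu)\le 2T(\mu-1)$ and hence the $4^{p}$ bound are not established. Your proposed uncrossing proof also breaks down in this example: intersecting the source side of $\{b\}$ (which is $\{x,a\}$) with $R^{+}_{G\setminus C_X}(X)=\{x\}$ gives back $\{x\}$ itself, so no min cut with strictly smaller source side appears and no contradiction with the choice of $C_X$ arises.

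The fix is exactly what the paper does: pick $v\in S^{*}$ (maximal reach), and in the ``$v\notin S$'' branch replace $X$ not by $X\cup\{v\}$ but by $X'=R^{+}_{G\setminus S^{*}}(X)\cup\{v\}$ (every important separator avoiding $v$ is an $X'$--$Y$ separator because its reach contains $R^{+}_{G\setminus S^{*}}(X)$ and $v$ has an in-neighbour there, and by Lemma~\ref{properties-imp-sep}(2) it remains important). Then any $X'$--$Y$ separator of size $\lambda$ would be a minimum $X$--$Y$ separator whose reach strictly contains $R^{+}_{G\setminus S^{*}}(X)$, contradicting that $S^{*}$ is the unique minimum-size separator of maximal reach; hence $\lambda$ strictly increases and $2p-\lambda$ drops in both branches. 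Pushing $X$ to the full reachable set (rather than just adding $v$) is also needed to run this argument, since a minimum cut witnessing $\lambda(X\cup\{v\},Y)=\lambda$ need not have reach comparable to $R^{+}_{G\setminus S^{*}}(X)$. With these two corrections your argument coincides with the paper's proof; the rest of your write-up (base case, algorithmic realization via max-flow, filtering non-important candidates) is consistent with it.
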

The proof of Lemma~\ref{number-of-imp-sep} is long and follows the
same techniques as the proof in undirected graphs (see e.g.,
\cite{marx-stoc-2011,lokshtanov-marx-clustering}). Therefore, it is
deferred to Appendix~\ref{subsection-number-of-imp-sep} to maintain
the flow of the main result. For ease of notation, we now define the
following collection of important separators:
\begin{definition}\label{defn-forward-reverse-impsep} Given an instance $(G,T,p)$ of
\dmway, the set $\IS$ contains the set $S\subseteq V(G)$ if $S$ is an important $v-T$ separator of size at most $p$ in $G$ for some
vertex $v$ in $V(G)\setminus T$.
\end{definition}

\begin{remark}
\label{remark:number-of-impseps} \emph{It follows from Lemma~\ref{number-of-imp-sep} that $|\IS|\le 4^p\cdot |V(G)|$ and we can enumerate the sets in $\IS$ in time $O^*(4^p)$.}
\end{remark}
We now define a special type of shadows which we use later for the random sampling:

\begin{definition}{\bf (exact shadows)}
\label{defn-exact-shadow} Let $G$ be a directed graph and $T\subseteq V(G)$ a set of terminals. Let $S\subseteq V(G)\setminus
V^{\infty}(G)$ be a set of vertices. Then for $v\in V(G)$ we say that
\begin{enumerate}
\item $v$ is in the ``exact reverse shadow''  of $S$ (with respect to $T$), if $S$ is a minimal $v-T$ separator in $G$,
    and
\item $v$ is in the ``exact forward shadow''  of $S$ (with respect to $T$), if $S$ is a minimal $T-v$ separator in $G$.
\end{enumerate}
\end{definition}

\begin{figure}[t]
\centering
\includegraphics[width=6in]{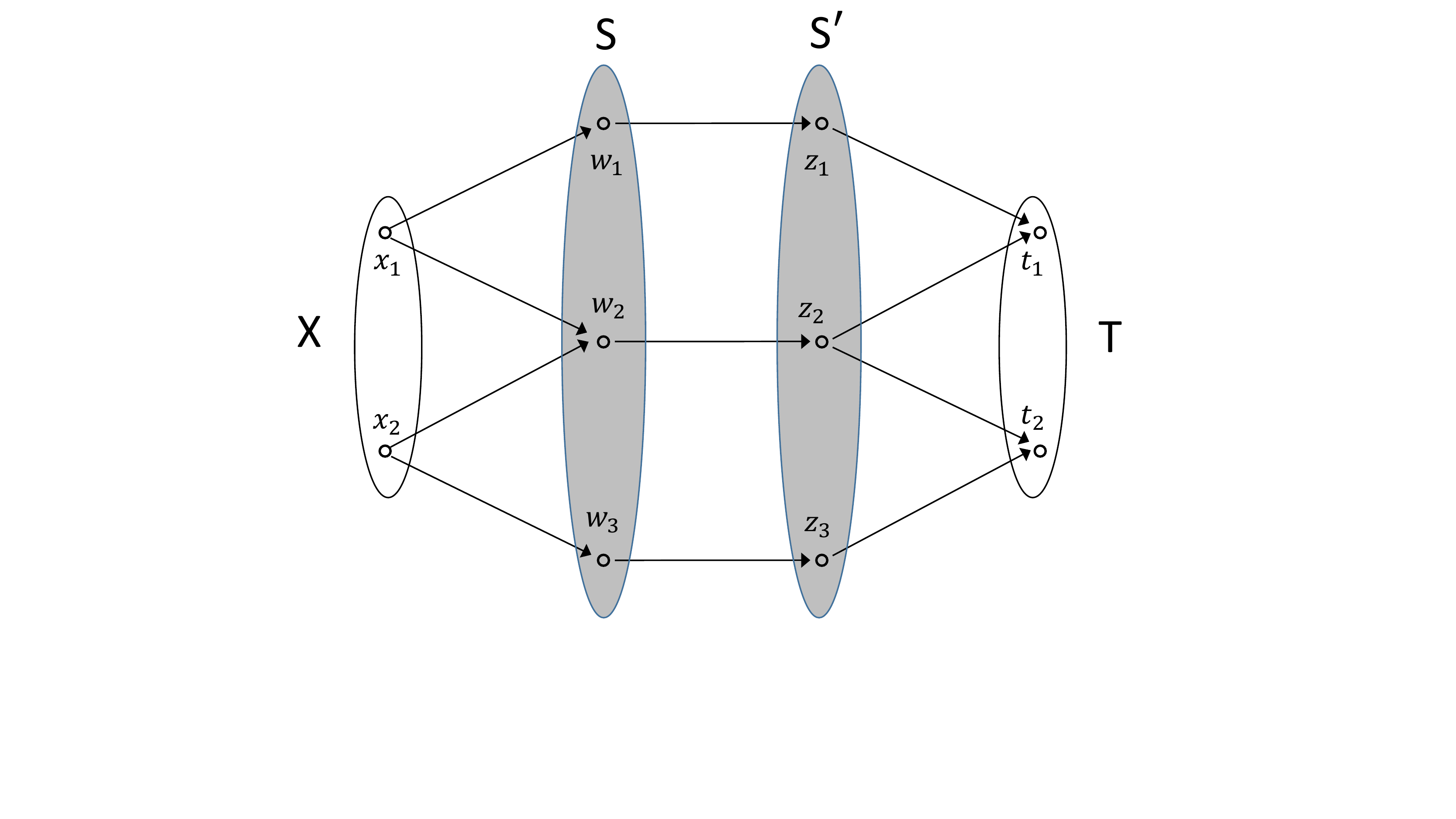}
\vspace{-20mm}
\caption{$S$ is a minimal $X-Y$ separator but it is not an important $X-T$ separator as $S'$ satisfies $|S'|=|S|$ and
 $R^{+}_{G\setminus S}(X) = X \subset X\cup S = R^{+}_{G\setminus S'}(X)$. In fact it is easy to check that the only important $X-T$ separator of size 3 is $S'$.
 If $p\geq 2$ then the set $\{z_1, z_2\}$ is in $\IS$ since it is an important $x_{1}-T$ separator of size $2$. Finally, $x_1$ belongs to the ``exact reverse shadow"
 of each of the sets $\{w_1,w_2\}, \{w_1,z_2\}, \{w_2, z_1\}$ and $\{z_1,z_2\}$ since they are all minimal $x_{1}-T$ separators. However $x_1$ does not belong to the exact reverse shadow of the set $S$ as it is not a minimal $x_{1}-T$ separator.\label{fig:important-separators}}
\end{figure}

We refer the reader to Figure~\ref{fig:important-separators} for
examples of Definitions~\ref{defn-imp-sep},
~\ref{defn-forward-reverse-impsep} and \ref{defn-exact-shadow}. The
exact reverse shadow of $S$ is a subset of the reverse shadow of $S$:
it contains a vertex $v$ only if every vertex $w\in S$ is ``useful''
in separating $v$: vertex $w$ can be reached from $v$ and $T$ can be
reached from $w$. This slight difference between the shadow and the
exact shadow will be crucial in the analysis of the algorithm (see
Section~\ref{sec:analysis-algorithm} and
Remark~\ref{remark:imp-of-exact-shadows}).

The random sampling described in Section~\ref{sec:random-sampling-1} (Theorem~\ref{th:random-sampling}) randomly selects
a members of $\IS$ and creates a subset of vertices by taking the union of the exact reverse shadows of the selected separators.  The following lemma will be
used to give an upper bound on the probability that a vertex is covered by the union.
\begin{lemma}
\label{lemma-bounding-number-of-shadows-per-vertex} Let $z$ be any vertex. Then there are at most $4^{p}$ members of $\IS$ which
contain $z$ in their exact reverse shadows.
\end{lemma}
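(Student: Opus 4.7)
The plan is to show that every $S\in\IS$ having $z$ in its exact reverse shadow is itself an important $z-T$ separator of size at most $p$. Combined with Lemma~\ref{number-of-imp-sep}, this immediately yields the bound $4^p$.

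So let $S\in\IS$ have $z$ in its exact reverse shadow, i.e., $S$ is a minimal $z-T$ separator, and there is some $v\in V(G)\setminus T$ (the witness to $S\in\IS$) such that $S$ is an important $v-T$ separator of size at most $p$. Suppose for contradiction that $S$ is not an important $z-T$ separator. Then by a standard pushing argument on important separators, there exists an important $z-T$ separator $S^{*}$ with $|S^{*}|\le |S|$ and $R_z^{*} := R^{+}_{G\setminus S^{*}}(z)\supsetneq R_z := R^{+}_{G\setminus S}(z)$. Write $R_v = R^{+}_{G\setminus S}(v)$. My plan is to construct from $S^{*}$ a $v-T$ separator of size at most $|S|$ whose $v$-reach is strictly larger than $R_v$, which contradicts the importance of $S$ as a $v-T$ separator.

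The candidate separator is the out-boundary $B := N^{+}(W)\setminus W$ of $W := R_v\cup R_z^{*}$, where $N^{+}(W)$ denotes the set of vertices with an in-neighbor in $W$. Clearly $v\in W$ and $W\cap T=\emptyset$, so $B$ is a $v-T$ separator. Using that $S$ and $S^{*}$ are minimal separators (so that $N^{+}(R_v)\setminus R_v=S$ and $N^{+}(R_z^{*})\setminus R_z^{*}=S^{*}$), a direct calculation gives $B=(S\setminus R_z^{*})\cup(S^{*}\setminus R_v)$. The crucial identity in the plan is $S\setminus R_z^{*}=S\cap S^{*}$. The direction $S\cap S^{*}\subseteq S\setminus R_z^{*}$ is immediate from $R_z^{*}\cap S^{*}=\emptyset$; for the reverse, take $s\in S\setminus S^{*}$, and use the minimality of $S$ as a $z-T$ separator to pick a $z\to s$ path in $G$ whose internal vertices avoid $S$ (and hence lie in $R_z\subseteq R_z^{*}$, and therefore avoid $S^{*}$); adjoining $s\notin S^{*}$ keeps the full path inside $G\setminus S^{*}$, giving $s\in R_z^{*}$. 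Consequently $B\subseteq S^{*}$ and $|B|\le |S^{*}|\le |S|$.

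For the strict improvement in $v$-reach, since $R_z^{*}\supsetneq R_z$ there is some $u\in R_z^{*}\setminus R_z$, and any $z\to u$ path in $G\setminus S^{*}$ must cross $S$ (since no such path exists in $G\setminus S$), producing a vertex $s\in S\cap R_z^{*}=S\setminus S^{*}$. Minimality of $S$ as a $v-T$ separator gives $s$ an in-neighbor $x\in R_v$; both $x$ and $s$ lie in $W$ and hence outside $B$, so $s\in R^{+}_{G\setminus B}(v)\setminus R_v$, as required. I expect the most delicate step to be the verification of the identity $S\setminus R_z^{*}=S\cap S^{*}$, because it is precisely this identity that allows the boundary contributions from two different sources ($v$ and $z$) to collapse into the single bound $|S^{*}|$, and it crucially exploits the minimality of $S$ as a $z-T$ separator.
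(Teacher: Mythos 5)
Your proposal is correct, and it follows the paper's top-level plan exactly: show that every $S\in\IS$ containing $z$ in its exact reverse shadow is in fact an important $z-T$ separator (the paper's Lemma~\ref{lemma-forward-impsep-exact-back-shadow}), then invoke the $4^p$ bound of Lemma~\ref{number-of-imp-sep}. Where you diverge is in how that key sub-lemma is proved. The paper takes the improved separator $S'$ for the shadow vertex and shows, via the claim that $S'$ is an $(S\setminus S')-T$ separator, that $S'$ itself already works as a witness against the importance of $S$ for the \emph{witness} vertex $w$: it is a $w-T$ separator of size at most $|S|$ whose reach from $w$ properly contains that of $S$. You instead build a new candidate separator $B=N^{+}(W)\setminus W$ for $W=R^{+}_{G\setminus S}(v)\cup R^{+}_{G\setminus S^{*}}(z)$, an uncrossing-style construction closer in spirit to the submodularity arguments of Appendix~A than to the paper's proof of this lemma, and the identity $S\setminus R^{+}_{G\setminus S^{*}}(z)=S\cap S^{*}$ plays the role that the ``$S'$ separates $S\setminus S'$ from $T$'' claim plays in the paper: both exploit minimality of $S$ as a separator for the shadow vertex to push $S\setminus S^{*}$ behind the improved separator. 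Your argument checks out: $B$ avoids $v$, $T$ and $V^{\infty}$ (since $B\subseteq S^{*}$ and $B\cap W=\emptyset$), $|B|\le|S^{*}|\le|S|$, and the vertex $s\in S\cap R^{+}_{G\setminus S^{*}}(z)$ together with its in-neighbor in $R^{+}_{G\setminus S}(v)$ (from minimality of $S$ as a $v-T$ separator) gives the strict reach increase, contradicting importance of $S$ for $v$. Two minor remarks: you only need the inclusion $B\subseteq(S\setminus R^{+}_{G\setminus S^{*}}(z))\cup(S^{*}\setminus R^{+}_{G\setminus S}(v))$, so the minimality of $S^{*}$ (hence the appeal to a ``pushing'' step to make $S^{*}$ important) is not actually required---the bare negation of importance of $S$ as a $z-T$ separator already supplies an $S^{*}$ with $|S^{*}|\le|S|$ and strictly larger reach, which suffices; and what each approach buys is a matter of taste---the paper's transfer of $S'$ is shorter, while your boundary-of-union construction is more mechanical and generalizes readily to other uncrossing situations.
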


For the proof of Lemma~\ref{lemma-bounding-number-of-shadows-per-vertex}, we need to establish first the following:

\begin{lemma}
\label{lemma-forward-impsep-exact-back-shadow} If $S\in \IS$ and $v$ is in the exact reverse shadow of $S$, then $S$ is
an important $v-T$ separator.
\end{lemma}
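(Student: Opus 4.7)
My plan is a proof by contradiction. Fix $u\in V(G)\setminus T$ witnessing $S\in\IS$, so that $S$ is an important $u-T$ separator with $|S|\le p$, and let $R_u:=R^{+}_{G\setminus S}(u)$ and $R_v:=R^{+}_{G\setminus S}(v)$. Since $v$ lies in the exact reverse shadow of $S$, the set $S$ is already a minimal $v-T$ separator, so if $S$ fails to be an \emph{important} $v-T$ separator there must exist a $v-T$ separator $S^{*}$ with $|S^{*}|\le|S|$ and $R^{*}:=R^{+}_{G\setminus S^{*}}(v)\supsetneq R_v$, which I may also assume to be a minimal $v-T$ separator.

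From such an $S^{*}$ I would manufacture an analogous violation for $u$. Define
\[
\hat S \;:=\; S^{*}\setminus R_u .
\]
A short case analysis shows that $\hat S$ coincides with the outgoing vertex boundary of $R_u\cup R^{*}$, i.e., the set of vertices outside $R_u\cup R^{*}$ having at least one in-neighbor inside. This identification rests on three facts: by minimality, every $s\in S$ has an in-neighbor in $R_u$ and every $s^{*}\in S^{*}$ has an in-neighbor in $R^{*}$; and $S\subseteq R^{*}\cup S^{*}$, because any $s\in S$ has an in-neighbor in $R_v\subseteq R^{*}$, so $s\in R^{*}$ whenever $s\notin S^{*}$. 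Once $\hat S$ is identified as this boundary, it is immediate that $\hat S$ is a $u-T$ separator, because $u\in R_u\cup R^{*}$ while $T\cap(R_u\cup R^{*})=\emptyset$, and any surviving $u-T$ path in $G\setminus\hat S$ would have to exit $R_u\cup R^{*}$ through $\hat S$. The size bound $|\hat S|\le|S^{*}|\le|S|$ is trivial.

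The main obstacle is the strict improvement $R^{+}_{G\setminus\hat S}(u)\supsetneq R_u$. The inclusion $R_u\subseteq R^{+}_{G\setminus\hat S}(u)$ is easy: every $u$-reachability path in $G\setminus S$ stays inside $R_u$ and therefore avoids $\hat S$. For strictness, first note $S\not\subseteq S^{*}$: otherwise $|S|\le|S^{*}|\le|S|$ forces $S=S^{*}$, whence $R^{*}=R_v$, contradicting $R^{*}\supsetneq R_v$. Pick $s\in S\setminus S^{*}$; minimality of $S$ as a $u-T$ separator yields a $u-T$ path in $G\setminus(S\setminus\{s\})$ passing through $s$, whose predecessor of $s$ lies in $R_u$. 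Since $s\notin S^{*}\supseteq\hat S$, the prefix of this path up to $s$ is a $u\to s$ walk in $G\setminus\hat S$, so $s\in R^{+}_{G\setminus\hat S}(u)$; but $s\in S$ forces $s\notin R_u$. Thus $\hat S$ is a $u-T$ separator of size at most $|S|$ whose $u$-reachability strictly exceeds $R_u$, contradicting the importance of $S$ for $u$.
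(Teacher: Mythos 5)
Your proof is correct, but it takes a genuinely different route from the paper's. The paper keeps the hypothetical witness $S'$ (your $S^{*}$) intact and shows that this \emph{same} set already contradicts the importance of $S$ for the original witness vertex $w$ (your $u$): it first proves the auxiliary claim that $S'$ is an $(S\setminus S')$--$T$ separator, deduces from it that $S'$ is a $w$--$T$ separator, and then shows $R^{+}_{G\setminus S}(w)\subseteq R^{+}_{G\setminus S'}(w)$, with strictness coming from a vertex of $S\setminus S'$. You instead manufacture a new witness $\hat S=S^{*}\setminus R_u$, identify it with the out-boundary $N^{+}\bigl(R^{+}_{G\setminus S}(u)\cup R^{+}_{G\setminus S^{*}}(v)\bigr)$ (using $S\subseteq R^{*}\cup S^{*}$, which follows from minimality of $S$ as a $v$--$T$ separator), so that the separator property for $u$--$T$ is automatic from the boundary structure, the size bound is trivial since $\hat S\subseteq S^{*}$, and strictness comes from a vertex $s\in S\setminus S^{*}$ whose access path certifies $s\in R^{+}_{G\setminus\hat S}(u)\setminus R_u$. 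This is very much in the spirit of the union-of-reaches/boundary arguments the paper uses elsewhere (Lemma~\ref{lemma-family-of-x-y-separators} and the pushing lemma), and it makes the ``$\hat S$ is a $u$--$T$ separator'' step essentially mechanical; the cost is a few extra bookkeeping checks (that $\hat S$ avoids $u$, which holds since $u\in R_u$, and the boundary case analysis), whereas the paper's argument avoids any new construction and yields the slightly stronger conclusion that the original witness $S'$ itself serves against $w$. Your reductions are all justified: taking $S^{*}$ minimal is harmless since passing to a minimal subset preserves $|S^{*}|\le|S|$ and can only enlarge $R^{*}$, and the needed direction of the boundary identification ($N^{+}(R_u\cup R^{*})\subseteq\hat S$) does not even require that minimality.
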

\begin{proof}
  Let $w$ be the witness that $S$ is in $\IS$, i.e., $S$ is an
  important $w-T$ separator in $G$. Let $v$ be any vertex in the exact
  reverse shadow of $S$, which means that $S$ is a minimal $v-T$
  separator in $G$. Suppose that $S$ is not an important $v-T$
  separator. Then there exists a $v-T$ separator $S'$ such that
  $|S'|\leq |S|$ and $R^{+}_{G\setminus S}(v)\subset R^{+}_{G\setminus
    S'}(v)$. We will arrive to a contradiction by showing that
  $R^{+}_{G\setminus S}(w)\subset R^{+}_{G\setminus S'}(w)$, i.e., $S$
  is not an important $w-T$ separator.

First, we claim that $S'$ is an $(S\setminus S')-T$ separator. Suppose that there is a path $P$ from some $x\in S\setminus S'$
to $T$ that is disjoint from $S'$. As $S$ is a minimal $v-T$ separator, there is a path $Q$ from $v$ to $x$ whose internal
vertices are disjoint from $S$. Furthermore, $R^{+}_{G\setminus S}(v)\subset R^{+}_{G\setminus S'}(v)$ implies that  the
internal vertices of $Q$ are disjoint from $S'$ as well. Therefore, concatenating $Q$ and $P$ gives a path from $v$ to $T$
that is disjoint from $S'$, contradicting the fact that $S'$ is a $v-T$ separator.

We show that $S'$ is a $w-T$ separator and its existence contradicts the assumption that $S$ is an important $w-T$ separator.
First we show that $S'$ is a $w-T$ separator. Suppose that there is a $w-T$ path $P$ disjoint from $S'$. Path $P$ has to go
through a vertex $y\in S\setminus S'$ (as $S$ is a $w-T$ separator). Thus by the previous claim, the subpath of $P$ from $y$
to $T$ has to contain a vertex of $S'$, a contradiction.

Finally, we show that $R^{+}_{G\setminus S}(w)\subseteq R^{+}_{G\setminus S'}(w)$. As $S\neq S'$ and $|S'|\le |S|$, this will
contradict the assumption that $S$ is an important $w-T$ separator.
 Suppose that there is a vertex $z \in R^{+}_{G\setminus
  S}(w)\setminus R^{+}_{G\setminus S'}(w)$ and consider a  $w-z$
path that is fully contained in $R^{+}_{G\setminus S}(v)$, i.e., disjoint from $S$.  As $z\not \in R^{+}_{G\setminus S'}(v)$,
path $Q$ contains a vertex $q\in S'\setminus S$.  Since $S'$ is a minimal $v-T$ separator, there is a $v-T$ path that
intersects $S'$ only in $q$. Let $P$ be the subpath of this path from $q$ to $T$. If $P$ contains a vertex $r\in S$, then the
subpath of $P$ from $r$ to $T$ contains no vertex of $S'$ (as $z\neq r$ is the only vertex of $S'$ on $P$), contradicting our
earlier claim that $S'$ is a $(S\setminus S')-T$ separator. Thus $P$ is disjoint from $S$, and hence the concatenation of the
subpath of $Q$ from $w$ to $q$ and the path $P$ is a $w-T$ path disjoint from $S$, a contradiction.
\end{proof}

Lemma~\ref{lemma-bounding-number-of-shadows-per-vertex} easily follows from
Lemma~\ref{lemma-forward-impsep-exact-back-shadow}.  Let $J$ be a member of $\IS$ such that $z$ is in the exact reverse shadow of $J$.
By Lemma~\ref{lemma-forward-impsep-exact-back-shadow}, $J$ is an important $z-T$ separator. By Lemma~\ref{number-of-imp-sep},
there are at most $4^{p}$ important $z-T$ separators of size at most $p$ and so $z$ belongs to at most $4^{p}$ exact reverse
shadows.


\begin{remark}\label{remark:imp-of-exact-shadows}
\emph{It is crucial to distinguish between ``reverse shadow'' and ``exact reverse shadow'':
Lemma~\ref{lemma-forward-impsep-exact-back-shadow} (and hence Lemma~\ref{lemma-bounding-number-of-shadows-per-vertex}) does
not remain true if we remove the word ``exact.'' Consider the following example (see Figure~\ref{fig:imp-of-exact-new}). Let
$a_1$, $\dots$, $a_r$ be vertices such that there is an edge going from every $a_i$ to every vertex of $T=\{t_1, t_2, \ldots,
t_k\}$. For every $1\le i \le r$, let $b_i$ be a vertex with an edge going from $b_i$ to $a_i$. For every $1\le i < j \le r$,
let $c_{i,j}$ be a vertex with two edges going from $c_{i,j}$ to $a_i$ and $a_j$. Then every set $\{a_i,a_j\}$ is in $\IS$,
since it is an important $c_{i,j}-T$ separator. This means that every $b_i$ is in the reverse shadow of $r-1$ members of $\IS$, namely the
sets $\{a_j, ai_i\}$ for $1\leq i\neq j\leq r$. However, $b_i$ is in the {\em exact} reverse shadow of exactly one member of $\IS$, the set $\{a_i\}$.}
\end{remark}

\begin{figure}[t]
\centering
\includegraphics[width=7in]{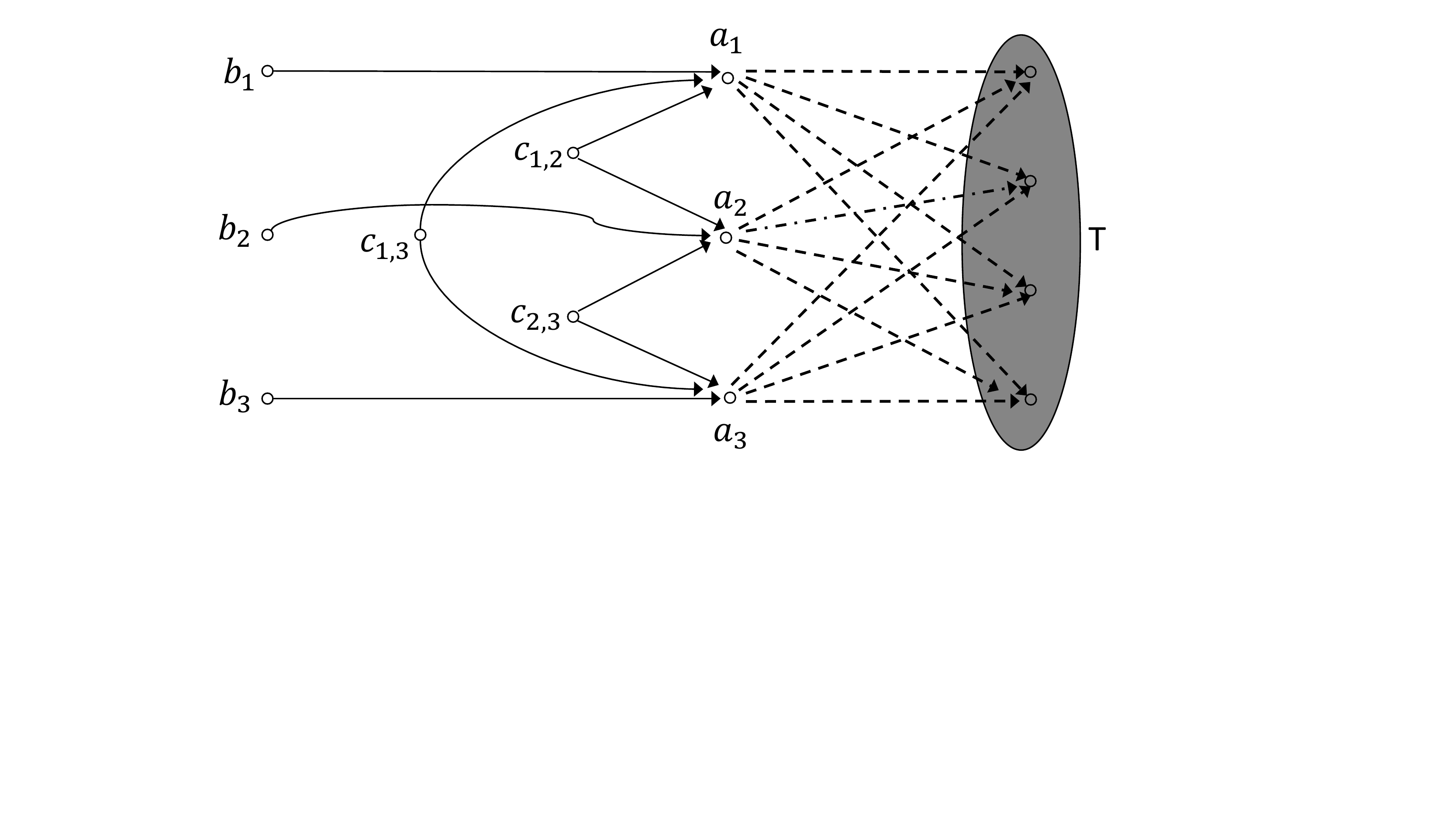}
\vspace{-50mm}
\caption{An illustration of Remark~\ref{remark:imp-of-exact-shadows} in the special case when $k=4$ and $r=3$. \label{fig:imp-of-exact-new}}
\end{figure}

%

\subsection{Random sampling}
\label{sec:random-sampling-1}

In this section, we adapt the random sampling of \cite{marx-stoc-2011} to directed graphs. We try to present it in a
self-contained way that might be useful for future applications.

Roughly speaking, we want to select a random set $Z$ such that for every pair $(S,Y)$ where $Y$ is in the reverse shadow of $S$,
the probability that $Z$ is disjoint from $S$ but contains $Y$ can be bounded from below. We can guarantee such a lower bound
only if $(S,Y)$ satisfies two conditions. First, it is not enough that $Y$ is in the shadow of $S$ (or in other words, $S$ is
an $Y-T$ separator), but $S$ should contain important separators separating the vertices of $Y$ from $T$ (see
Theorem~\ref{th:random-sampling} for the exact statement). Second, a vertex of $S$ cannot be in the reverse shadow of other
vertices of $S$, this is expressed by the following technical definition:
\begin{definition}{\bf (thin)}
\label{defn-thin-set} Let $G$ be a directed graph and $T\subseteq V(G)$ a set of terminals. We say that a set $S\subseteq
V(G)$ is \emph{thin} in $G$ if there is no $v\in S$ such that $v$ belongs to the reverse shadow of $S\setminus v$ with respect
to $T$.
\end{definition}

Refer to Figure~\ref{fig:important-separators}. The sets $S$ is thin because for every $1\leq i\leq 3$ the vertex $w_{i}$ does
not belong to the reverse shadow of the set $S\setminus \{w_i\}$. However the set $S\cup S'$ is not thin since $(S\cup
S')\setminus \{w_1\}$ is a $w_{1}-T$ separator, and hence $w_1$ belongs to the reverse shadow of $(S\cup S')\setminus
\{w_1\}$.

\begin{theorem}{\bf (random sampling)}\label{th:random-sampling}
  There is an algorithm $\randset(G,T,p)$ that produces a random set
  $Z\subseteq V(G)\setminus T$ in time $O^*(4^p)$ such that the
  following holds. Let $S$ be a \emph{thin} set with $|S|\le p$, and let $Y$ be a set such that for every $v\in Y$ there is an important $v-T$ separator $S'\subseteq S$.
  For every such pair $(S,Y)$, the probability that the following two events both occur is at least $2^{-2^{O(p)}}$:
\begin{enumerate}
\item $S\cap Z=\emptyset$, and
\item $Y\subseteq Z$.
\end{enumerate}
\end{theorem}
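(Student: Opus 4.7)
The algorithm $\randset(G,T,p)$ will first enumerate the collection $\IS$ of important separators of size at most $p$ (this takes $O^*(4^p)$ time by Remark~\ref{remark:number-of-impseps}, and $|\IS| \leq 4^p\cdot|V(G)|$); then it will include each member of $\IS$ independently in a random family $\mathcal{H}$ with probability $q=4^{-(p+1)}$; finally it will output $Z := \bigcup_{J \in \mathcal{H}} \widehat{r}_{G,T}(J)$, the union of the \emph{exact} reverse shadows of the selected important separators. The running time is dominated by enumerating $\IS$ and computing the shadows for each selected separator, giving $O^*(4^p)$.

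Now fix a thin $S$ with $|S|\leq p$ and a set $Y$ such that for every $v\in Y$ there is an important $v-T$ separator $S'_v \subseteq S$. First, since $|S'_v|\leq p$, we have $S'_v\in\IS$, and since important separators are minimal, $v$ lies in the exact reverse shadow of $S'_v$. Hence if we let $\mathcal{A} := \{S'_v : v \in Y\}\subseteq \IS$, then selecting every member of $\mathcal{A}$ guarantees $Y\subseteq Z$. Crucially, every element of $\mathcal{A}$ is a subset of $S$, so $|\mathcal{A}|\leq 2^{|S|} \leq 2^p$. Second, define $\mathcal{B}:=\{J\in\IS : \widehat{r}_{G,T}(J)\cap S \neq \emptyset\}$; if no member of $\mathcal{B}$ is selected, then $S\cap Z=\emptyset$. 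By Lemma~\ref{lemma-bounding-number-of-shadows-per-vertex}, each $s\in S$ lies in the exact reverse shadow of at most $4^p$ members of $\IS$, and so $|\mathcal{B}|\leq p\cdot 4^p$.

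The main subtlety is ensuring $\mathcal{A}\cap\mathcal{B}=\emptyset$, so that the events ``all of $\mathcal{A}$ is selected'' and ``none of $\mathcal{B}$ is selected'' depend on disjoint independent coin flips. This is where thinness enters: if some $S'_v\in\mathcal{A}$ contained a vertex $w\in S$ in its exact reverse shadow, then since $S'_v\subseteq S\setminus\{w\}$, the set $S\setminus\{w\}$ would be a $w-T$ separator, contradicting thinness of $S$. Given $\mathcal{A}\cap\mathcal{B}=\emptyset$, independence yields
\[
\Pr[\mathcal{A}\subseteq \mathcal{H} \text{ and } \mathcal{B}\cap\mathcal{H}=\emptyset] \;\geq\; q^{|\mathcal{A}|}\,(1-q)^{|\mathcal{B}|} \;\geq\; q^{2^p}\,(1-q)^{p\cdot 4^p}.
\]
With $q=4^{-(p+1)}$, the first factor is $2^{-O(p\cdot 2^p)} = 2^{-2^{O(p)}}$ and the second is at least $\exp(-O(p))$, so the product is $2^{-2^{O(p)}}$, as required. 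The only delicate point in the whole argument is the application of thinness to force $\mathcal{A}$ and $\mathcal{B}$ to be disjoint; everything else follows by counting important separators and a standard independent-sampling calculation.
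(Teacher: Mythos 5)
Your proof is correct and follows essentially the same route as the paper: enumerate $\IS$, take a random union of exact reverse shadows, and compare the family $\mathcal{A}$ of separators contained in $S$ against the family $\mathcal{B}$ of separators whose exact reverse shadows hit $S$, using thinness for disjointness and Lemma~\ref{lemma-bounding-number-of-shadows-per-vertex} to bound $|\mathcal{B}|\le p\cdot 4^p$. The only cosmetic differences are that the paper samples each exact reverse shadow with probability $1/2$ rather than each member of $\IS$ with probability $4^{-(p+1)}$; both choices give the stated $2^{-2^{O(p)}}$ bound.
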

\begin{proof}
We claim that Algorithm~\ref{alg:sampling} for $\randset(G,T,p)$  satisfies the requirements.
\begin{algorithm}
\caption{$\randset(G,T,p)$} \label{alg:randset}

\begin{algorithmic}[1]\label{alg:sampling}

    \STATE Enumerate every member of $\IS$. \hspace{64mm}\COMMENT{See Remark~\ref{remark:number-of-impseps}}
        \STATE Let $\X$ be the set of exact reverse shadows of members of $\IS$. 
        \STATE Take a random $\mathcal{X}'\subseteq X$ by choosing each element with probability $\frac{1}{2}$, independently
                at random.
        \STATE Let $Z$ be the union of the exact reverse shadows in $\mathcal{X}'$.
        \STATE \textbf{return} $Z$
\end{algorithmic}
\end{algorithm}
The algorithm $\randset(G,T,p)$ first enumerates the collection $\IS$; let $\X$ be the set of all exact reverse
shadows of these sets. By Remark~\ref{remark:number-of-impseps}, the size of $\X$ is $O^*(4^p)$ and it can be constructed
in time $O^*(4^p)$.
Now we show that the set $Z$ satisfies the requirement of the theorem.

Fix a pair $(S,Y)$ as in the statement of the theorem.  Let $X_1,X_2,\ldots,X_d\in \X$ be the exact reverse shadows of every
member of $\IS$ that is a subset of $S$. As $|S|\le p$, we have $d\le 2^p$.  By assumption that $S$ is \emph{thin}, we have $X_j\cap
S=\emptyset$ for every $j\in[d]$.  Now consider the following events:
\begin{enumerate}
\item[(E1)]$Z\cap S= \emptyset$
\item[(E2)] $X_{j}\subseteq Z$ for every $j\in [d]$
\end{enumerate}
Note that (E2) implies that $Y\subseteq Z$.  Our goal is to show that both events (E1) and (E2) occur with probability
$2^{-2^{O(p)}}$.

Let $A=\{X_1,X_2,\ldots,X_d\}$ and $B=\{X\in \X\ |\ X\cap S \neq
\emptyset \}$. By Lemma
\ref{lemma-bounding-number-of-shadows-per-vertex}, each vertex of $S$
is contained in the exact reverse shadow of at most $4^{p}$ members of
$\IS$. Thus $|B|\leq |S|\cdot 4^{p}\leq p\cdot 4^{p}$. If no exact
reverse shadow from $B$ is selected, then event (E1) holds. If every
exact reverse shadow from $A$ is selected, then event (E2) holds. Thus
the probability that both (E1) and (E2) occur is bounded from below by
the probability of the event that every element from $A$ is selected
and no element from $B$ is selected. Note that $A$ and $B$ are
disjoint: $A$ contains only sets disjoint from $S$, while $B$ contains
only sets intersecting $S$.  Therefore, the two events are independent
and the probability that both events occur is at least
\[
\Big(\frac{1}{2}\Big)^{2^{p}}\Big(1-\frac{1}{2}\Big)^{p\cdot 4^{p}} = 2^{-2^{O(p)}}
\]
\end{proof}

\subsection{Derandomization}
\label{subsection-derandomization}

We now derandomize the process of choosing exact reverse shadows in Theorem~\ref{th:random-sampling} using the technique of
\emph{splitters}. An $(n,r,r^2)$-splitter is a family of functions from $[n]\rightarrow [r^2]$ such that $\forall \ M\subseteq
[n]$ with $|M|=r$, at least one of the functions in the family is injective on $M$. Naor et al.~\cite{aravind-focs-1995} give
an explicit construction of an $(n,r,r^2)$-splitter of size $O(r^{6}\cdot \log r\cdot \log n)$.

\begin{theorem}{\bf (deterministic sampling)}\label{th:det-sampling}
  There is an algorithm $\randset(G,T,p)$ that produces $t=2^{2^{O(p)}}$ subsets $Z_1$, $\dots$, $Z_t$ of $V(G)\setminus T$ in time $O^*(2^{2^{O(p)}})$ such that the
  following holds. Let $S$ be a \emph{thin} set with $|S|\le p$, and let $Y$ be a set such that for every $v\in Y$ there is an important $v-T$ separator $S'\subseteq S$.
  For every such pair $(S,Y)$, there is at least one $1\le i \le t$ with
\begin{enumerate}
\item $S\cap Z_i=\emptyset$, and
\item $Y\subseteq Z_i$.
\end{enumerate}
\end{theorem}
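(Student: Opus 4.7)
The plan is to derandomize Theorem~\ref{th:random-sampling} by replacing the independent coin flips on the collection $\X$ of exact reverse shadows with an explicit enumeration produced by a splitter. Let me first inspect exactly what the randomized proof used about the distribution. Fixing a pair $(S,Y)$ and letting $A=\{X_{1},\dots,X_{d}\}\subseteq\X$ be the exact reverse shadows of important separators contained in $S$ and $B=\{X\in\X:X\cap S\neq\emptyset\}$, the two events (E1) and (E2) both occurred whenever \emph{every} element of $A$ was selected and \emph{no} element of $B$ was selected. Crucially, $A\cap B=\emptyset$, $|A|\le 2^{p}$, and $|B|\le p\cdot 4^{p}$, so $r:=|A|+|B|\le 2^{O(p)}$. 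Thus derandomization reduces to the following combinatorial task: given the universe $\X$ (of size $n'=O^{*}(4^{p})$), enumerate a family of subsets $\mathcal{X}'_{1},\dots,\mathcal{X}'_{t}\subseteq\X$ such that for every disjoint pair $(A,B)$ of subsets of $\X$ with $|A\cup B|\le r$, some $\mathcal{X}'_{i}$ contains $A$ and is disjoint from $B$.

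The plan is to use the Naor et al.\ construction of an $(n',r,r^{2})$-splitter $\mathcal{F}$ of size $O(r^{6}\log r\log n')$. For each $f\in\mathcal{F}$ and each $W\subseteq [r^{2}]$ I define
\[
\mathcal{X}'_{f,W}\ :=\ \{X\in\X\ :\ f(X)\in W\},\qquad Z_{f,W}\ :=\ \bigcup_{X\in\mathcal{X}'_{f,W}} X.
\]
The output of the deterministic algorithm is the list $(Z_{f,W})_{f\in\mathcal{F},\, W\subseteq [r^{2}]}$. Enumerating $\X$ takes $O^{*}(4^{p})$ time by Remark~\ref{remark:number-of-impseps}, so the overall running time is dominated by the number of sets emitted.

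For correctness I would repeat the argument of Theorem~\ref{th:random-sampling} to produce $A$ and $B$ from the pair $(S,Y)$ (using that $S$ is thin so $A\cap B=\emptyset$, and that each vertex of $S$ lies in at most $4^{p}$ exact reverse shadows by Lemma~\ref{lemma-bounding-number-of-shadows-per-vertex} so $|B|\le p\cdot 4^{p}$). Since $|A\cup B|\le r$, the splitter property gives an $f\in\mathcal{F}$ injective on $A\cup B$, so $f(A)$ and $f(B)$ are disjoint subsets of $[r^{2}]$. Choosing $W=f(A)$, the set $\mathcal{X}'_{f,W}$ contains all of $A$ (injectivity) and is disjoint from $B$; then $Z_{f,W}\supseteq\bigcup_{j}X_{j}\supseteq Y$ and $Z_{f,W}\cap S=\emptyset$, exactly as in the randomized analysis.

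For the count I would plug in $r=2^{O(p)}$: the splitter has size $|\mathcal{F}|=2^{O(p)}\cdot\log n$, and the number of candidate $W$'s is $2^{r^{2}}=2^{2^{O(p)}}$, giving $t=|\mathcal{F}|\cdot 2^{r^{2}}=2^{2^{O(p)}}$ sets, computable in $O^{*}(2^{2^{O(p)}})$ time. I do not expect a real obstacle here: the probabilistic argument already isolated two small disjoint "must be in / must be out" subfamilies of $\X$, and splitters are tailored precisely to handle small disjoint patterns inside a larger universe; the only care needed is to verify that the bound $r=2^{O(p)}$ is preserved and that the candidate $W$-sets can be enumerated without inflating the running time beyond $2^{2^{O(p)}}$.
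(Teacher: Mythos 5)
Your proposal is correct and follows essentially the same route as the paper: enumerate $\X$, apply the Naor et al.\ splitter on the universe of exact reverse shadows with pattern size $r=|A|+|B|\le 2^{p}+p\cdot 4^{p}$, and for each (function, subset-of-range) pair output the union of the selected shadows, with the injective function and the image of $A$ certifying the required $Z_i$. The only cosmetic difference is that you range over all $W\subseteq[r^{2}]$ while the paper restricts to subsets of size $|A|$; both give $t=2^{2^{O(p)}}$ and the same running time.
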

\begin{proof}
In the proof of Theorem~\ref{th:random-sampling}, a random subset of a universe $\X$ of size $n_0=|\X|\le 4^p\cdot |V(G)|$ is
selected. We argued that for a fixed $S$, there is a collection $A\subseteq \X$ of $a\le 2^p$ sets and a collection
$B\subseteq \X$ of $b \le p\cdot 4^p$ sets such that if every set in $A$ is selected and no set in $B$ is selected, then
events (E1) and (E2) hold. Instead of the selecting a random subset, we construct several subsets such that at least one of
them satisfies both (E1) and (E2). Each subset is defined by a pair $(h,H)$, where $h$ is a function in an
$(n_0,a+b,(a+b)^2)$-splitter family and $H$ is a subset of $[(a+b)^2]$ of size $a$ (there are $\dbinom{(a+b)^2}{a} =
\dbinom{(2^p+p4^{p})^2}{2^p} = 2^{2^{O(p)}}$ such sets $H$). For a particular choice of $h$ and $H$, we select those exact
shadows $S\in \X$ into $\X'$ for which $h(S)\in H$. The size of the splitter family is $O\Big((a+b)^{6}\cdot \log(a+b)\cdot
\log(n_0)\Big) = 2^{O(p)}\cdot \log|V(G)|$ and the number of possibilities for $H$ is $2^{2^{O(p)}}$. Therefore, we construct
$2^{2^{O(p)}}\cdot\log |V(G)|$ subsets of $\X$.

By the definition of the splitter, there is a function $h$ that is injective on $A\cup B$, and there is a subset $H$ such that
$h(L)\in H$ for every set $L$ in $A$ and $h(M)\not\in H$ for every set $M$ in $B$. For such an $h$ and $H$, the selection will
ensure that (E1) and (E2) hold. Thus at least one of the constructed subsets has the required properties, which is what we
wanted to show.
\end{proof}

\section{Proof of Lemma~\ref{lem:correctalg}}
\label{sec:analysis-algorithm} The goal of this section is to complete the proof of correctness of Algorithm~\ref{alg:algo} by
proving Lemma~\ref{lem:correctalg}. Note that Lemma~\ref{lem:correctalg-part-one} was proved in Section~\ref{overview}.

%

To prove Lemma~\ref{lem:correctalg}, we show that if $I$ is a yes-instance, then there exists a solution $S^*$ for $I_1$ that
remains a solution of the undirected $(G^*_3,T,p)$ as well with probability at least $2^{-2^{O(p)}}$.
Suppose that for some solution $S^*$, the following two properties hold:
\begin{enumerate}
\item $Z\cap S^*=\emptyset$ and
\item $r_{G_1,T}(S^*)\bigcup f_{G_1,T}(S^*)\subseteq Z$.
\end{enumerate}
Then Lemma~\ref{reduced-instance-soln}(2) implies that $S^*$ is a shadowless solution of $I/Z=(G_3,T,p)$. It follows by
Lemma~\ref{lemma-redn-to-undirected-case} that $S^*$ is a solution of the undirected instance $(G^*_3,T,p)$ as well. Thus our
goal is to prove the existence of a solution $S^*$ for which we can give a lower bound on the probability that these two
events occur.

For choosing $S^*$, we need the following definition:
\begin{definition}{\bf (shadow-maximal solution)}
\label{defn-terminal-minimal-solutions} Let $(G,T,p)$ be a given instance of \dmway. An inclusion-wise minimal solution $S$ is
called \emph{shadow-maximal} if $r_{G,T}(S)\bigcup f_{G,T}(S) \bigcup S$ is inclusion-wise maximal among all minimal
solutions.
\end{definition}

For the rest of the proof, let us fix $S^*$ to be a shadow-maximal solution of instance $I_1=(G_1,T,p)$ such that
$|r_{G_1,T}(S^*)|$ is maximum possible among all shadow-maximal solutions. We now give a lower bound on the probability that $Z\cap
S^*=\emptyset$ and $r_{G_1,T}(S^*)\bigcup f_{G_1,T}(S^*)\subseteq Z$. More precisely, we give a lower bound on the probability that all
of the following four events occur:
\begin{enumerate}
\item $Z_1\cap S^*=\emptyset$,
\item $r_{G_1,T}(S^*)\subseteq Z_1$,
\item $Z_2\cap S^*=\emptyset$, and
\item $f_{G_1,T}(S^*)\subseteq Z_2$.
\end{enumerate}
That is, the first random selection takes care of the reverse shadow, the second takes care of the forward shadow, and none of
$Z_1$ or $Z_2$ hits $S^*$. Note that it is somewhat counterintuitive that we choose an $S^*$ for which the shadow is large:
intuitively, it seems that the larger the shadow is, the less likely that it is fully covered by $Z$. However, we need this
maximality property in order to give a lower bound on the probability that $Z\cap S^*=\emptyset$.

We want to invoke Theorem~\ref{th:random-sampling} to obtain a lower bound on the probability that $Z_1$ contains $Y=r_{G_1,T}(S^*)$ and
$Z_1\cap S^*=\emptyset$. First, we need to ensure that $S^*$ is a \emph{thin} set, but this follows easily from the fact that
$S^*$ is a minimal solution:
\begin{lemma}
\label{lemma:minimal-shadow-cover-sol} If $S$ is a minimal solution for a \dmway instance $(G,T,p)$, then no $v\in S$ is in
the reverse shadow of some $S'\subseteq S\setminus \{v\}$.
\end{lemma}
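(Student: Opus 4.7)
The plan is to prove the contrapositive flavor: if some $v\in S$ lies in the reverse shadow of some $S'\subseteq S\setminus\{v\}$, then $S\setminus\{v\}$ is already a solution, contradicting the minimality of $S$.

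More concretely, assume for contradiction that there exist $v\in S$ and $S'\subseteq S\setminus\{v\}$ such that $S'$ is a $v$–$T$ separator in $G$. I will show that $S\setminus\{v\}$ is a vertex multiway cut of $(G,T)$. Suppose not; then there are distinct terminals $t_i,t_j\in T$ and a directed path $P$ from $t_i$ to $t_j$ in $G\setminus(S\setminus\{v\})$. Since $S$ itself is a solution, $P$ must intersect $S$, and the only vertex of $S$ available to $P$ is $v$ (all others are removed). So $v$ lies on $P$, and the suffix of $P$ from $v$ to $t_j$ is a directed path from $v$ to a terminal in $T$ that avoids $S\setminus\{v\}$.

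Now comes the key observation: since $S'\subseteq S\setminus\{v\}$, this suffix also avoids $S'$. Hence there is a $v\to T$ path in $G\setminus S'$, contradicting the assumption that $S'$ is a $v$–$T$ separator (equivalently, that $v$ belongs to the reverse shadow of $S'$ with respect to $T$). Therefore $S\setminus\{v\}$ must in fact separate every pair of terminals, which contradicts the minimality of $S$ as a solution.

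A couple of small bookkeeping checks are needed to make this rigorous, and these are where one must be careful rather than where there is real difficulty. First, by Definition~\ref{defn-sep}, a $v$–$T$ separator is required to be disjoint from $\{v\}\cup T$; this is automatic here because $v\notin S'$ (since $S'\subseteq S\setminus\{v\}$) and $T\cap S=\emptyset$ (solutions of \dmway\ are disjoint from the terminal set $T\subseteq V^{\infty}(G)$). Second, one should verify that the suffix of $P$ from $v$ to $t_j$ is a genuine $v\to T$ path in the sense required: it is a directed walk from $v$ ending at a vertex of $T$ in $G\setminus S'$, which is precisely what the definition of reverse shadow forbids. No step in this argument requires new machinery beyond the definitions of minimality, separator, and reverse shadow already recorded in Section~\ref{subsection:separators-and-shadows}, so I expect no genuine obstacle — the only thing to watch is orienting the subpath correctly (from $v$ to $T$, matching the direction in the definition of reverse shadow).
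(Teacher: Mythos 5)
Your proof is correct and follows essentially the same route as the paper: assuming $v$ lies in the reverse shadow of some $S'\subseteq S\setminus\{v\}$, you show $S\setminus\{v\}$ is still a solution by taking a terminal-to-terminal path meeting $S$ only in $v$ and observing that its suffix from $v$ to the second terminal is a $v$--$T$ path avoiding $S'$, contradicting the separator property and hence minimality. The extra bookkeeping you include (disjointness of $S'$ from $\{v\}\cup T$, orientation of the suffix) is consistent with, and only slightly more explicit than, the paper's argument.
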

\begin{proof}
  We claim that $S\setminus \{v\}$ is also a solution, contradicting
  the minimality of $S$. Suppose that there is a path $P$ from $t_1\in
  T$ to $t_2\in T$, $t_1\neq t_2$ that intersects $S$ only in
  $v$. Consider the subpath of $P$ from $v$ to $t_2$. As $v$ is in
  $r(S')$, the set $S'$ is a $v-T$ separator. Thus $P$ goes through
  $S'\subseteq S\setminus \{v\}$, a contradiction.
\end{proof}

More importantly, if we want to use Theorem~\ref{th:random-sampling} with $Y=r_{G_1,T}(S^*)$, then we have to make sure that
for every vertex $v$ of $r_{G_1,T}(S^*)$, there is an important $v-T$ separator that is a subset of $S^*$. The ``pushing
argument'' of Lemma~\ref{thm-pushing} shows that if this is not true for some $v$, then we can modify the solution in a way
that increases the size of the reverse shadow. The choice of $S^*$ ensures that no such modification is possible, thus $S^*$
contains an important separator for every $v$.

\begin{lemma}
{\bf (pushing)} \label{thm-pushing} Let $S$ be a solution of a \dmway instance $(G,T,p)$. For every $v\in r(S)$, either there
is an $S_v\subseteq S$ which is an important $v-T$ separator, or there is a solution $S'$ such that
\begin{enumerate}
\item $|S'|\le |S|$,
\item $r(S)\subset r(S')$,
\item $(r(S)\bigcup f(S) \bigcup S) \subseteq (r(S')\bigcup f(S') \bigcup S')$.
\end{enumerate}
\end{lemma}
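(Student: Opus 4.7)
The plan is to construct $S'$ by locally swapping out a minimal $v-T$ separator contained in $S$ for a more ``pushed'' one. If some subset of $S$ is an important $v-T$ separator, we are in the first case of the lemma. Otherwise, pick any minimal $v-T$ separator $S_v \subseteq S$ (which exists by greedy removal from $S$, itself a $v-T$ separator because $v \in r(S)$); by hypothesis $S_v$ is not important, so Definition~\ref{defn-imp-sep} furnishes a $v-T$ separator $S_v^{*}$ with $|S_v^{*}| \le |S_v|$ and $R^{+}_{G \setminus S_v}(v) \subsetneq R^{+}_{G \setminus S_v^{*}}(v)$. I set $S' := (S \setminus S_v) \cup S_v^{*}$. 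Condition~(1) is immediate from $|S_v^{*}| \le |S_v|$; also $S_v \not\subseteq S_v^{*}$, since otherwise $|S_v| \le |S_v^{*}| \le |S_v|$ would force $S_v = S_v^{*}$ and contradict the strict inclusion of reachability sets, so $S \setminus S' = S_v \setminus S_v^{*}$ is nonempty.

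The workhorse of the proof is the reachability fact $S_v \setminus S_v^{*} \subseteq R^{+}_{G \setminus S_v^{*}}(v)$: for any $x \in S_v \setminus S_v^{*}$, minimality of $S_v$ supplies a $v \to x$ path $Q$ whose internal vertices lie outside $S_v$ and hence in $R^{+}_{G \setminus S_v}(v) \subseteq R^{+}_{G \setminus S_v^{*}}(v)$, so they avoid $S_v^{*}$; combined with $x \notin S_v^{*}$, the path $Q$ itself avoids $S_v^{*}$. Using $Q$, I then argue by three near-identical contradictions. First, if $P$ were a $t_i \to t_j$ path in $G \setminus S'$ with $i \ne j$, it would have to cross $S \setminus S' = S_v \setminus S_v^{*}$ at some $x$, and concatenating $Q$ with the $x \to t_j$ tail of $P$ (which avoids $S' \supseteq S_v^{*}$) would give a $v \to T$ walk avoiding $S_v^{*}$, contradicting that $S_v^{*}$ is a $v-T$ separator; thus $S'$ is a solution. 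Second, the same argument with a $w \to T$ path in place of $P$ (for any $w \in r(S)$) shows $r(S) \subseteq r(S')$, and taking $w = x \in S_v \setminus S_v^{*}$ places $x$ in $r(S') \setminus r(S)$ (in $r(S')$ by the same concatenation, not in $r(S)$ because $r(S) \cap S = \emptyset$), yielding condition~(2). For condition~(3), the cases $w \in r(S)$ and $w \in S$ both reduce to what is already proved, namely $r(S) \subseteq r(S')$ and $S \setminus S' \subseteq r(S')$.

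The remaining case of condition~(3) is $w \in f(S)$. Suppose toward contradiction $w \notin S' \cup r(S') \cup f(S')$; then there exist a $T \to w$ path $P_1$ and a $w \to T$ path $P_2$ in $G \setminus S'$. Since $S$ is a $T-w$ separator, $P_1$ must cross $S \setminus S'$ at some $x \in S_v \setminus S_v^{*}$. Splicing together the witness $Q$ from $v$ to $x$, the $x \to w$ tail of $P_1$, and all of $P_2$ produces a $v \to T$ walk in which every edge avoids $S_v^{*}$ (the tail of $P_1$ and all of $P_2$ avoid $S' \supseteq S_v^{*}$, while $Q$ avoids $S_v^{*}$ by the reachability fact), again contradicting that $S_v^{*}$ is a $v-T$ separator. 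I expect this $f(S)$ case to be the main obstacle: forward-shadow vertices do not directly produce $v \to T$ walks, so one must splice three fragments and carefully track which separator each avoids; by contrast, every other step is essentially bookkeeping around the single reachability fact $S_v \setminus S_v^{*} \subseteq R^{+}_{G \setminus S_v^{*}}(v)$.
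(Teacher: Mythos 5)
Your proof is correct and is essentially the paper's own argument: you perform the same exchange $S'=(S\setminus S_v)\cup S_v^{*}$ using a witness of larger reach supplied by the definition of importance, and you obtain all three properties by the same splicing of a $v\to x$ path (for $x\in S_v\setminus S_v^{*}$, guaranteed by minimality of $S_v$) with path tails that avoid $S'$. The deviations---taking an arbitrary minimal $v-T$ separator $S_v\subseteq S$ and an arbitrary (not necessarily important) witness, rather than the paper's ``first layer'' separator $S_v\subseteq S_0$ and an important $S'_v$, and deriving the contradiction directly with $S_v^{*}$ being a $v-T$ separator instead of with $S'$---are immaterial, since both arguments use only the size bound and the reachability inclusion of the witness.
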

\begin{proof}
  Let $S_0\subseteq S$ be the subset of $S$ reachable from $v$
  without going through any other vertices of $S$. Then $S_0$ is
  clearly a $v-T$ separator. Let $S_v$ be the minimal $v-T$ separator
  contained in $S_0$.  If $S_v$ is an important $v-T$ separator, then
  we are done as $S$ itself contains $S_v$. Otherwise, there exists an
  important $v-T$ separator $S'_v$, i.e., $|S'_v|\leq |S_v|$ and
  $R^{+}_{G\setminus S_v}(v)\subset R^{+}_{G\setminus
    S'_v}(v)$. Now we show that $S' = (S\setminus S_v)\bigcup S'_v$ is a
  solution for the multiway cut instance. Note that $S'_v\subseteq
  S'$ and $|S'|\leq |S|$.

  First we claim that $r(S)\bigcup (S\setminus S')\subseteq r(S')$. Suppose
  that there is a path $P$ from $\beta$ to $T$ in $G\setminus S'$ for
  some $\beta\in r(S)\bigcup(S\setminus S')$. If $\beta\in r(S)$, then
  path $P$ has to go through a vertex $\beta'\in S$. As $\beta'$ is
  not in $S'$, it has to be in $S\setminus S'$. Therefore, by
  replacing $\beta$ with $\beta'$, we can assume in the following that
  $\beta\in S\setminus S'\subseteq S_v\setminus S'_v$.  By minimality of $S_v$, every vertex
  of $S_v\subseteq S_0$ has an incoming edge from some vertex in
  $R^{+}_{G\setminus S}(v)$. This means that there is a vertex $\alpha
  \in R^{+}_{G\setminus S}(v)$ such that $(\alpha,\beta)\in
  E(G)$. Since $R^{+}_{G\setminus S}(v)\subseteq R^{+}_{G\setminus
    S'}(v)$, we have $\alpha \in R^{+}_{G\setminus S'}(v)$, implying
  that there is a $v\rightarrow \alpha$ path in $G\setminus S'$. The
  edge $\alpha \rightarrow \beta$ also survives in $G\setminus S'$ as
  $\alpha \in R^{+}_{G\setminus S'}(v)$ and $\beta \in S_v \setminus
  S'_v$. By assumption, we have a path in $G\setminus S'$ from
  $\beta$ to some $t\in T$.  Concatenating the three paths we obtain a
  $v \rightarrow t$ path in $G\setminus S'$ which contradicts the fact
  that $S'$ contains an (important) $v-T$ separator $S'_v$.  Since
  $S\neq S'$ and $|S|=|S'|$, the set $S_v\setminus S'_v$ is
  non-empty. Thus $r(S)\subset r(S')$ follows from the claim $r(S)\bigcup
  (S\setminus S')\subseteq r(S')$.

  Suppose now that $S'$ is not a solution for the multiway cut
  instance.  Then there is a $t_1 \rightarrow t_2$ path $P$ in
  $G\setminus S'$ for some $t_1,t_2\in T$, $t_1\neq t_2$. As $S$ is a
  solution for the multiway cut instance, $P$ must pass through a
  vertex $\beta \in S \setminus S' \subseteq r(S')$ (by the
  claim in the previous paragraph), a contradiction. Thus $S'$ is also
  a minimum solution.

  Finally, we show that $r(S)\bigcup f(S)\bigcup S\subseteq r(S')\bigcup
  f(S')\bigcup S'$. We know that $r(S)\bigcup (S\setminus S')\subseteq
  r(S')$. Thus it is sufficient to consider a vertex $v\in
  f(S)\setminus r(S)$. Suppose that $v\not \in f(S')$ and $v\not \in
  r(S')$: there are paths $P_1$ and $P_2$ in $G\setminus S'$, going
  from $T$ to $v$ and from $v$ to $T$, respectively. As $v\in f(S)$,
  path $P_1$ intersects $S$, i.e., it goes through a vertex of $S\setminus S'\subseteq r(S')$; let $\beta$ be the last such vertex on $P_1$. Now concatenating the
  subpath of $P_1$ from $\beta$ to $v$ and the path $P_2$ gives a path
  from $\beta\in r(S')$ to $T$ in $G\setminus S'$, a contradiction.
\end{proof}

Note that if $S$ is a shadow-maximal solution, then solution $S'$ in Lemma~\ref{thm-pushing} is also shadow-maximal.
Therefore, by the choice of $S^*$, applying Lemma~\ref{thm-pushing} on $S^*$ cannot produce a shadow-maximal solution $S'$
with $r_{G_1,T}(S^*)\subset r_{G_1,T}(S')$, and hence $S^*$ contains an important $v-T$ separator for every $v\in
r_{G_1,T}(S)$. Thus by Theorem~\ref{th:random-sampling} for $Y=r_{G_1,T}(S^*)$, we get:
\begin{lemma}
\label{lemma:reverse} With probability at least $2^{-2^{O(p)}}$, both $r_{G_1,T}(S^*)\subseteq Z_1$ and $Z_1\cap
S^*=\emptyset$ occur.
\end{lemma}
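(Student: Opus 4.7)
The plan is to apply Theorem~\ref{th:random-sampling} to the call $\randset(G_1,T,p)$ that produces $Z_1$, taking the thin set $S$ in that theorem to be $S^*$ and the target set $Y$ to be $r_{G_1,T}(S^*)$. If the two hypotheses of the theorem are met, it immediately guarantees that $Z_1\cap S^*=\emptyset$ and $r_{G_1,T}(S^*)\subseteq Z_1$ simultaneously with probability at least $2^{-2^{O(p)}}$, which is exactly the conclusion. So the work reduces to verifying those hypotheses for this $(S,Y)$.

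The size bound $|S^*|\le p$ is automatic since $S^*$ is a solution. For thinness, I would invoke Lemma~\ref{lemma:minimal-shadow-cover-sol}: a shadow-maximal solution is by definition inclusion-wise minimal, so applying that lemma with the subset $S^*\setminus\{v\}$ shows that no $v\in S^*$ lies in $r_{G_1,T}(S^*\setminus\{v\})$, which is exactly the definition of a thin set.

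The main step, and the one I expect to be the actual obstacle, is the remaining hypothesis: for every $v\in r_{G_1,T}(S^*)$ there must exist an important $v$--$T$ separator contained in $S^*$. This is where the extremal choice of $S^*$ (a shadow-maximal solution with largest possible $|r_{G_1,T}(\cdot)|$) is essential. I would argue by contradiction. If some $v\in r_{G_1,T}(S^*)$ admitted no important $v$--$T$ separator as a subset of $S^*$, then Lemma~\ref{thm-pushing} would deliver a solution $S'$ with $|S'|\le|S^*|$, with $r_{G_1,T}(S^*)\subsetneq r_{G_1,T}(S')$, and with $r_{G_1,T}(S^*)\cup f_{G_1,T}(S^*)\cup S^*\subseteq r_{G_1,T}(S')\cup f_{G_1,T}(S')\cup S'$. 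The observation stated immediately after Lemma~\ref{thm-pushing} then guarantees that $S'$ is itself shadow-maximal: the inclusion of its shadow-plus-itself set together with the shadow-maximality of $S^*$ pins down $S'$ as an inclusion-wise minimal solution whose shadow-plus-itself set is again maximum.

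Combining these, $S'$ would be a shadow-maximal solution with $|r_{G_1,T}(S')|>|r_{G_1,T}(S^*)|$, contradicting the choice of $S^*$ as maximizing $|r_{G_1,T}(\cdot)|$ among shadow-maximal solutions. Thus for every $v\in r_{G_1,T}(S^*)$ there is an important $v$--$T$ separator inside $S^*$, the second hypothesis of Theorem~\ref{th:random-sampling} is met, and the theorem yields the claimed probability bound.
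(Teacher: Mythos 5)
Your proposal is correct and matches the paper's own argument: apply Theorem~\ref{th:random-sampling} with $S=S^*$ and $Y=r_{G_1,T}(S^*)$, get thinness from Lemma~\ref{lemma:minimal-shadow-cover-sol} via minimality, and obtain the important-separator hypothesis by contradiction through the pushing argument (Lemma~\ref{thm-pushing}) together with the extremal choice of $S^*$ maximizing $|r_{G_1,T}(\cdot)|$ among shadow-maximal solutions. This is exactly how the paper derives Lemma~\ref{lemma:reverse}.
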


In the following, we assume that the events in Lemma~\ref{lemma:reverse} occur.  Our next goal is to give a lower bound on the
probability that $Z_2$ contains $f_{G_1,T}(S^*)$. Note that $S^*$ is a solution also of the instance $(G_2,T,p)$: the vertices
in $S^*$ remained finite (as $Z_1\cap S^*=\emptyset$ by the assumptions of Lemma~\ref{lemma:reverse}), and reversing the orientation of the edges
does not change the fact that $S^*$ is a solution. Solution $S^*$ is a shadow-maximal solution also in $(G_2,T,p)$:
Definition~\ref{defn-terminal-minimal-solutions} is insensitive to reversing the orientation of the edges and making some of
the weights infinite can only decrease the set of potential solutions.  Furthermore, the forward shadow of $S^*$ in $G_2$ is
the same as the reverse shadow of $S^*$ in $G_1$, that is, $f_{G_2,T}(S^*)=r_{G_1,T}(S^*)$. Therefore, assuming that the events in
Lemma~\ref{lemma:reverse} occur, every vertex of $f_{G_2,T}(S^*)$ has infinite weight in $G_2$. We show that now it holds that
$S^*$ contains an important $v-T$ separator in $G_2$ for every $v\in r_{G_2,T}(S^*)=f_{G_1,T}(S^*)$:

\begin{lemma}
\label{lemma-no-pushing-in-terminal-minimal-solns} If $S$ is a shadow-maximal solution for a \dmway instance $(G,T,p)$ and
every vertex of $f(S)$ is infinite, then $S$ contains an important $v-T$ separator for every $v\in r(S)$.
\end{lemma}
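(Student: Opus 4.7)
The proof will be by contradiction. Assume there exists a vertex $v \in r(S)$ such that no subset of $S$ is an important $v$-$T$ separator. My plan is to apply the pushing lemma (Lemma~\ref{thm-pushing}) to $S$ and $v$: this produces a set $S' = (S \setminus S_v) \cup S'_v$ with $|S'| \le |S|$, a \emph{strict} inclusion $r(S) \subsetneq r(S')$, and $r(S) \cup f(S) \cup S \subseteq r(S') \cup f(S') \cup S'$. The overall goal is to extract from $S'$ a minimal solution $\hat S$ whose shadow-plus-self strictly contains $r(S) \cup f(S) \cup S$, which contradicts the shadow-maximality of $S$.

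The first step is to check that $S'$ is a \emph{valid} candidate solution, i.e., that it is disjoint from $V^\infty(G)$. The only vertices of $S'$ not already in $S$ lie in $S'_v$, which is an important $v$-$T$ separator; by Definition~\ref{defn-sep} every separator is required to be disjoint from $V^\infty(G)$. Since $S \setminus S_v \subseteq S \subseteq V(G) \setminus V^\infty(G)$ as well, we get $S' \cap V^\infty(G) = \emptyset$, so $S'$ is a legitimate solution. Because $S'$ separates all terminal pairs, a minimal subsolution $\hat S \subseteq S'$ exists and can be obtained by greedily discarding removable vertices.

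The main obstacle is to show that $\hat S$ actually witnesses strictly larger shadow-plus-self than $S$: the inclusion from the pushing lemma is only between $S$ and the possibly non-minimal $S'$, and shrinking $S'$ to $\hat S$ can in principle shrink the reverse and forward shadows. This is exactly where the hypothesis that every vertex of $f(S)$ is infinite must be used. Since any solution avoids $V^\infty(G)$, both $S'$ and $\hat S$ are disjoint from $f(S)$, so no element of the original shadow of $S$ can migrate into the solution $\hat S$ itself during minimization. I would then fix a witness $w \in r(S') \setminus r(S)$ and carry out a small case analysis on the location of $w$ relative to $S$, $\hat S$, and the discarded vertices $S' \setminus \hat S$: the infinite-weight hypothesis together with $v \in r(S)$ rules out the problematic cases in which minimization could remove $w$ from $r(\hat S) \cup f(\hat S) \cup \hat S$, so a strictly new element is preserved. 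This contradicts the assumed shadow-maximality of $S$ and completes the proof.
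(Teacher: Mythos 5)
Your opening move (contradiction plus Lemma~\ref{thm-pushing}) is the same as the paper's, but from there you diverge, and the divergence is where the gap lies. The paper never passes to a minimal subsolution: it compares $S$ directly with the set $S'$ produced by the pushing lemma, uses shadow-maximality of $S$ to conclude $r(S)\cup f(S)\cup S = r(S')\cup f(S')\cup S'$, and then locates the nonempty set $S'\setminus S$ inside this set: it cannot meet $f(S)$ because those vertices are infinite while $S'$ consists of deletable (finite-weight) vertices --- this is the only place the hypothesis is used --- it cannot meet $r(S)$ because $r(S)\subset r(S')$ and $S'$ is disjoint from its own reverse shadow, and it cannot meet $S$ by definition; contradiction.

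In your plan the decisive step is instead the passage from $S'$ to a minimal $\hat S\subseteq S'$ together with the claim that the strict containment $r(S)\cup f(S)\cup S\subsetneq r(\hat S)\cup f(\hat S)\cup \hat S$ survives this minimization, and that claim is not established. You rightly observe that shrinking a solution can shrink its shadows, but the mechanism you offer does not overcome this: discarding vertices of $S'\setminus\hat S$ can open new paths to and from $T$, so vertices of $r(S)$ (which, unlike $f(S)$, are not assumed infinite) and your witness $w\in r(S')\setminus r(S)$ may simply drop out of $r(\hat S)\cup f(\hat S)\cup\hat S$ altogether. The point you argue with the infinite-weight hypothesis --- that old shadow vertices cannot ``migrate into'' $\hat S$ --- addresses a harmless scenario (a vertex absorbed into $\hat S$ would still be counted in $r(\hat S)\cup f(\hat S)\cup\hat S$), not the real danger of vertices leaving the shadow. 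Moreover, to contradict shadow-maximality you need the entire old set to be contained in the new one \emph{and} a strictly new element; preserving only $w$ would leave the two sets incomparable, which does not violate maximality. The promised ``small case analysis'' is exactly the hard part and is not carried out, so as written the proof has a genuine gap; note also that the essential role of the hypothesis that $f(S)$ is infinite in the paper's argument is to keep $S'\setminus S$ out of $f(S)$, a role it never plays in your sketch.
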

\begin{proof}
  Suppose to the contrary that there exists $v\in r(S)$ such that $S$
  does not contain an important $v-T$ separator. Then by
  Lemma~\ref{thm-pushing}, there is a another shadow-maximal
  solution $S'$. As $S$ is shadow-maximal, it follows that $r(S)\bigcup
  f(S)\bigcup S= r(S')\bigcup f(S')\bigcup S'$. Therefore, the nonempty set
  $S'\setminus S$ is fully contained in $r(S)\bigcup f(S)\bigcup
  S$. However it cannot contain any vertex of $f(S)$ (as they are
  infinite by assumption) and cannot contain any vertex of $r(S)$ (as
  $r(S)\subset r(S')$), a contradiction.
\end{proof}

Recall that $S^*$ is a shadow-maximal solution also in $(G_2,T,p)$. In particular, $S^*$ is a minimal solution for $G_2$ and
so by Lemma~\ref{lemma:minimal-shadow-cover-sol} we have that $S^*$ is thin in $G_2$ also. Thus
Theorem~\ref{th:random-sampling} can be used (with $Y=r_{G_2,T}(S^*)$) to obtain a lower bound on the probability that
$r_{G_2,T}(S^*)\subseteq Z_2$ and $Z_2\cap S^*=\emptyset$. As the reverse shadow $r_{G_2,T}(S^*)$ in $G_2$ is the same as the
forward shadow $f_{G_1,T}(S^*)$ in $G_1$, we can state the following:
\begin{lemma}
\label{lemma:forward} Assuming the events in Lemma~\ref{lemma:reverse} occur, with probability at least $2^{-2^{O(p)}}$ both
$f_{G_1,T}(S^*)\subseteq Z_2$ and $Z_2\cap S^*=\emptyset$ occur.
\end{lemma}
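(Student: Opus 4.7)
The plan is to apply Theorem~\ref{th:random-sampling} to the instance $(G_2,T,p)$ with $S=S^*$ and $Y=r_{G_2,T}(S^*)$, and then translate the conclusion back to $G_1$ via the identity $r_{G_2,T}(S^*)=f_{G_1,T}(S^*)$ (which holds because $G_2$ is obtained from $G_1$ by reversing every edge, interchanging forward and reverse reachability). To do this, I need to verify that every hypothesis of Theorem~\ref{th:random-sampling} is met in $G_2$ and that the conclusions of that theorem give exactly the two events in the statement.

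First, I would argue that $S^*$ is still a valid (and in fact shadow-maximal) solution of $(G_2,T,p)$. This is exactly what the paragraph preceding Lemma~\ref{lemma-no-pushing-in-terminal-minimal-solns} already records: by the events of Lemma~\ref{lemma:reverse}, $Z_1\cap S^*=\emptyset$, so no vertex of $S^*$ was promoted to $V^{\infty}(G_2)$ in Step~3 of Algorithm~\ref{alg:algo}; reversing edges preserves the property of being a multiway cut; and Definition~\ref{defn-terminal-minimal-solutions} is invariant under edge reversal, while enlarging $V^{\infty}$ can only shrink the class of candidate solutions, so shadow-maximality of $S^*$ in $G_1$ carries over to $G_2$.

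Second, I would check the two hypotheses required by Theorem~\ref{th:random-sampling}. Thinness of $S^*$ in $G_2$ is immediate from Lemma~\ref{lemma:minimal-shadow-cover-sol}, since $S^*$ is a minimal solution in $G_2$. For the important-separator condition, I would use the fact that, by Lemma~\ref{lemma:reverse}, $f_{G_2,T}(S^*)=r_{G_1,T}(S^*)\subseteq Z_1\subseteq V^{\infty}(G_2)$, so every vertex of the forward shadow of $S^*$ in $G_2$ now has infinite weight. This is precisely the hypothesis of Lemma~\ref{lemma-no-pushing-in-terminal-minimal-solns} applied to the instance $(G_2,T,p)$, which yields that $S^*$ contains an important $v-T$ separator (in $G_2$) for every $v\in r_{G_2,T}(S^*)$.

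With these two properties in hand, Theorem~\ref{th:random-sampling} applied to $\randset(G_2,T,p)$ with $S=S^*$ and $Y=r_{G_2,T}(S^*)$ gives that, with probability at least $2^{-2^{O(p)}}$, both $S^*\cap Z_2=\emptyset$ and $r_{G_2,T}(S^*)\subseteq Z_2$ occur. Substituting $r_{G_2,T}(S^*)=f_{G_1,T}(S^*)$ finishes the proof. The only mildly delicate point is the preservation of shadow-maximality under the changes between $G_1$ and $G_2$; everything else is a direct assembly of results already established.
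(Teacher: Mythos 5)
Your proposal is correct and follows essentially the same route as the paper: it establishes that $S^*$ remains a shadow-maximal (hence minimal and, via Lemma~\ref{lemma:minimal-shadow-cover-sol}, thin) solution in $(G_2,T,p)$, uses the fact that $f_{G_2,T}(S^*)=r_{G_1,T}(S^*)\subseteq Z_1\subseteq V^{\infty}(G_2)$ to invoke Lemma~\ref{lemma-no-pushing-in-terminal-minimal-solns} for the important-separator hypothesis, and then applies Theorem~\ref{th:random-sampling} with $Y=r_{G_2,T}(S^*)$, translating back through $r_{G_2,T}(S^*)=f_{G_1,T}(S^*)$. This matches the paper's argument step for step.
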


Therefore, with probability at least $(2^{-2^{O(p)}})^2$, the set $Z_1\bigcup Z_2$ contains $f_{G_1,T}(S^*)\bigcup
r_{G_1,T}(S^*)$ and it is disjoint from $S^*$. Lemma~\ref{reduced-instance-soln}(2) implies that $S^*$ is a shadowless
solution of $I/(Z_1\bigcup Z_2)$. It follows from Lemma~\ref{lemma-redn-to-undirected-case} that $S^*$ is a solution of the
undirected instance $(G^*_3,T,p)$.
\begin{lemma}
With probability at least $2^{-2^{O(p)}}$, $S^*$ is a shadowless solution of $(G_3,T,p)$ and a solution of the undirected
instance $(G^*_3,T,p)$.
\end{lemma}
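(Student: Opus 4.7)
The plan is to assemble the statement directly from the two sampling lemmas already in hand, together with the two structural lemmas (one about the \torso\ operation and one about shadowless solutions of directed multiway cut). There is no new technical content to invent; the task is to combine four previously established facts and track the probabilities correctly.

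First, I would invoke Lemma~\ref{lemma:reverse} to argue that the event $E_1 := \{r_{G_1,T}(S^*)\subseteq Z_1\}\cap\{Z_1\cap S^*=\emptyset\}$ occurs with probability at least $2^{-2^{O(p)}}$. Conditioning on $E_1$, the graph $G_2$ produced in Step~2 of Algorithm~\ref{alg:algo} satisfies the hypotheses under which Lemma~\ref{lemma:forward} was proved: in particular every vertex of $f_{G_1,T}(S^*)=r_{G_2,T}(S^*)$ has infinite weight in $G_2$, $S^*$ remains a shadow-maximal solution of $(G_2,T,p)$, and by Lemma~\ref{lemma:minimal-shadow-cover-sol} it is thin in $G_2$. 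Therefore Lemma~\ref{lemma:forward} gives that, conditional on $E_1$, the event $E_2:=\{f_{G_1,T}(S^*)\subseteq Z_2\}\cap\{Z_2\cap S^*=\emptyset\}$ occurs with probability at least $2^{-2^{O(p)}}$. Multiplying the conditional probabilities, $E_1\cap E_2$ occurs with probability at least $2^{-2^{O(p)}}\cdot 2^{-2^{O(p)}}=2^{-2^{O(p)}}$.

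On the joint event $E_1\cap E_2$, the set $Z=Z_1\cup Z_2$ satisfies $f_{G_1,T}(S^*)\cup r_{G_1,T}(S^*)\subseteq Z$ and $S^*\cap Z=\emptyset$, so Lemma~\ref{reduced-instance-soln}(2) implies that $S^*$ is a shadowless solution of the reduced instance $I_1/Z=(G_3,T,p)$. An application of Lemma~\ref{lemma-redn-to-undirected-case} then yields that $S^*$ is a solution of the underlying undirected instance $(G^*_3,T,p)$, completing the proof.

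The only subtlety worth flagging is the conditioning: Algorithm~\ref{alg:algo} calls $\randset$ twice and the distribution of $Z_2$ depends on $Z_1$ (because $Z_1$ is made infinite in $G_2$ before the second call), so the two sampling bounds must be composed sequentially rather than treated as independent. Once this is stated correctly, the conclusion is essentially bookkeeping and contains no additional obstacle.
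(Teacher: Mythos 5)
Your proposal is correct and follows essentially the same route as the paper: combine Lemma~\ref{lemma:reverse} with the conditional bound of Lemma~\ref{lemma:forward}, multiply the probabilities, and then apply Lemma~\ref{reduced-instance-soln}(2) followed by Lemma~\ref{lemma-redn-to-undirected-case}. Your remark about composing the two sampling bounds sequentially (since the distribution of $Z_2$ depends on $Z_1$) matches the paper's treatment, where Lemma~\ref{lemma:forward} is explicitly stated conditionally on the events of Lemma~\ref{lemma:reverse}.
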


In summary, with probability at least $2^{-2^{O(p)}}$ Algorithm~\ref{alg:algo} returns a set $S$ which is a solution of $I$ by
Lemma~\ref{lem:correctalg-part-one}. This completes the proof of Lemma~\ref{lem:correctalg}.


\bibliographystyle{abbrv}
\bibliography{docsdb}


\appendix

\appendix
\section{Bound on the number of important separators (Proof of Lemma~\ref{number-of-imp-sep})}
\label{subsection-number-of-imp-sep}

For the proof of Lemma~\ref{number-of-imp-sep}, we need to establish first some simple properties of important separators, which will allow us to use recursion.
\begin{lemma}
\label{properties-imp-sep} Let $G$ be a directed graph and $S$ be an important $X-Y$ separator. Then
\begin{enumerate}
    \item For every $v\in S$, the set $S\setminus v$ is an important $X-Y$ separator in the graph  $G\setminus v$.
    \item If $S$ is an $X'-Y$ separator for some $X'\supset X$, then $S$ is also an important $X'-Y$ separator.
\end{enumerate}
\end{lemma}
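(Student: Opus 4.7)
The plan is to check each of the three conditions defining an important separator for $S\setminus v$ in $G\setminus v$. That $S\setminus v$ is an $X-Y$ separator there is immediate because $(G\setminus v)\setminus(S\setminus v)=G\setminus S$. Minimality follows by a lifting argument: any $T\subsetneq S\setminus v$ separating $X$ from $Y$ in $G\setminus v$ would give $T\cup\{v\}\subsetneq S$ separating $X$ from $Y$ in $G$, contradicting the minimality of $S$. For importance, any putative witness $S'$ against importance of $S\setminus v$ in $G\setminus v$ lifts to $S'\cup\{v\}$ in $G$, and the equalities $R^{+}_{G\setminus S}(X)=R^{+}_{(G\setminus v)\setminus(S\setminus v)}(X)$ and $R^{+}_{G\setminus(S'\cup\{v\})}(X)=R^{+}_{(G\setminus v)\setminus S'}(X)$ (holding because $v\in S$) transport the strict containment to $G$, contradicting the importance of $S$.

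\noindent\textbf{Part (2), setup.} Minimality is again a subset argument: any $T\subsetneq S$ separating $X'\supset X$ from $Y$ also separates $X$ from $Y$, contradicting minimality of $S$. Importance is the crux. I would argue by contradiction: pick an $X'-Y$ separator $S^{*}$ with $|S^{*}|\le|S|$ and $R^{+}_{G\setminus S}(X')\subsetneq R^{+}_{G\setminus S^{*}}(X')=:A^{*}$, and set $A:=R^{+}_{G\setminus S}(X)$. Since $X\subseteq X'$, we have $A\subseteq R^{+}_{G\setminus S}(X')\subsetneq A^{*}$.

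\noindent\textbf{Key step.} Rather than use $S^{*}$ directly, I would pass to the set $T$ of all vertices $u\notin A^{*}$ that have some in-edge from a vertex of $A^{*}$. Then $T\subseteq S^{*}$ (any such $u\notin S^{*}$ would become reachable from $X'$ in $G\setminus S^{*}$, contradicting $u\notin A^{*}$), $T$ is disjoint from $X\cup Y$, and $T$ is an $X-Y$ separator because removing $T$ isolates $A^{*}\supseteq X$ from its complement, which contains $Y$. The heart of the argument is the identity $R^{+}_{G\setminus T}(X)=A$: paths witnessing $A\subseteq R^{+}_{G\setminus S}(X)$ stay inside $A\subseteq A^{*}$ and hence avoid $T$, giving $A\subseteq R^{+}_{G\setminus T}(X)$, while the importance of $S$ applied to $T$ with $|T|\le|S|$ rules out any strict containment, forcing equality. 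Now, because $S$ is a minimal $X-Y$ separator with $X$-side $A$, every $s\in S$ has an in-edge from some $a\in A\subseteq A^{*}$; if $s\in A^{*}$ this edge would survive in $G\setminus T$ and place $s\in R^{+}_{G\setminus T}(X)=A$, a contradiction, so $S\cap A^{*}=\emptyset$. The in-edge from $A$ then forces $s\in T$, hence $S\subseteq T$. The chain $|S|\le|T|\le|S^{*}|\le|S|$ collapses to equalities, so $S=T=S^{*}$, and therefore $A^{*}=R^{+}_{G\setminus S^{*}}(X')=R^{+}_{G\setminus S}(X')$, contradicting the strict containment defining $S^{*}$.

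\noindent\textbf{Main obstacle.} The naive attempt in part (2), namely using $S^{*}$ itself as a counterexample to the importance of $S$ on the $X$ side, fails because reachability from $X'$ does not cleanly split into reachability from $X$ and from $X'\setminus X$. The key trick is to replace $S^{*}$ by the boundary $T$ of $A^{*}$: this forces $R^{+}_{G\setminus T}(X)=A$ via the importance of $S$, and then the in-edge structure of the minimal separator $S$ traps $S$ inside $T$, yielding $S=S^{*}$ and the desired contradiction.
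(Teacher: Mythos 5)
Your part (1) is essentially the paper's own argument: the same lifting of a smaller separator or a witness $S'$ to $S_0\cup\{v\}$ resp.\ $S'\cup\{v\}$ in $G$, using $(G\setminus v)\setminus(S\setminus v)=G\setminus S$. For part (2) your proof is correct, but it takes a genuinely different route, and your closing claim that the ``naive'' direct use of the witness fails is not accurate --- the paper's proof does exactly that. Given a witness $S'$ (your $S^{*}$) with $|S'|\le|S|$ and $R^{+}_{G\setminus S}(X')\subsetneq R^{+}_{G\setminus S'}(X')$, the paper first notes that no vertex of $S'$ can lie on a path inside $R^{+}_{G\setminus S}(X)$ (such a vertex would be in $R^{+}_{G\setminus S}(X')$ but not in $R^{+}_{G\setminus S'}(X')$, contradicting the assumed inclusion), so $R^{+}_{G\setminus S}(X)\subseteq R^{+}_{G\setminus S'}(X)$; then any $v\in S\setminus S'$ (nonempty since $S\neq S'$ and $|S'|\le|S|$) has, by minimality of $S$, an in-neighbour in $R^{+}_{G\setminus S}(X)\subseteq R^{+}_{G\setminus S'}(X)$, hence $v\in R^{+}_{G\setminus S'}(X)\setminus R^{+}_{G\setminus S}(X)$, making the inclusion strict and contradicting importance of $S$ for the pair $X-Y$. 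Your argument instead replaces $S^{*}$ by the out-boundary $T=N^{+}(A^{*})$ of the reach set, uses $T\subseteq S^{*}$ (which also guarantees $T$ avoids $X\cup Y\cup V^{\infty}$), invokes the importance of $S$ to pin $R^{+}_{G\setminus T}(X)=A$, and then uses the in-neighbour property of the minimal separator $S$ to trap $S\subseteq T\subseteq S^{*}$, forcing $S=T=S^{*}$ and a contradiction with the strict containment; each of these steps checks out. What your route buys is a structural viewpoint --- arguing through boundaries of reach sets --- which is close in spirit to the reach/boundary manipulations the paper uses later for counting important separators (Lemmas~\ref{lemma-family-of-x-y-separators} and~\ref{lemma-reach-of-S*-contained-in-all-reaches}); what the paper's route buys is brevity, needing only the single observation about where vertices of $S'$ can sit, with no detour through $T$ or the size-collapse chain.
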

\begin{proof}
~\\
\vspace{-6mm}
\begin{enumerate}
\item Suppose $S\setminus v$ is not a minimal $X-Y$ separator in
  $G\setminus v$. Let $S_{0}\subset S\setminus v$ be an $X-Y$
  separator in $G\setminus v$. Then $S_{0}\cup v$ is an $X-Y$
  separator in $G$, but $S_{0}\cup v \subset S$ holds, which
  contradicts the fact that $S$ is a minimal $X-Y$ separator in
  $G$. Now suppose that there exists an $S'\subseteq V(G)\setminus v$
  such that $|S'|\leq |S\setminus v|=|S|-1$ and $R_{(G\setminus
    v)\setminus (S\setminus v)}^{+}(X) \subset R_{(G\setminus
    v)\setminus S'}^{+}(X)$. Noting that $(G\setminus v)\setminus
  (S\setminus v)=G\setminus S$ and $(G\setminus v)\setminus S'=
  G\setminus (S'\cup v)$, we get $R_{G\setminus S}^{+}(X) \subset
  R_{G\setminus (S'\cup v)}^{+}(X)$. As $|S'\cup v| = |S'|+1 \leq
  |S|$, this contradicts the fact that $S$ is an important $X-Y$
  separator.
\item As $S$ is an inclusionwise minimal $X-Y$ separator, it is an
  inclusionwise minimal $X'-Y$ separator as well. Let $S'$ be a
  witness that $S$ is not an important $X'-Y$ separator in $G$, i.e.,
  $S'$ is an $X'-Y$ separator such that $|S'|\leq |S|$ and
  $R^{+}_{G\setminus S}(X')\subset R^{+}_{G\setminus S'}(X')$. We
  claim first that $R^{+}_{G\setminus S}(X)\subseteq R^{+}_{G\setminus
    S'}(X)$. Indeed, if $P$ is any path from $X$ and fully contained in $R^+_{G\setminus S}(X)$,  then $P$ is disjoint from $S'$, otherwise vertices of
  $P\cap S'$ are in $R^{+}_{G\setminus S}(X')$, but not in
  $R^{+}_{G\setminus S'}(X')$, a contradiction. Next we show that the
  inclusion $R^{+}_{G\setminus S}(X)\subset R^{+}_{G\setminus S'}(X)$
  is proper, contradicting that $S$ is an important $X-Y$ separator.
  As $|S'|\le |S|$, there is a vertex $v\in S\setminus S'$. Since $S$
  is a minimal $X-Y$ separator, it has an in-neighbor $u\in
  R_{G\setminus S}^{+}(X)\subseteq R^+_{G\setminus S'}(X)$. Now $v\in
  S$ and $v\not\in S'$ imply that $v\in R^+_{G\setminus S'}(X)\setminus
  R^+_{G\setminus S}(X)$, a contradiction.
\end{enumerate}
\end{proof}

Next we show that the size of the out-neighborhood of a vertex set is a
submodular function. Recall that a function $f:2^U\rightarrow
\mathbb{N}\cup\{0\}$ is \emph{submodular} if for all $A,B\subseteq U$
we have $f(A)+f(B)\geq f(A\cup B)+ f(A\cap B)$.



\begin{lemma}{\bf (submodularity)}
\label{lemma-out-nbd-is-submodular} The function $\gamma(A)=|N^{+}(A)|$ is submodular.
\end{lemma}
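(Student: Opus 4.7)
The plan is to prove submodularity by a direct element-wise counting argument. For every vertex $v \in V(G)$, I will show that the number of sets among $\{N^+(A), N^+(B)\}$ containing $v$ is at least the number of sets among $\{N^+(A \cup B), N^+(A \cap B)\}$ containing $v$. Summing these inequalities over all $v \in V(G)$ yields $\gamma(A) + \gamma(B) \geq \gamma(A \cup B) + \gamma(A \cap B)$. The key convention I rely on is that $N^+(X)$ excludes the vertices of $X$ itself; this is what makes the analysis essentially trivial outside of one case.

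I would proceed by a case split on where $v$ lies relative to $A$ and $B$. If $v \in A \cap B$, then $v$ belongs to none of the four out-neighborhoods, so both sides receive contribution $0$. If $v \in A \setminus B$, then $v \notin N^+(A)$ and $v \notin N^+(A \cup B)$, so only $N^+(B)$ (on the LHS) and $N^+(A \cap B)$ (on the RHS) could contain $v$; but if $v \in N^+(A \cap B)$, then the in-neighbor of $v$ witnessing this lies in $A \cap B \subseteq B$, giving $v \in N^+(B)$ as well. The case $v \in B \setminus A$ is symmetric.

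The only substantive case is $v \notin A \cup B$, where all four memberships are a priori possible. Here I would verify two implications directly from the definition of $N^+$. First, if $v \in N^+(A \cap B)$, then any witnessing in-neighbor $u \in A \cap B$ lies in both $A$ and $B$, so $v \in N^+(A) \cap N^+(B)$, contributing $2$ to both sides. Second, if $v \in N^+(A \cup B) \setminus N^+(A \cap B)$, then some in-neighbor $u \in A \cup B$ exists, forcing $v \in N^+(A) \cup N^+(B)$, which contributes at least $1$ to the LHS against exactly $1$ on the RHS. In the remaining subcase $v \notin N^+(A \cup B)$, the RHS contribution is $0$ and the inequality is trivial.

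There is no real obstacle here; the argument is a short finite case analysis. The only thing to be mindful of is the self-exclusion convention for $N^+(\cdot)$, which both makes the first three cases collapse and is needed for the implication ``$v \in N^+(A \cap B) \Rightarrow v \in N^+(A) \cap N^+(B)$'' in the last case to be stated without worrying about $v$ itself lying in $A$ or $B$.
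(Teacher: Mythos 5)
Your proof is correct and follows essentially the same route as the paper: a per-vertex contribution-counting argument relying on the convention that $N^{+}(X)$ excludes $X$, with the key observation that a witnessing in-neighbor in $A\cap B$ (resp.\ $A\cup B$) also witnesses membership in $N^{+}(A)$ and/or $N^{+}(B)$. The only difference is cosmetic case organization (the paper splits on whether the vertex lies in $N^{+}(A)$ and $N^{+}(B)$, you split on its position relative to $A$ and $B$), so no further comparison is needed.
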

\begin{proof}
  Let $L=\gamma(A)+\gamma(B)$ and $R=\gamma(A\cup B)+\gamma(A\cap
  B)$. To prove $L\geq R$ we show that for each vertex $x\in V$ its
  contribution to $L$ is at least as much as its contribution to
  $R$. Suppose that the weight of $x$ is $w$ (in our setting, $w$ is
  either 1 or $\infty$, but submodularity holds even if the weights
  are arbitrary). The contribution of $x$ to $L$ or $R$ is either $0$, $w$, or $2w$. We have the following four cases:
\begin{enumerate}
\item $x\notin N^{+}(A)$ and $x\notin N^{+}(B)$.\\
  In this case, $x$ contributes 0 to $L$. It contributes 0 to $R$ as
  well: every vertex in $N^+(A\cap B)$ or in $N^+(A\cup B)$ is either
  in $N^+(A)$ or in $N^+(B)$.
\item $x\in N^{+}(A)$ and $x\notin N^{+}(B)$.\\
  In this case, $x$ contributes $w$ to $L$. To see that $x$ does not
  contribute $2w$ to $R$, suppose that $x\in N^{+}(A\cup B)$
  holds. This implies $x\notin A\cup B$ and therefore $x\in
  N^{+}(A\cap B)$ can be true only if $x\in N^+(A)$ and $x\in
  N^{+}(B)$, which is a contradiction.  Therefore, $x$ contributes
  only $w$ to $R$.
\item $x\notin N^{+}(A)$ and $x\in N^{+}(B)$.\\
Symmetric to the previous case.
\item $x\in N^{+}(A)$ and $x\in N^{+}(B)$\\
      In this case, $x$ contributes $2w$ to $L$, and can anyways contribute at most $2w$ to $R$.
\end{enumerate}
In all four cases the contribution of $x$ to $L$ is always greater than or equal to its contribution to $R$ and hence $L\geq R$, i.e.,
$\gamma$ is submodular.
\end{proof}



Recall that $R_{G\setminus S}^{+}(X)$ is the set of vertices reachable from $X$ in $G\setminus S$. The following claim will be useful for the use of submodularity:
\begin{lemma}
\label{lemma-family-of-x-y-separators} Let $G$ be a directed graph. If $S_1, S_2$ are $X-Y$ separators, then both
the sets $N^{+}(R^{+}_{G\setminus S_1}(X)\bigcup R^{+}_{G\setminus S_2}(X))$ and $N^{+}(R^{+}_{G\setminus S_1}(X)\bigcap
R^{+}_{G\setminus S_2}(X))$ are also $X-Y$ separators.
\end{lemma}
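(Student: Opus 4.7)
Write $A_1 = R^{+}_{G\setminus S_1}(X)$ and $A_2 = R^{+}_{G\setminus S_2}(X)$. The plan is to establish a single general fact, namely: for any set $A\subseteq V(G)$ with $X\subseteq A$ and $Y\cap A=\emptyset$ such that no vertex of $A$ can reach $Y$ without passing through either $S_1$ or $S_2$, the set $N^{+}(A)$ is an $X$--$Y$ separator; and then verify that $A=A_1\cup A_2$ and $A=A_1\cap A_2$ both satisfy these hypotheses.

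The key observations are the following. First, $X\subseteq A_i$ and $Y\cap A_i=\emptyset$ for $i=1,2$ (directly from the definition of $R^{+}_{G\setminus S_i}(X)$ and the fact that $S_i$ is an $X$--$Y$ separator disjoint from $X\cup Y$). Hence $X\subseteq A_1\cup A_2$, $X\subseteq A_1\cap A_2$, and both $A_1\cup A_2$ and $A_1\cap A_2$ are disjoint from $Y$. Since we interpret $N^{+}(\cdot)$ as the open out-neighborhood (vertices outside the set with an incoming edge from the set), this immediately gives $N^{+}(A)\cap X=\emptyset$ whenever $X\subseteq A$.

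Next I would show that $N^{+}(A)\cap Y=\emptyset$ when $A\in\{A_1\cup A_2,A_1\cap A_2\}$. Suppose $y\in Y\cap N^{+}(A)$; then there is an edge $(u,y)\in E(G)$ with $u\in A\subseteq A_1\cup A_2$, say $u\in A_1$. Since $u\in R^{+}_{G\setminus S_1}(X)$, there is an $X\to u$ path in $G\setminus S_1$. Appending the edge $(u,y)$ yields an $X\to y$ walk in $G\setminus S_1$ (note $y\notin S_1$ because $S_1$ is disjoint from $Y$), contradicting that $S_1$ is an $X$--$Y$ separator. The same argument works with $S_2$.

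Finally I would verify the separation property itself: any $X\to Y$ path $P$ in $G$ starts inside $A$ (since $X\subseteq A$) and ends outside $A$ (since $Y\cap A=\emptyset$), so it contains a last vertex $u\in A$ immediately followed by a vertex $v\notin A$ on $P$; the edge $(u,v)$ witnesses $v\in N^{+}(A)$, so $P$ meets $N^{+}(A)$. Hence $G\setminus N^{+}(A)$ has no $X\to Y$ path, which combined with the two disjointness statements shows that $N^{+}(A)$ is an $X$--$Y$ separator. The only subtlety — and the step I expect to be the main obstacle to state cleanly — is making sure the argument for $N^{+}(A)\cap Y=\emptyset$ is presented in a way that covers the intersection case as well; but since $A_1\cap A_2\subseteq A_1\cup A_2$, the same witness edge $(u,y)$ leads to a contradiction for either choice of $A$, so a single paragraph suffices.
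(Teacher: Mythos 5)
Your proposal is correct and follows essentially the same route as the paper's proof: both arguments rest on the observations that $X\subseteq A$ and $Y\cap A=\emptyset$ for $A\in\{R^{+}_{G\setminus S_1}(X)\cup R^{+}_{G\setminus S_2}(X),\ R^{+}_{G\setminus S_1}(X)\cap R^{+}_{G\setminus S_2}(X)\}$, so every $X\to Y$ path must cross from $A$ to its complement and therefore meets $N^{+}(A)$. Your extra verification that $N^{+}(A)$ is disjoint from $Y$ (equivalently, one can note $N^{+}(A)\subseteq S_1\cup S_2$) is a detail the paper leaves implicit, not a different approach.
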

\begin{proof}
  1. Let $R_{\cap}=R^{+}_{G\setminus S_1}(X)\bigcap
  R^{+}_{G\setminus S_2}(X)$ and $S_{\cap}=N^{+}(R_\cap)$. As $S_1$ and $S_2$ are disjoint from $X$ and $Y$ by
  definition, we have $X\subseteq R_{\cap}$ and $Y$ is disjoint from
  $R_\cap$. Therefore, every path $P$ from $X$ to $Y$ has a vertex
  $u\in R_{\cap}$ followed by a vertex $v\not\in R_{\cap}$, and
  therefore $v\in S_\cap$. As this holds for every path $P$, the
set $S_\cap$ is an $X-Y$ separator.

2. The argument is the same with the sets $R_{\cup}=R^{+}_{G\setminus
  S_1}(X)\bigcup R^{+}_{G\setminus S_2}(X)$ and
$S_{\cup}=N^{+}(R_\cup)$.
\end{proof}

Now we prove the well-known fact that there is a unique minimum size
separator whose ``reach'' is inclusion-wise maximal.

\begin{lemma}
\label{lemma-unique-min-size-sep-with-maximal-reach} There is a unique $X-Y$ separator $S^{*}$ of minimum size such that
$R_{G\setminus {S^{*}}}^{+}(X)$ is inclusion-wise maximal.
\end{lemma}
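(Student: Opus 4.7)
The plan is to prove existence by a simple finiteness argument, then establish uniqueness using the submodularity of $\gamma(A) = |N^+(A)|$ (Lemma~\ref{lemma-out-nbd-is-submodular}) combined with the separator-producing lemma (Lemma~\ref{lemma-family-of-x-y-separators}).

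For existence, I would observe that there are only finitely many subsets of $V(G)$, so among the reaches $\{R^+_{G\setminus S}(X) : S \text{ is a minimum } X\text{-}Y \text{ separator}\}$, we may pick one that is maximal under set inclusion. The corresponding separator witnesses existence.

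For uniqueness, the key preliminary observation is that every minimum $X$-$Y$ separator $S$ satisfies $S = N^+(R^+_{G\setminus S}(X))$: the inclusion $N^+(R^+_{G\setminus S}(X)) \subseteq S$ is immediate (any out-neighbor of the reach must have been removed), and the reverse inclusion follows from minimality of $S$ (if some $v \in S$ had no in-neighbor in $R^+_{G\setminus S}(X)$, then $S \setminus \{v\}$ would still be a separator). Assume for contradiction that $S_1$ and $S_2$ are two distinct minimum $X$-$Y$ separators both of whose reaches are inclusion-wise maximal among minimum separators. Let $k = |S_1| = |S_2|$, $R_i = R^+_{G\setminus S_i}(X)$, and set
\[
R_\cup = R_1 \cup R_2, \qquad R_\cap = R_1 \cap R_2, \qquad S_\cup = N^+(R_\cup), \qquad S_\cap = N^+(R_\cap).
\]
By Lemma~\ref{lemma-family-of-x-y-separators}, both $S_\cup$ and $S_\cap$ are $X$-$Y$ separators, hence $|S_\cup| \geq k$ and $|S_\cap| \geq k$. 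On the other hand, submodularity of $\gamma$ (Lemma~\ref{lemma-out-nbd-is-submodular}) together with the identity $N^+(R_i) = S_i$ gives
\[
|S_\cup| + |S_\cap| = \gamma(R_\cup) + \gamma(R_\cap) \leq \gamma(R_1) + \gamma(R_2) = |S_1| + |S_2| = 2k.
\]
Combining, we get $|S_\cup| = |S_\cap| = k$, so $S_\cup$ is itself a minimum $X$-$Y$ separator.

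I would then verify that $R^+_{G\setminus S_\cup}(X) = R_\cup$: the vertices in $R_\cup$ remain reachable from $X$ in $G \setminus S_\cup$ because $R_\cup$ is disjoint from $S_\cup$ and the witnessing paths inside each $R_i$ avoid $S_\cup$, while no edge leaves $R_\cup$ in $G\setminus S_\cup$ since every such edge in $G$ has its head in $N^+(R_\cup) = S_\cup$. Thus the reach of the minimum separator $S_\cup$ is $R_\cup \supseteq R_1$, and maximality of $R_1$ forces $R_\cup = R_1$, i.e., $R_2 \subseteq R_1$. By symmetry $R_1 \subseteq R_2$, so $R_1 = R_2$. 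Applying $S_i = N^+(R_i)$ to both sides yields $S_1 = S_2$, contradicting our assumption. I expect the main (though still routine) obstacle to be bookkeeping the two inclusions $N^+(R_i) \subseteq S_i$ and $S_i \subseteq N^+(R_i)$ cleanly, since everything afterward follows directly from submodularity and the maximality hypothesis.
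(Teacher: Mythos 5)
Your proof is correct and follows essentially the same route as the paper: submodularity of $\gamma(A)=|N^+(A)|$ together with the fact that $N^+(R_\cup)$ and $N^+(R_\cap)$ are $X$-$Y$ separators forces $N^+(R_\cup)$ to be a minimum separator whose reach contains both $R_1$ and $R_2$, contradicting maximality. The only cosmetic difference is that you make the identity $S_i=N^+(R_i)$ explicit to convert $R_1=R_2$ into $S_1=S_2$, whereas the paper phrases the contradiction directly in terms of incomparable maximal reaches.
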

\begin{proof}
Let $\lambda$ be the size of a smallest $X-Y$ separator. Suppose to the contrary that there are two separators $S_1$ and $S_2$
of size $\lambda$ such that $R_{G\setminus S_1}^{+}(X)$ and $R_{G\setminus S_2}^{+}(X)$ are incomparable and inclusion-wise
maximal. Let $R_1=R_{G\setminus S_1}^+(X)$, $R_2=R_{G\setminus S_2}^+(X)$, $R_\cap=R_1\cap R_2$, and $R_\cup=R_1\cup R_2$.
 By Lemma~\ref{lemma-out-nbd-is-submodular}, $\gamma$ is submodular and hence
\begin{equation}
\gamma(R_1) + \gamma(R_2) \geq \gamma(R_\cup) + \gamma(R_\cap).
\label{eqn:1}
\end{equation}
As $N^+(R_1)\subseteq S_1$ and $N^+(R_2)\subseteq S_2$, the left hand
side is at most $2\lambda$ (in fact, as $S_1$ and $S_2$ are minimal
$X-Y$ separators, it can be seen that the left hand side is exactly
$2\lambda$).  By Lemma~\ref{lemma-family-of-x-y-separators}, both the
sets $N^{+}(R_\cap)$ and $N^{+}(R_\cup)$ are $X-Y$
separators. Therefore, the right hand side is at least $2\lambda$.
This implies that equality holds in Equation~\ref{eqn:1} and in
particular $|N^{+}(R_\cup)|=\lambda$, i.e., $N^{+}(R_\cup)$ is also a
minimum $X-Y$ separator. As $R_1,R_2\subseteq R_\cup$, every vertex of
$R_1$ and every vertex of $R_2$ is reachable from $X$ in $G\setminus
N^+(R_\cup)$. This contradicts the inclusion-wise maximality of the
reach of $S_1$ and $S_2$.
%
%
\end{proof}

Let $S^*$ be the unique $X-Y$ separator of minimum size given by Lemma~\ref{lemma-unique-min-size-sep-with-maximal-reach}. The
following lemma shows that every important $X-Y$ separator $S$ is ``behind'' this separator $S^{*}$:

\begin{lemma}
\label{lemma-reach-of-S*-contained-in-all-reaches} Let $S^*$ be the unique $X-Y$ separator of minimum size given by
Lemma~\ref{lemma-unique-min-size-sep-with-maximal-reach}. For every important $X-Y$ separator $S$, we have $R_{G\setminus
S^{*}}^{+}(X) \subseteq R_{G\setminus S}^{+}(X)$.
\end{lemma}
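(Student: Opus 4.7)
The plan is to mimic the submodularity argument used in Lemma~\ref{lemma-unique-min-size-sep-with-maximal-reach}, but now applied to the pair $(S^*, S)$. Let $\lambda = |S^*|$ (the minimum size of an $X-Y$ separator), let $R_1 = R^+_{G\setminus S^*}(X)$ and $R_2 = R^+_{G\setminus S}(X)$, and suppose for contradiction that $R_1 \not\subseteq R_2$.

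First I would observe that since $S^*$ and $S$ are (inclusion-wise) minimal $X-Y$ separators, every vertex of $S^*$ has an in-neighbor in $R_1$, so $N^+(R_1) = S^*$, and similarly $N^+(R_2) = S$. Thus $\gamma(R_1) = \lambda$ and $\gamma(R_2) = |S|$. By Lemma~\ref{lemma-family-of-x-y-separators}, both $N^+(R_1 \cup R_2)$ and $N^+(R_1 \cap R_2)$ are $X-Y$ separators, and hence their weights are at least $\lambda$. Applying submodularity (Lemma~\ref{lemma-out-nbd-is-submodular}) gives
\[
\lambda + |S| \;=\; \gamma(R_1) + \gamma(R_2) \;\geq\; \gamma(R_1 \cup R_2) + \gamma(R_1 \cap R_2) \;\geq\; \gamma(R_1 \cup R_2) + \lambda,
\]
so $\gamma(R_1 \cup R_2) \le |S|$.

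Now set $S' = N^+(R_1 \cup R_2)$; this is an $X-Y$ separator of size at most $|S|$. I would argue next that $R^+_{G \setminus S'}(X) \supseteq R_1 \cup R_2$: any vertex $u \in R_1 \cup R_2$ has a path from $X$ that is entirely contained in $R_1$ or in $R_2$ (respectively, in $G \setminus S^*$ or $G \setminus S$), and hence entirely within $R_1 \cup R_2$, therefore disjoint from $S' = N^+(R_1 \cup R_2)$. Since by assumption $R_1 \not\subseteq R_2$, the union $R_1 \cup R_2$ strictly contains $R_2$, so $R^+_{G \setminus S'}(X) \supsetneq R_2 = R^+_{G \setminus S}(X)$. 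Combined with $|S'| \leq |S|$, this contradicts the fact that $S$ is an important $X-Y$ separator.

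The only thing that requires care, and which I expect to be the main (but minor) obstacle, is justifying the equalities $\gamma(R_1) = \lambda$ and $\gamma(R_2) = |S|$ — one must check that minimality of an $X-Y$ separator implies every one of its vertices has an incoming edge from the reach of $X$, so that $N^+(R_i)$ recovers the separator itself rather than a proper subset. Once these equalities are in hand, the submodularity inequality forces $\gamma(R_1 \cup R_2) \le |S|$ and everything else follows from the definition of an important separator.
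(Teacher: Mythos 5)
Your proposal is correct and follows essentially the same route as the paper: submodularity of $\gamma$ applied to the two reaches, Lemma~\ref{lemma-family-of-x-y-separators} to see that $N^{+}$ of the intersection and of the union are $X-Y$ separators, and then $N^{+}(R_1\cup R_2)$ is a separator of size at most $|S|$ with strictly larger reach, contradicting importance of $S$. The only (harmless) difference is that you insist on the equalities $\gamma(R_1)=\lambda$ and $\gamma(R_2)=|S|$ via minimality, whereas the paper gets by with the easier inclusions $N^{+}(R_1)\subseteq S^{*}$ and $N^{+}(R_2)\subseteq S$, i.e.\ the inequalities $\gamma(R_1)\le\lambda$ and $\gamma(R_2)\le|S|$, which already suffice.
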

\begin{proof}
Note that the condition trivially holds for $S=S^*$. Lemma~\ref{lemma-unique-min-size-sep-with-maximal-reach} implies that the only
important $X-Y$ separator of minimum size is $S^*$.

Suppose there is an important $X-Y$ separator $S\neq S^*$ such that $R_{G\setminus S^{*}}^{+}(X) \nsubseteq R_{G\setminus
S}^{+}(X)$.
 Let $R=R_{G\setminus S}^+(X)$, $R^*=R_{G\setminus S^*}^+(X)$, $R_\cap=R\cap R^*$, and $R_\cup=R\cup R^*$.
 By Lemma~\ref{lemma-out-nbd-is-submodular}, $\gamma$ is submodular and hence
\begin{equation}
\gamma(R^*)+\gamma(R)\geq \gamma(R_{\cup}) + \gamma(R_\cap).
\label{eqn:2}
\end{equation}
As $N^+(R^*)\subseteq S^*$, we have that the first
term on the left hand side is at most $|S^*|=\lambda$.  By
Lemma~\ref{lemma-family-of-x-y-separators}, the set
$N^{+}(R_\cap)$ is an $X-Y$ separator, hence the second term on the
right hand side is at least $\lambda$. It follows that
$|N^{+}(R_\cup)|\leq |N^{+}(R)(X))|\le |S|$.  Since
$R^* \nsubseteq R$ by assumption, we
have $R\subset R_\cup$.  By
Lemma~\ref{lemma-family-of-x-y-separators}, $N^+(R_\cup)$ is also an $X-Y$
separator and we have seen that it has size at most
$|S|$. Furthermore, $R\subset R_\cup$ implies that
any vertex reachable from $X$ in $G\setminus S$ is reachable in
$G\setminus N^{+}(R_\cup)$ as well, contradicting the assumption that $S$
is an important separator.
%
\end{proof}

Now we finally have all the required tools to prove Lemma~\ref{number-of-imp-sep}.

\begin{proof}[Proof (of Lemma \ref{number-of-imp-sep})]
  Let $\lambda$ be the size of a smallest $X-Y$ separator. To prove
  Lemma~\ref{number-of-imp-sep}, we show by induction on $2p-\lambda$
  that the number of important $X-Y$ separators of size at most $p$ is
  upper bounded by $2^{2p-\lambda}$.  Note that if $2p-\lambda < 0$,
  then $\lambda > 2p\geq p$ and so there is no (important) $X-Y$
  separator of size at most $p$. If $2p-\lambda = 0$, then $\lambda =
  2p$. Now if $p=0$ then $\lambda=p=0$ and the empty set is the unique
  important $X-Y$ separator of size at most $p$. If $p>0$, then
  $\lambda = 2p>p$ and hence there is no important $X-Y$ separator of
  size at most $p$. Thus we have checked the base case for
  induction. From now on, the induction hypothesis states that if
  $X',Y'\subseteq V(G)$ are disjoint sets such that $\lambda'$ is the
  size of a smallest $X'-Y'$ separator and $p'$ is an integer such
  that $(2p'-\lambda')<(2p-\lambda)$, then the number of important
  $X'-Y'$ separators of size at most $p'$ is upper bounded by
  $2^{2p'-\lambda'}$.

Let $S^*$ be the unique $X-Y$ separator of minimum size given by Lemma~\ref{lemma-unique-min-size-sep-with-maximal-reach}.
Consider an arbitrary vertex $v\in S^{*}$. Note that $\lambda
> 0$ and so $S^{*}$ is not empty. Any important $X-Y$ separator $S$ of size at most $p$ either contains $v$ or not. If $S$
contains $v$, then by Lemma \ref{properties-imp-sep}(1), the set $S\setminus \{v\}$ is an important $X-Y$ separator in
$G\setminus v$ of size at most $p':= p-1$. As $v\notin X\cup Y\cup V^{\infty}$, the size $\lambda'$ of the minimum $X-Y$
separator in $G\setminus v$ is at least $\lambda-1$. Therefore, $2p'-\lambda' = 2(p-1)-\lambda' = 2p-(\lambda'+2)< 2p-\lambda$.
The induction hypothesis implies that there are at most $2^{2p'-\lambda'}\leq 2^{2p-\lambda-1}$ important $X-Y$ separators of
size $p'$ in $G\setminus v$. Hence there are at most $2^{2p-\lambda-1}$ important $X-Y$ separators of size at most $p$ in $G$
that contain $v$.

Now we give an upper bound on the number of important $X-Y$ separators
{\em not} containing $v$. By minimality of $S^{*}$, vertex $v$ has an
in-neighbor in $R_{G\setminus S^{*}}^{+}(X)$. For every important
$X-Y$ separator $S$,
Lemma~\ref{lemma-reach-of-S*-contained-in-all-reaches} implies
$R_{G\setminus S^{*}}^{+}(X) \subseteq R_{G\setminus S}^{+}(X)$. As
$v\notin S$ and $v$ has an in-neighbor in $R_{G\setminus
  S^{*}}^{+}(X)$, even $R_{G\setminus S^{*}}^{+}(X)\bigcup \{v\}
\subseteq R_{G\setminus S}^{+}(X)$ holds. Therefore, setting
$X'=R_{G\setminus S^{*}}^{+}(X)\bigcup \{v\}$, the set $S$ is also an
$X'-Y$ separator.  Now Lemma \ref{properties-imp-sep}(2) implies that
$S$ is in fact an important $X'-Y$ separator. Since $S$ is an $X-Y$
separator, we have $|S|\geq \lambda$. We claim that in fact
$|S|>\lambda$: otherwise $|S|=|S^*|=\lambda$ and $R_{G\setminus
  S^{*}}^{+}(X)\bigcup \{v\} \subseteq R_{G\setminus S}^{+}(X)$,
contradicting the fact that $S^*$ is an important $X-Y$ separator. So
the minimum size $\lambda'$ of an $X'-Y$ separator in $G$ is at least
$\lambda+1$. By the induction hypothesis, the number of important
$X'-Y$ separators of size at most $p$ in $G$ is at most
$2^{2p-\lambda'}\leq 2^{2p-\lambda-1}$. Hence there are at most
$2^{2p-\lambda-1}$ important $X-Y$ separators of size at most $p$ in
$G$ that do not contain $v$.

Adding the bounds in the two cases, we get the required upper bound of
$2^{2p-\lambda}$.  An algorithm for enumerating all the at most $4^p$
important separators follows from the above proof. First, we can find
a maximum $X-Y$ flow in time $O(p(|V(G)|+|E(G)|))$ using at most $p$
rounds of the Ford-Fulkerson algorithm, where $n$ and $m$ are the
number of vertices and edges of $G$. It is well-known that the
separator $S^*$ of
Lemma~\ref{lemma-unique-min-size-sep-with-maximal-reach} can be
deduced from the maximum flow in linear time by finding those vertices
from which $Y$ cannot be reached in the residual graph
\cite{MR0159700}.  Pick any arbitrary vertex $v\in S^*$. Then we
branch on whether vertex $v\in S^{*}$ is in the important separator or
not, and recursively find all possible important separators for both
cases.  The formal description is given in
Algorithm~\ref{alg:imp-separators}.  Note that this algorithm
enumerates a superset of all important separators: by our analysis
above, every important separator appears in either $S'_1$ or $S_2$,
but there is no guarantee that all the separators in these sets are
important. Therefore, the algorithm has to be followed by a filtering
phase where we check for each returned separator whether it is
important. Observe that $S$ is an important $X-Y$ separator if and
only if $S$ is the unique minimum $R^+_{G\setminus S}(X)-Y$
separator. As the size of $S$ is at most $p$, this can be checked in
time $O(p(|V(G)|+|E(G)|))$ by finding a maximum flow and constructing
the residual graph.  The search tree has at most $4^p$ leaves and the
work to be done in each node is $O(p(|V(G)|+|E(G)|))$. Therefore, the
total running time of the branching algorithms is $O(4^p\cdot
p(|V(G)|+|E(G)|))$ and returns at most $4^p$ separators. This is followed by the
filtering phase, which takes time $O(4^p\cdot p(|V(G)|+|E(G)|))$.
\end{proof}

\begin{algorithm}
\caption{\textsc{ImpSep$(G,X,Y,p)$}} \label{alg:imp-separators}
~\\
\textbf{Input}: A directed graph $G$, disjoint sets $X, Y\subseteq V$ and an integer $p$.\\
\textbf{Output}: A collection of $X-Y$ separators that is a superset of all important $X-Y$ separators of size at most $p$ in $G$.\\

\begin{algorithmic}[1]

    \STATE Find the minimum $X-Y$ separator $S^*$ of Lemma~\ref{lemma-unique-min-size-sep-with-maximal-reach}\hspace{37mm}
    \STATE Let $\lambda=|S'|$
    \IF {$p<\lambda$}
                 \STATE \textbf{return} $\emptyset$
        \ELSE
                \STATE Pick any arbitrary vertex $v\in S^*$
                \STATE Let $\mathcal{S}_1=$\textsc{ImpSep$(G\setminus \{v\},X,Y,p-1)$}
                \STATE Let $\mathcal{S}'_1 = \{v\cup S\ |\ S\in \mathcal{S}_1\}$
                \STATE Let $X'=R_{G\setminus S^{*}}^{+}(X)\cup \{v\}$
                \STATE Let $\mathcal{S}_2=$\textsc{ImpSep$(G,X',Y,p)$}
                        \STATE \textbf{return} $\mathcal{S}'_1 \cup \mathcal{S}_2$
    \ENDIF
\end{algorithmic}
\end{algorithm}

\end{document}